\DeclareSymbolFont{largesymbols}{OMX}{zplm}{m}{n} 
\numberwithin{equation}{section}
\newcolumntype{C}{>{$}c<{$}} 
\newcommand{\eps}{\varepsilon}
\newcommand{\alg}[1]{\mathfrak{#1}}
\newcommand{\grp}[1]{\mathsf{#1}}
\newcommand{\func}[2]{#1 \left( #2 \right)}
\newcommand{\tfunc}[2]{#1 \bigl( #2 \bigr)}
\newcommand{\brac}[1]{\left( #1 \right)}
\newcommand{\tbrac}[1]{\bigl( #1 \bigr)}
\newcommand{\sqbrac}[1]{\left[ #1 \right]}
\newcommand{\set}[1]{\left\{ #1 \right\}}
\newcommand{\abs}[1]{\left| #1 \right|}
\newcommand{\ZZ}{\mathbb{Z}}
\newcommand{\NN}{\mathbb{N}}
\newcommand{\RR}{\mathbb{R}}
\newcommand{\dd}{\mathrm{d}}
\newcommand{\ii}{\mathfrak{i}}
\newcommand{\ee}{\mathsf{e}}
\newcommand{\wun}{\mathbf{1}}
\newcommand{\killing}[2]{\kappa \bigl( #1 , #2 \bigr)}
\newcommand{\acomm}[2]{\bigl\{ #1 , #2 \bigr\}}
\newcommand{\comm}[2]{\bigl[ #1 , #2 \bigr]}
\newcommand{\ket}[1]{\bigl\lvert #1 \bigr\rangle}
\newcommand{\affine}[1]{\widehat{#1}}
\newcommand{\DiscMod}[1]{\mathcal{D}_{#1}}
\newcommand{\IrrMod}[1]{\mathcal{L}_{#1}}
\newcommand{\StagMod}[1]{\mathcal{S}_{#1}}
\newcommand{\TypMod}[1]{\mathcal{E}_{#1}}
\newcommand{\VerMod}[1]{\mathcal{V}_{#1}}
\newcommand{\conjaut}{\mathsf{w}} 
\newcommand{\sfaut}{\sigma} 
\newcommand{\conjmod}[1]{\tfunc{\conjaut}{#1}} 
\newcommand{\sfmod}[2]{\tfunc{\sfaut^{#1}}{#2}} 
\newcommand{\SLA}[2]{\alg{#1} \left( #2 \right)}
\newcommand{\SLSA}[3]{\alg{#1} \left( #2 \middle\vert #3 \right)}
\newcommand{\AKMA}[2]{\affine{\alg{#1}} \left( #2 \right)}
\newcommand{\AKMSA}[3]{\affine{\alg{#1}} \left( #2 \middle\vert #3 \right)}
\newcommand{\SLG}[2]{\grp{#1} \left( #2 \right)}
\newcommand{\minmod}[2]{\mathsf{M} \left( #1 , #2 \right)}
\newcommand{\traceover}[1]{\tr_{\raisebox{-3pt}{$\scriptstyle #1$}}}
\newcommand{\Gr}[1]{\Bigl[ #1 \Bigr]} 
\newcommand{\tGr}[1]{\bigl[ #1 \bigr]}
\newcommand{\KSGr}[1]{\Bigl\langle #1 \Bigr\rangle} 
\newcommand{\tKSGr}[1]{\bigl\langle #1 \bigr\rangle}
\newcommand{\chmap}{\mathrm{ch}}
\newcommand{\ch}[1]{\chmap \tGr{#1}}
\newcommand{\fch}[2]{\ch{#1} \bigl( #2 \bigr)}
\newcommand{\vch}[1]{\chi^{\mathrm{Vir}}_{#1}}
\newcommand{\fvch}[2]{\vch{#1} \bigl( #2 \bigr)}
\newcommand{\modS}{\mathsf{S}}
\newcommand{\modT}{\mathsf{T}}
\newcommand{\vmodS}{\mathsf{S}^{\mathrm{Vir}}}
\newcommand{\modarg}[3]{\left( \: #1 \: \middle\vert \: #2 \: \middle\vert \: #3 \: \right)}
\newcommand{\fuse}{\mathbin{\times}}
\newcommand{\fuscoeff}[2]{\mathsf{N}_{#1}^{\hphantom{#1} #2}}
\newdimen{\Virwidth}
\newcommand{\vfuscoeff}[2]{
  \settowidth{\Virwidth}{$\mathrm{Vir}$}
  \mathsf{N}_{#1}^{\mathrm{Vir} \hspace{-\Virwidth} \hphantom{#1} #2}
}
\newcommand{\vpfuscoeff}[3]{
  \settowidth{\Virwidth}{$v$}
  \mathsf{N}_{#1}^{#3 \hspace{-\Virwidth} \hphantom{#1} #2}
}
\newcommand{\fusring}[1]{\mathcal{F}_{#1}}
\newcommand{\normord}[1]{\mbox{${} : #1 : {}$}} 
\newcommand{\jth}[1]{\vartheta_{#1}}
\newcommand{\Jth}[2]{\jth{#1} \bigl( #2 \bigr)}
\newcommand{\lra}{\longrightarrow}
\newcommand{\dses}[3]{0 \lra #1 \lra #2 \lra #3 \lra 0}
\newcommand{\eqnref}[1]{Equation~\eqref{#1}}
\newcommand{\eqnDref}[2]{Equations~\eqref{#1} and \eqref{#2}}
\newcommand{\secref}[1]{Section~\ref{#1}}
\newcommand{\figref}[1]{Figure~\ref{#1}}
\newcommand{\tabref}[1]{Table~\ref{#1}}
\newcommand{\thmref}[1]{Theorem~\ref{#1}}
\newcommand{\propref}[1]{Proposition~\ref{#1}}
\newcommand{\propQref}[4]{Propositions~\ref{#1}, \ref{#2}, \ref{#3} and \ref{#4}}
\newcommand{\lemref}[1]{Lemma~\ref{#1}}
\newcommand{\corref}[1]{Corollary~\ref{#1}}
\newcommand{\cft}{conformal field theory}
\newcommand{\cfts}{conformal field theories}
\newcommand{\uea}{universal enveloping algebra}
\newcommand{\lcfts}{logarithmic conformal field theories}
\newcommand{\WZW}{Wess-Zumino-Witten}
\newcommand{\opes}{operator product expansions}
\newcommand{\hws}{highest weight state}
\newcommand{\hwss}{highest weight states}
\newcommand{\hwm}{highest weight module}
\newcommand{\hwms}{highest weight modules}
\DeclareMathOperator{\tr}{tr}
\DeclareMathOperator{\id}{id}
\theoremstyle{plain}
\newtheorem{thm}{Theorem}
\newtheorem{prop}[thm]{Proposition}
\newtheorem{lem}[thm]{Lemma}
\newtheorem{cor}[thm]{Corollary}
\newtheorem*{ex}{Example}
\newtheorem*{conj}{Conjecture}
\begin{document}

\title[Modular Data and Verlinde Formulae for Fractional Level WZW Models II]{Modular Data and Verlinde Formulae \\ for Fractional Level WZW Models II}

\author[T Creutzig]{Thomas Creutzig}

\address[T Creutzig]{
Department of Mathematical and Statistical Sciences\\
University of Alberta \\
Edmonton, Alberta  T6G 2G1 \\
Canada
}

\email{creutzig@ualberta.ca}

\author[D Ridout]{David Ridout}

\address[David Ridout]{
Department of Theoretical Physics \\
Research School of Physics and Engineering;
and
Mathematical Sciences Institute;
Australian National University \\
Canberra, ACT 0200 \\
Australia
}

\email{david.ridout@anu.edu.au}

\thanks{\today}

\begin{abstract}
This article gives a complete account of the modular properties and Verlinde formula for conformal field theories based on the affine Kac-Moody algebra $\AKMA{sl}{2}$ at an arbitrary admissible level $k$.  Starting from spectral flow and the structure theory of relaxed highest weight modules, characters are computed and modular transformations are derived for every irreducible admissible module.  The culmination is the application of a continuous version of the Verlinde formula to deduce non-negative integer structure coefficients which are identified with Grothendieck fusion coefficients.  The Grothendieck fusion rules are determined explicitly.  These rules reproduce the well-known ``fusion rules'' of Koh and Sorba, negative coefficients included, upon quotienting the Grothendieck fusion ring by a certain ideal.
\end{abstract}

\maketitle

\onehalfspacing

\section{Introduction} \label{sec:Intro}

This is the sequel to the article \cite{CreMod12} devoted to solving the longstanding problem of determining the (Grothendieck) fusion coefficients, for admissible level $\AKMA{sl}{2}$ \WZW{} models, from a formula of Verlinde type.  The main issue here is that initial attempts to do so, using the standard Verlinde formula for \hwms{} \cite{VerFus88}, led to certain ``fusion coefficients'' being \emph{negative} integers \cite{KohFus88} (we refer to \cite{CreMod12} for further historical detail).  The mechanism responsible for these negative coefficients was only obtained recently \cite{RidSL208} for the admissible level $k=-\tfrac{1}{2}$.  There, it was pointed out that this negativity resulted from assuming that the irreducible modules of the spectrum were all highest weight and from not properly accounting for the regions of convergence of the \hwms{}' characters (see \cite{RidSL210,RidFus10} for a more detailed discussion).

While this mechanism accounts for what goes wrong in applying the standard Verlinde formula, the problem of how to modify this formula so as to obtain non-negative integer fusion coefficients remained.  This was addressed in \cite{CreMod12} wherein the modular properties of the $\AKMA{sl}{2}$ models at levels $k=-\tfrac{1}{2}$ and $k=-\tfrac{4}{3}$ were analysed.  The main result was that a continuous version of the Verlinde formula may be applied to each of these theories and that the results were consistent with the known fusion rules (which have only been computed for these levels \cite{GabFus01,RidFus10}).  In particular, the continuum Verlinde formula yielded non-negative integers that precisely reproduced the Grothendieck fusion coefficients.  The aim of this article is to generalise the continuum Verlinde computations to all admissible levels, for $\AKMA{sl}{2}$ at least, and show that the mechanism identified to generate the negative ``fusion coefficients'' when $k=-\tfrac{1}{2}$ is also responsible in this greater generality.

The methodology employed here to tame the modular properties of fractional level \WZW{} models is but one instance of a general programme we are developing (see \cite{CreLog13} for a review) to deal with Verlinde formulae for \emph{logarithmic} \cfts{}.  Indeed, it is known that the $\AKMA{sl}{2}$ models with $k=-\tfrac{1}{2}$ and $k=-\tfrac{4}{3}$ are necessarily logarithmic \cite{GabFus01,LesLog04,RidSL210} and this is surely the case more generally.  This programme is, in some respects, a far-reaching extension to general \lcfts{} of ideas which were originally developed in the string theory literature to deal with supersymmetric and non-compact spacetimes (see \cite{RozSTM93,MalStr01,SalGL106,QueFre07} for example).  Besides the $\AKMA{sl}{2}$ theories considered here, this programme has already been successfully applied to the Grothendieck fusion rules of $\AKMSA{gl}{1}{1}$ \cite{CreRel11}, its extended algebras \cite{Alfes:2012pa} and its Takiff version \cite{BabTak12}, the $\brac{1,p}$ singlet and triplet models \cite{CreLog13,CreMpq13} and even the Virasoro algebra \cite{MorVir13}.

We begin, as always, with notation and conventions.  \secref{sec:SL2Reps} describes this for $\AKMA{sl}{2}$ and its \hwms{} before introducing the conjugation and spectral flow automorphisms which play such a vital role in what follows.  \secref{sec:Admissible} defines the notion of admissibility, first for the level $k$ and then for $\AKMA{sl}{2}_k$-modules.  Theorems of Adamovi\'{c} and Milas are then quoted \cite{AdaVer95} giving the irreducible admissibles in the category of \hwms{} and the category of \emph{relaxed} \hwms{}.  We then introduce an analogue of the Kac table familiar from the Virasoro minimal models to organise the admissible irreducibles.  Finally, we extend our collection of admissibles using spectral flow and catalogue the relationships between spectral flow versions of irreducible admissibles.  At this point, we define appropriate notions (following \cite{CreLog13}) of ``standard'', ``typical'' and ``atypical'' modules.  In this setting, all highest weight admissibles are atypical and a standard module is typical if and only if it is irreducible.

Our first main result is the character formula for a general standard module.  Unlike the characters of the \hwms{}, the standard characters do not converge anywhere and must be represented as distributions.  The result, given in \secref{sec:TypChar} (\propref{prop:ChErs} and \corref{cor:ChTyp}), describes the character as a sum of delta functions weighted by Virasoro minimal model characters.  This is surely a manifestation of quantum hamiltonian reduction \cite{FeiAff92,deBRel94} and it lifts the observation of \cite{MukFra90}, where it was noticed that residues of admissible highest weight characters involved minimal model characters, to a much more elegant setting.  The modular transformation rules of the standard characters are then computed in \secref{sec:ModTyp} (\thmref{thm:TypMod}) and we verify that one obtains a (projective) representation of the modular group of uncountably-infinite dimension.  Moreover, the ``S-matrix'' is seen to be symmetric and unitary.

\secref{sec:AtypChar} then addresses the atypical characters.  We wish to determine them as distributions so as to avoid the convergence issues that stymied progress for so long, so we derive resolutions for each atypical module in terms of reducible but indecomposable (atypical) standard modules.  The resulting character formulae then allow us to compute the modular transformation rules of (certain) atypical characters in \secref{sec:ModAtyp} (\thmref{thm:AtypMod}).  In particular, we obtain the S-transformation of the vacuum character (the vacuum module is highest weight, hence atypical).  These atypical computations rely on a rather ungainly identity (\lemref{lem:UsefulIdentity}) whose representation-theoretic significance is not yet apparent to us.  Presumably, generalising these results to higher rank affine Kac-Moody algebras will clear this up.

In any case, we now have all the ingredients to apply the obvious continuum analogue of the Verlinde formula.  Assuming that this does yield the Grothendieck fusion coefficients, we then compute the complete set of Grothendieck fusion rules explicitly.  This is detailed in \secref{sec:Verlinde} (see \propQref{prop:FusTyp}{prop:FusIrrTyp}{prop:FusIrrIrr}{prop:FusDrs}).  When we can be sure that the corresponding fusion products are completely reducible, these results can be immediately lifted to the fusion ring itself.  In this way, we prove (\thmref{thm:FusionSubring}) that the fusion ring of an admissible level theory always contains a subring isomorphic to that of a particular non-negative integer level theory.  One consequence is that one obtains, for almost all admissible levels, a non-trivial simple current generalising that which gives the $\beta \gamma$ ghosts in the $k=-\tfrac{1}{2}$ theory \cite{RidSL208}.

Another consequence of our explicit computations is that all the Grothendieck fusion coefficients, as computed by the continuum Verlinde formula, are \emph{non-negative integers} (\thmref{thm:GrFusCoeffPos}).  Because the resolutions we have used lead to alternating sums for atypical characters in terms of standard ones, this non-negativity result is highly non-trivial and represents a very strong endorsement of our claim that the continuum Verlinde formula does indeed give the Grothendieck fusion coefficients correctly.  A second strong endorsement is discussed in \secref{sec:KohSorba} where we recover the ``fusion rules'' of \cite{KohFus88}, negative coefficients and all, for all admissible levels $k$, by applying the mechanism explained in \cite{RidSL208} to our Grothendieck fusion rules.  These two endorsements give us complete confidence that we have solved the longstanding problem of modular properties and Verlinde formulae for fractional level \WZW{} models.

Throughout the text, we illustrate our results by applying them to the levels $k=-\tfrac{1}{2}$ and $k=-\tfrac{4}{3}$, thereby checking against what was reported in \cite{CreMod12}.  \secref{sec:Examples} concludes the article by discussing three other admissible level theories which are also of independent interest.  In each case, we exhaustively describe the Grothendieck fusion rules and compute the extended algebra defined by the simple current guaranteed by \thmref{thm:FusionSubring}.  When $k=-\tfrac{5}{4}$, we obtain in this way a conformal embedding of $\AKMA{sl}{2}_{-5/4}$ into $\AKMSA{osp}{1}{2}_{-5/4}$.  When $k=-\tfrac{2}{3}$, the extended algebra is the reduced $N=3$ superconformal algebra at $c=-\tfrac{3}{2}$.  Finally, $k=\tfrac{1}{2}$ yields an interesting simple current extension that we tentatively identify with the quantum hamiltonian reduction of $\affine{\alg{g}}_{2,-3/2}$.

Of course, there are many points that remain to be addressed.  First, it is clear that one should be able to generalise our results to higher rank fractional level affine Kac-Moody algebras and superalgebras and it would be extremely interesting to do so.  Moreover, the relationship (if any) between these fractional level models and the \WZW{} models on non-compact Lie groups requires clarification.  Even at the level of $\AKMA{sl}{2}$, there are many fascinating questions still to consider, for example, that of classifying modular invariant partition functions for the admissible level theories.  Mathematically, one should also ask after homological characterisations of the spectrum:  What is the physical category of modules?  Which modules are projective in this category?  Which are rigid?  Can we characterise admissible staggered modules as was done for the Virasoro algebra in \cite{RidSta09}?  Even more interesting, and perhaps more relevant for comparison with non-compact target space models, what happens if we relax the irreducibility of the vacuum module?  It is clear that the study of logarithmic theories with affine symmetries will remain rich and rewarding.  We hope to report further on this study in the future.

\section{$\AKMA{sl}{2}$ and its Representations} \label{sec:SL2Reps}

Consider the simple complex Lie algebra $\SLA{sl}{2}$ and its standard basis elements
\begin{equation}
E = 
\begin{pmatrix}
0 & 1 \\
0 & 0
\end{pmatrix}
, \qquad H = 
\begin{pmatrix}
1 & 0 \\
0 & -1
\end{pmatrix}
, \qquad F = 
\begin{pmatrix}
0 & 0 \\
1 & 0
\end{pmatrix}
.
\end{equation}
This basis is tailored to a triangular decomposition respecting the adjoint (conjugate transpose) that picks out the real form $\SLA{su}{2}$.  Indeed, the Cartan element $H$ is clearly self-adjoint and the raising and lowering operators $E$ and $F$ are swapped by the adjoint.  In what follows, we want to study conformal field theories whose symmetry algebras are the affine Kac-Moody algebras $\AKMA{sl}{2}$ at levels $k$ which are not non-negative integers.  The well-known quantisation of the level for the Wess-Zumino-Witten model on $\SLG{SU}{2}$ suggests that one should not lift the $\SLA{su}{2}$ adjoint to $\AKMA{sl}{2}$.  Instead, the absence of level-quantisation for $\SLG{SL}{2;\RR}$ leads us to propose lifting the adjoint that picks out the other real form $\SLA{sl}{2;\RR}$.

The $\SLA{sl}{2;\RR}$ adjoint simply negates the basis elements $E$, $H$ and $F$, hence may be described as negation followed by complex conjugation:  $J^{\dag} = -J^*$.  This means that this basis is not suited to triangular decompositions that respect the $\SLA{sl}{2;\RR}$ adjoint.  For this reason, we choose a new basis $\set{e,h,f}$ of $\SLA{sl}{2}$:
\begin{equation}
e = \frac{1}{2} 
\begin{pmatrix}
-1 & \ii \\
\ii & 1
\end{pmatrix}
, \qquad h = 
\begin{pmatrix}
0 & \ii \\
-\ii & 0
\end{pmatrix}
, \qquad f = \frac{1}{2} 
\begin{pmatrix}
1 & \ii \\
\ii & -1
\end{pmatrix}
.
\end{equation}
Because $e^{\dag} = f$ and $h^{\dag} = h$ with respect to the $\SLA{sl}{2;\RR}$ adjoint, this basis is suited to the desired triangular decomposition.  Note that the non-vanishing commutation relations in this basis are
\begin{equation}
\comm{h}{e} = 2 e, \qquad \comm{e}{f} = -h, \qquad \comm{h}{f} = -2 f.
\end{equation}
Similarly, the trace form in this basis attracts an unfamiliar sign:
\begin{equation}
\killing{h}{h} = 2, \qquad \killing{e}{f} = \killing{f}{e} = -1.
\end{equation}
We remark that choosing the adjoint correctly is not just mathematical sophistry --- this choice plays a subtle, but vital, role in many aspects of the representation theory, unitarity being the most obvious.  An example of this subtlety appears in the $k=-\tfrac{1}{2}$ theory which has a simple current extension which fails to be associative when the $\SLA{su}{2}$ adjoint is chosen \cite{RidSL208}.  The associative extension one obtains with the $\SLA{sl}{2;\RR}$ adjoint is, of course, the $\beta \gamma$ ghost system (see \secref{sec:Examples}).

The commutation relations of the affine Kac-Moody algebra $\AKMA{sl}{2}$ are therefore
\begin{equation}
\begin{aligned}
\comm{h_m}{e_n} &= +2 e_{m+n}, \\
\comm{h_m}{f_n} &= -2 f_{m+n},
\end{aligned}
\qquad
\begin{aligned}
\comm{h_m}{h_n} &= 2m \delta_{m+n,0} K, \\
\comm{e_m}{f_n} &= -h_{m+n} - m \delta_{m+n,0} K,
\end{aligned}
\qquad
\begin{aligned}
\comm{e_m}{e_n} &= 0, \\
\comm{f_m}{f_n} &= 0.
\end{aligned}
\end{equation}
where $K$ is central.  We will habitually replace $K$ by its common eigenvalue $k$, the level, when acting upon the modules comprising each theory.\footnote{Technically, we should do this in the \uea{} by quotienting by the ideal generated by $K - k \wun$.  Doing this at the level of the Lie algebra is a standard sloppiness which leads to no harm.}  With this replacement, the Sugawara construction gives the standard energy-momentum tensor
\begin{equation} \label{eqn:DefT}
\func{T}{z} = \frac{1}{2 \brac{k+2}} \brac{\frac{1}{2} \normord{\func{h}{z} \func{h}{z}} - \normord{\func{e}{z} \func{f}{z}} - \normord{\func{f}{z} \func{e}{z}}},
\end{equation}
at least when $k \neq -2$.  The modes $L_n$ of $\func{T}{z}$ then generate a copy of the Virasoro algebra of central charge
\begin{equation}
c = \frac{3k}{k+2} = 3 - \frac{6}{t}.
\end{equation}
Here, we take the opportunity to introduce the notation $t = k+2$.

The triangular decomposition that we have chosen for $\SLA{sl}{2}$ lifts, in the standard manner, to one for $\AKMA{sl}{2}$.  The notions of \hwss{} and Verma modules are then available.  An easy consequence of \eqref{eqn:DefT} is that a \hws{} of weight ($h_0$-eigenvalue) $\lambda$ will have conformal dimension ($L_0$-eigenvalue)
\begin{equation}
\Delta_{\lambda} = \frac{\lambda \brac{\lambda + 2}}{4 \brac{k+2}} = \frac{\brac{\lambda + 1}^2 - 1}{4t}.
\end{equation}
We will denote the Verma module generated by a \hws{} of weight $\lambda$ by $\VerMod{\lambda}$.  The irreducible quotient of $\VerMod{\lambda}$ will be denoted by $\IrrMod{\lambda}$ if $\lambda \in \NN$, and by $\DiscMod{\lambda}^+$ otherwise.  The notation here is chosen to reflect the nature of the zero-grade subspace (the states of minimal conformal dimension) of the irreducible as an $\SLA{sl}{2}$-module.  When $\lambda \in \NN$, this subspace forms a finite-dimensional irreducible $\SLA{sl}{2}$-module, whereas it forms an infinite-dimensional irreducible of the discrete series type otherwise.  We will refer to $\IrrMod{0}$ as the vacuum module and its \hws{} $\ket{0}$ as the vacuum in what follows.

The subgroup of automorphisms of $\AKMA{sl}{2}$ which leave the span of the zero-modes $h_0$, $K$ and $L_0$ invariant is isomorphic to $\ZZ_2 \ltimes \ZZ$.  We take the order two generator to be the \emph{conjugation} automorphism $\conjaut$ which is the Weyl reflection corresponding to the finite simple root.  The infinite order generator is the \emph{spectral flow} automorphism $\sfaut$ which may be regarded as a square root of the affine Weyl translation by the (finite) simple coroot (in fact, $\sfaut$ is translation by the dual of the finite simple root).  These automorphisms fix $K$, hence the level $k$ is preserved, and otherwise act as follows:
\begin{equation}
\begin{aligned}
\func{\conjaut}{e_n} &= f_n, \\ 
\func{\sfaut^{\ell}}{e_n} &= e_{n-\ell},
\end{aligned}
\qquad
\begin{aligned}
\func{\conjaut}{h_n} &= -h_n, \\
\func{\sfaut^{\ell}}{h_n} &= h_n - \delta_{n,0} \ell k,
\end{aligned}
\qquad
\begin{aligned}
\func{\conjaut}{f_n} &= e_n, \\
\func{\sfaut^{\ell}}{f_n} &= f_{n+\ell},
\end{aligned}
\qquad
\begin{aligned}
\func{\conjaut}{L_0} &= L_0, \\
\func{\sfaut^{\ell}}{L_0} &= L_0 - \tfrac{1}{2} \ell h_0 + \tfrac{1}{4} \ell^2 k.
\end{aligned}
\end{equation}
The normality of the subgroup generated by $\sfaut$ follows from $\conjaut \sfaut = \sfaut^{-1} \conjaut$.

One important use for these automorphisms is to modify the action of $\AKMA{sl}{2}$ on any module $\mathcal{M}$, thereby obtaining new modules $\tfunc{\conjaut^*}{\mathcal{M}}$ and $\tfunc{\sfaut^*}{\mathcal{M}}$.  The first is precisely the module conjugate to $\mathcal{M}$ --- its weights are the negatives of the weights of $\mathcal{M}$, though the conformal dimensions remain unchanged.  The second is called the \emph{spectral flow} image of $\mathcal{M}$ --- its weights have been shifted by a fixed amount, but its conformal dimensions also change.  Explicitly, the modified algebra action defining these new modules is given by
\begin{equation}
J \cdot \conjaut^* \ket{v} = \func{\conjaut^*}{\func{\conjaut^{-1}}{J} \ket{v}}, \qquad J \cdot \sfaut^* \ket{v} = \func{\sfaut^*}{\func{\sfaut^{-1}}{J} \ket{v}} \qquad \text{(\(J \in \AKMA{sl}{2}\)).}
\end{equation}
It is easy to check that if $\ket{\lambda , \Delta} \in \mathcal{M}$ is a state of weight $\lambda$ and conformal dimension $\Delta$, then the state $\tbrac{\sfaut^{\ell}}^* \ket{\lambda , \Delta} \in \tfunc{\tbrac{\sfaut^{\ell}}^*}{\mathcal{M}}$ satisfies
\begin{equation}
\begin{split}
h_0 \tbrac{\sfaut^{\ell}}^* \ket{\lambda , \Delta} &= \brac{\lambda + \ell k} \tbrac{\sfaut^{\ell}}^* \ket{\lambda , \Delta}, \\
L_0 \tbrac{\sfaut^{\ell}}^* \ket{\lambda , \Delta} &= \brac{\Delta + \frac{1}{2} \ell \lambda + \frac{1}{4} \ell^2 k} \tbrac{\sfaut^{\ell}}^* \ket{\lambda , \Delta}.
\end{split}
\end{equation}
In what follows, we will usually omit the superscript ``$*$'' which distinguishes the induced spectral flow maps between modules from the spectral flow algebra automorphisms.  Which is meant should be clear from the context.

\section{Admissible Levels and Modules} \label{sec:Admissible}

Recall that when the level $k$ is a non-negative integer, the (chiral) spectrum of any \cft{} with $\AKMA{sl}{2}$ symmetry and an irreducible vacuum module may only contain the irreducible modules $\IrrMod{\lambda}$ with $\lambda = 0, 1, \ldots, k$.  This is the spectrum of the \WZW{} model on $\SLG{SU}{2}$.  The reason boils down to the following fact:  Let $\ket{v_0}$ denote the \hws{} of the vacuum Verma module $\VerMod{0}$.  As $k \in \NN$, $\VerMod{0}$ possesses a non-trivial singular vector $e_{-1}^{k+1} \ket{v_0}$, meaning that it is not descended from the trivial singular vector $f_0 \ket{v_0}$, which has to be set to zero in order to form the irreducible vacuum module $\IrrMod{0}$.  Setting this singular vector to zero is only consistent with the state-field correspondence of \cft{} if the spectrum is restricted as above.  This seems to have been first explained in \cite{FeiAnn92}, though the argument has since been modified and made rigorous within the formalism of vertex algebras by Zhu \cite{ZhuMod96}.

It is natural to ask if there are other levels at which an irreducible vacuum module similarly constrains the spectrum.  To have such constraints, one needs to know when the corresponding vacuum Verma module has a non-trivial singular vector.  This question may be answered using the Kac-Kazhdan formula \cite{KacStr79} for the determinant of the Shapovalov form in each (affine) weight space.  The result is that such a non-trivial singular vector exists precisely when
\begin{equation}
t = k+2 = \frac{u}{v}, \qquad \text{with \(\gcd \set{u,v} = 1\), \(u \in \ZZ_{\geqslant 2}\) and \(v \in \ZZ_{\geqslant 1}\).}
\end{equation}
Moreover, the singular vector will have weight $2 \brac{u-1}$ and conformal dimension $\brac{u-1} v$.  Levels $k$ satisfying the above conditions are called \emph{admissible}.  Equivalently, $k$ is said to be admissible if the universal vertex algebra corresponding to $\AKMA{sl}{2}_k$ is not simple.

Determining the constraints that this singular vector imposes on the spectrum is not quite as easy.  One has a semi-explicit formula for the singular vector due to Malikov, Feigin and Fuchs \cite{MalSin86}.  However, this formula involves rational powers of the affine modes which must be massaged using analytic continuations of the commutation rules in order to arrive at an explicit expression (see \cite{BauFus93,FurSin94,MatPri99} for concrete examples of such massaging).

\begin{ex}[see \cite{GabFus01}]
The level $k=-\tfrac{4}{3}$ has $t=\tfrac{2}{3}$, hence $u=2$ and $v=3$.  This level is therefore admissible.  Kac-Kazhdan tells us that the non-trivial singular vector in the vacuum Verma module has weight $2$ and conformal dimension $3$.  It is given, in the Malikov-Feigin-Fuchs form, by
\begin{equation}
\ket{\chi'} = e_{-1}^{7/3} f_0^{5/3} e_{-1} f_0^{1/3} e_{-1}^{-1/3} \ket{v_0}.
\end{equation}
For deriving constraints, it is in fact more convenient to consider the descendant $\ket{\chi} = f_0 \ket{\chi'}$ whose weight is $0$.  This state will also be set to $0$ in the irreducible vacuum module.  Massaging the above expression appropriately leads to the (renormalised) explicit form
\begin{equation}
\ket{\chi} = \brac{9 h_{-1}^3 + 18 h_{-2} h_{-1} - 16 h_{-3} - 36 f_{-1} h_{-1} e_{-1} -24 e_{-2} f_{-1} + 96 f_{-2} e_{-1}} \ket{v_0}.
\end{equation}
The field $\func{\chi}{z}$, and so its zero-mode $\chi_0$, must therefore act as $0$ on the spectrum.  But, applying $\chi_0$ to a \hws{} $\ket{v_{\lambda}}$ of weight $\lambda$ gives
\begin{equation}
\chi_0 \ket{v_{\lambda}} = \lambda \brac{3 \lambda + 2} \brac{3 \lambda + 4} \ket{v_{\lambda}},
\end{equation}
hence we conclude that the only \hwss{} allowed are those with weights $0$, $-\tfrac{2}{3}$ and $-\tfrac{4}{3}$.  It follows that the only \hwms{} in the spectrum are the irreducibles $\IrrMod{0}$, $\DiscMod{-2/3}^+$ and $\DiscMod{-4/3}^+$.
\end{ex}

\begin{ex}[see \cite{LesLog04,RidSL208}]
For $k=-\tfrac{1}{2}$, we have $u=3$ and $v=2$, so this level is also admissible.  The non-trivial singular vector has weight and conformal dimension $4$, and its zero-weight descendant takes the form
\begin{align}
\ket{\chi} &= f_0^2 e_{-1}^{7/2} f_0^2 e_{-1}^{1/2} \ket{v_0} \notag \\
&= \left( 4 h_{-1}^4 + 4 h_{-2} h_{-1}^2 + 19 h_{-2}^2 - 92 h_{-3} h_{-1} + 9 h_{-4} - 32 f_{-1}^2 e_{-1}^2 - 8 f_{-1} h_{-1}^2 e_{-1} + 100 f_{-2} h_{-1} e_{-1} \right. \notag \\
&\mspace{100mu} \left. + 64 h_{-2} f_{-1} e_{-1} - 68 e_{-2} h_{-1} f_{-1} - 82 f_{-2} e_{-2} - 28 f_{-3} e_{-1} - 124 e_{-3} f_{-1} \right) \ket{v_0}.
\end{align}
The zero-mode of the field $\func{\chi}{z}$ then acts on a \hws{} as
\begin{equation}
\chi_0 \ket{v_{\lambda}} = \lambda \brac{\lambda - 1} \brac{2 \lambda + 1} \brac{2 \lambda + 3} \ket{v_{\lambda}},
\end{equation}
so the allowed \hwms{} are the irreducibles $\IrrMod{0}$, $\IrrMod{1}$, $\DiscMod{-1/2}^+$ and $\DiscMod{-3/2}^+$.
\end{ex}

As these examples show, unpacking the Malikov-Feigin-Fuchs formula for the non-trivial vacuum singular vector is extremely cumbersome.  It is therefore rather remarkable that the constraints upon the spectrum have been worked out for arbitrary admissible levels.  This result is due to Adamovi\'{c} and Milas \cite{AdaVer95} who determined Zhu's algebra using an explicit formula of Fuchs \cite{FucTwo89} for a projection of the non-trivial singular vector onto the \uea{} of $\SLA{sl}{2}$.  Modules which are allowed in the spectrum of an admissible level theory are also said to be \emph{admissible}.\footnote{The original definition of admissibility is that of Kac and Wakimoto \cite{KacMod88} who defined admissible weights in order to derive a generalisation of the Weyl-Kac character formula for integrable modules.  Their admissible weights are precisely the \hwms{} of the admissible modules, as they have been defined here.  We just prefer to arrive at the definition from consideration of the vertex algebra.}  The spectrum of admissible \hwms{} is as follows:

\begin{thm}[Adamovi\'{c}--Milas]
Let $k = t-2$ be an admissible level and let
\begin{equation} \label{eqn:DefLambda}
\lambda_{r,s} = r-1-ts.
\end{equation}
The admissible \hwms{} are then exhausted by the following irreducibles:
\begin{itemize}
\item $\IrrMod{r,0} \equiv \IrrMod{\lambda_{r,0}}$, for $r=1,2,\ldots,u-1$,
\item $\DiscMod{r,s}^+ \equiv \DiscMod{\lambda_{r,s}}^+$, for $r=1,2,\ldots,u-1$ and $s=1,2,\ldots,v-1$.
\end{itemize}
\end{thm}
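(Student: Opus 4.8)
The plan is to follow the vertex-algebraic route of Adamovi\'{c} and Milas, reducing the classification to a computation in Zhu's algebra. Recall that the irreducible positive-energy modules of the simple vertex operator algebra associated with $\AKMA{sl}{2}_k$ are in bijection with the irreducible modules of its Zhu algebra, and that the latter is a quotient $U\bigl(\SLA{sl}{2}\bigr) / I_k$ of the \uea{} of the horizontal $\SLA{sl}{2}$ generated by the zero-modes. Here $I_k$ is the two-sided ideal generated by the image, under the Zhu map $[\,\cdot\,]$, of the non-trivial singular vector of the vacuum Verma module $\VerMod{0}$. Under this bijection, a \hws{} of weight $\lambda$ for $\AKMA{sl}{2}_k$ corresponds to a \hws{} of weight $\lambda$ for the $\SLA{sl}{2}$-action on the zero-grade subspace, so the admissible \hwss{} are precisely those $\lambda$ for which the irreducible highest weight $\SLA{sl}{2}$-module $L(\lambda)$ is annihilated by $I_k$. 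The first step is therefore to render this annihilation condition as an explicit polynomial constraint on $\lambda$.

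To this end I would take the Malikov--Feigin--Fuchs singular vector $\ket{\chi'}$, of weight $2\brac{u-1}$ and conformal dimension $\brac{u-1}v$, and pass to its zero-weight descendant $\ket{\chi} = f_0^{u-1} \ket{\chi'}$, exactly as in the two worked examples. Its Zhu image $[\chi]$ lies in the weight-zero (centraliser-of-$h_0$) part of $U\bigl(\SLA{sl}{2}\bigr)$, so acting on a \hws{} $\ket{v_\lambda}$ and using $e_0 \ket{v_\lambda} = 0$ to normal-order gives $\chi_0 \ket{v_\lambda} = P(\lambda) \ket{v_\lambda}$ for a single-variable polynomial $P$. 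Evaluating $P$ is where Fuchs's explicit formula for the projection of the singular vector onto $U\bigl(\SLA{sl}{2}\bigr)$ enters: the claim I would aim to establish is the factorisation
\begin{equation}
P(\lambda) = C \prod_{r=1}^{u-1} \prod_{s=0}^{v-1} \brac{\lambda - \lambda_{r,s}}, \qquad \lambda_{r,s} = r-1-ts,
\end{equation}
for a non-zero constant $C$, of total degree $\brac{u-1}v$ matching the conformal dimension. This is the crux and the main obstacle: the Malikov--Feigin--Fuchs expression carries fractional powers of the modes that must be analytically continued before Fuchs's projection becomes usable, and extracting the exact roots (rather than merely their number) for arbitrary coprime $u,v$ is the genuinely technical step. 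As a sanity check, the two examples give $P(\lambda) = 9\lambda\brac{\lambda+\tfrac23}\brac{\lambda+\tfrac43}$ for $k = -\tfrac43$ and $P(\lambda) = 4\lambda\brac{\lambda-1}\brac{\lambda+\tfrac12}\brac{\lambda+\tfrac32}$ for $k = -\tfrac12$, both of which agree with the displayed factorisation.

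Granting the factorisation, the classification follows. In the simple vertex algebra the field $\func{\chi}{z}$ vanishes identically, so $\chi_0$ annihilates every \hws{} in the spectrum; hence $P(\lambda) = 0$, and the admissible weights are exactly the roots $\lambda_{r,s}$ with $1 \leqslant r \leqslant u-1$ and $0 \leqslant s \leqslant v-1$. Conversely, the Zhu bijection guarantees that each such $\lambda$ does arise, so these weights are also exhaustive. It remains to identify the module attached to each root by examining the zero-grade $\SLA{sl}{2}$-module $L(\lambda_{r,s})$. When $s = 0$ we have $\lambda_{r,0} = r-1 \in \set{0,1,\ldots,u-2} \subset \NN$, so $L(\lambda_{r,0})$ is finite-dimensional and the module is $\IrrMod{r,0}$. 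When $s \geqslant 1$, the coprimality $\gcd\set{u,v} = 1$ together with $1 \leqslant s \leqslant v-1$ forces $ts = us/v \notin \ZZ$, whence $\lambda_{r,s} \notin \NN$; then the $\SLA{sl}{2}$-Verma module $M(\lambda_{r,s})$ is already irreducible, $L(\lambda_{r,s})$ is infinite-dimensional of discrete-series type, and the module is $\DiscMod{r,s}^+$. This exhausts the admissible \hwms{} and completes the argument.
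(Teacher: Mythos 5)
The paper offers no proof of this theorem: it is quoted from Adamovi\'{c} and Milas, with only a one-line indication of their method (computing Zhu's algebra via Fuchs's projection of the vacuum singular vector onto $U\bigl(\SLA{sl}{2}\bigr)$). Your proposal reconstructs precisely that strategy, and the skeleton is right: the Zhu bijection, the zero-weight descendant $f_0^{u-1}\ket{\chi'}$, the eigenvalue polynomial $P(\lambda)$, its factorisation with roots $\lambda_{r,s}$, and the identification of the zero-grade $\SLA{sl}{2}$-modules (your parity argument that $ts \notin \ZZ$ for $1 \leqslant s \leqslant v-1$ correctly separates the $\IrrMod{}$-type from the $\DiscMod{}^+$-type modules). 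The consistency checks against $k=-\tfrac{1}{2}$ and $k=-\tfrac{4}{3}$ are also correct.

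There are nonetheless two genuine gaps. The first you flag yourself: the factorisation $P(\lambda) = C \prod_{r,s} \brac{\lambda - \lambda_{r,s}}$ \emph{is} the theorem, and you establish it only in two examples. Matching the degree $\brac{u-1}v$ to the conformal dimension of the singular vector counts the roots but does not locate them; locating them for arbitrary coprime $u,v$ requires actually carrying out Fuchs's projection (or an equivalent computation), and a proof that stops at ``the claim I would aim to establish'' has not proved the statement. The second gap is in the converse direction. Knowing $P(\lambda_{r,s})=0$ only tells you that one weight-zero element of $I_k$ kills the highest weight vector of $L(\lambda_{r,s})$. For the Zhu bijection to return an admissible module, the entire two-sided ideal $I_k$ --- generated by the weight-$2\brac{u-1}$ element $[\chi']$, not by $[\chi]$ --- must annihilate all of $L(\lambda_{r,s})$, and since $I_k \cdot L(\lambda_{r,s})$ contains $[\chi'] \, U(\alg{n}^-) v_{\lambda}$ this is a priori a much larger family of polynomial conditions. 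Reducing that family to the single condition $P(\lambda)=0$ is a nontrivial step that Adamovi\'{c}--Milas supply and that your sketch elides with ``the Zhu bijection guarantees that each such $\lambda$ does arise.''
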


\noindent Mathematically, admissibility just means that the \hwm{} is a module for the (simple) vertex algebra associated with $\AKMA{sl}{2}$ at the admissible level $k$.  It is convenient to extend this definition of admissibility beyond the highest weight category --- from now on, any vertex algebra module will be termed admissible.  Note that when $v=1$, so $k \in \NN$, the set of $\DiscMod{}^+$-type modules is empty and the admissible \hwms{} are precisely the $\IrrMod{r-1}$ with $r=1,2,\ldots,k+1$.

It is convenient to collect the admissible highest weights $\lambda_{r,s}$ into a table, analogous to the Kac table which gives the allowed conformal dimensions for the \hwss{} of a Virasoro minimal model.  We present some of these tables, both for admissible highest weights $\lambda_{r,s}$ and their conformal dimensions
\begin{equation} \label{eqn:DefConfDimRS}
\Delta_{r,s} = \frac{\brac{r-ts}^2 - 1}{4t} = \frac{\brac{vr-us}^2 - v^2}{4uv},
\end{equation}
in \figref{fig:Tables}.  We note that, if one ignores the left-most column ($s=0$) which describes the $\IrrMod{}$-type admissibles, then these tables have symmetries similar to Kac tables.  In particular, we have
\begin{equation} \label{eqn:KacSymmetry}
\lambda_{u-r,v-s} = -\lambda_{r,s} - 2, \quad \Delta_{u-r,v-s} = \Delta_{r,s} \qquad \text{(\(s \neq 0\)).}
\end{equation}
This similarity between the table of $\DiscMod{}^+$-type admissibles and the Kac table for the minimal model $\minmod{u}{v}$ is more than just analogy.  In particular, note that if we take $t=k+2$ to define a Virasoro central charge and Virasoro conformal dimensions by
\begin{equation}
c^{\mathrm{Vir}} = 13 - 6 \brac{t + t^{-1}}, \qquad \Delta_{r,s}^{\mathrm{Vir}} = \frac{\brac{r-ts}^2 - \brac{1-t}^2}{4t},
\end{equation}
then one finds that
\begin{equation} \label{eqn:ModularMagic}
\Delta_{r,s} - \frac{c}{24} + \frac{1}{12} = \Delta_{r,s}^{\mathrm{Vir}} - c^{\mathrm{Vir}}.
\end{equation}
This relation is the key upon which a large proportion of the following analysis rests.

{
\renewcommand{\arraystretch}{1.5} 
\begin{figure}
\begin{center}
\begin{tikzpicture}
\node at (0,0) [] {
\begin{tabular}{|C|C|}
\hline
0 & -\tfrac{3}{2} \\
1 & -\tfrac{1}{2} \\
\hline
\end{tabular}
};
\node at (2,0) [] {
\begin{tabular}{|C|C|}
\hline
0 & -\tfrac{1}{8} \\
\tfrac{1}{2} & -\tfrac{1}{8} \\
\hline
\end{tabular}
};
\node at (1,-1.2) [] {$k=-\tfrac{1}{2}$};
\node at (0,1) [] {$\lambda_{r,s}$};
\node at (2,1) [] {$\Delta_{r,s}$};
\node at (0,-3.7) [] {
\begin{tabular}{|C|C|}
\hline
0 & -\tfrac{5}{2} \\
1 & -\tfrac{3}{2} \\
2 & -\tfrac{1}{2} \\
3 & \tfrac{1}{2} \\
\hline
\end{tabular}
};
\node at (2,-3.7) [] {
\begin{tabular}{|C|C|}
\hline
0 & \tfrac{1}{8} \\
\tfrac{3}{10} & -\tfrac{3}{40} \\
\tfrac{4}{5} & -\tfrac{3}{40} \\
\tfrac{3}{2} & \tfrac{1}{8} \\
\hline
\end{tabular}
};
\node at (1,-5.5) [] {$k=\tfrac{1}{2}$};
\node at (0,-2) [] {$\lambda_{r,s}$};
\node at (2,-2) [] {$\Delta_{r,s}$};
\node at (5.5,0) [] {
\begin{tabular}{|C|CC|}
\hline
0 & -\tfrac{2}{3} & -\tfrac{4}{3} \\
\hline
\end{tabular}
};
\node at (8.2,0) [] {
\begin{tabular}{|C|CC|}
\hline
0 & -\tfrac{1}{3} & -\tfrac{1}{3} \\
\hline
\end{tabular}
};
\node at (6.75,-1.2) [] {$k=-\tfrac{4}{3}$};
\node at (5.5,1) [] {$\lambda_{r,s}$};
\node at (8,1) [] {$\Delta_{r,s}$};
\node at (5.5,-3.7) [] {
\begin{tabular}{|C|CC|}
\hline
0 & -\tfrac{4}{3} & -\tfrac{8}{3} \\
1 & -\tfrac{1}{3} & -\tfrac{5}{3} \\
2 & \tfrac{2}{3} & -\tfrac{2}{3} \\
\hline
\end{tabular}
};
\node at (8.2,-3.7) [] {
\begin{tabular}{|C|CC|}
\hline
0 & -\tfrac{1}{6} & \tfrac{1}{3} \\
\tfrac{9}{16} & -\tfrac{5}{48} & -\tfrac{5}{48} \\
\tfrac{3}{2} & \tfrac{1}{3} & -\tfrac{1}{6} \\
\hline
\end{tabular}
};
\node at (6.75,-5.5) [] {$k=-\tfrac{2}{3}$};
\node at (5.5,-2) [] {$\lambda_{r,s}$};
\node at (8,-2) [] {$\Delta_{r,s}$};
\node at (2.3,-7.3) [] {
\begin{tabular}{|C|CCC|}
\hline
0 & -\tfrac{3}{4} & -\tfrac{3}{2} & -\tfrac{9}{4} \\
1 & \tfrac{1}{4} & -\tfrac{1}{2} & -\tfrac{5}{4} \\
\hline
\end{tabular}
};
\node at (5.7,-7.3) [] {
\begin{tabular}{|C|CCC|}
\hline
0 & -\tfrac{5}{16} & -\tfrac{1}{4} & \tfrac{3}{16} \\
1 & \tfrac{3}{16} & -\tfrac{1}{4} & -\tfrac{5}{16} \\
\hline
\end{tabular}
};
\node at (4,-8.5) [] {$k=-\tfrac{5}{4}$};
\node at (2.3,-6.3) [] {$\lambda_{r,s}$};
\node at (5.7,-6.3) [] {$\Delta_{r,s}$};
\end{tikzpicture}
\caption{Tables of admissible highest weights $\lambda_{r,s}$ and their conformal dimensions $\Delta_{r,s}$ for certain admissible levels $k$.  The label $r$ runs from $1$ to $u-1$, increasing as one moves down, and $s$ runs from $0$ to $v-1$, increasing to the right.} \label{fig:Tables}
\end{center}
\end{figure}
}

Physically, this spectrum of admissible \hwms{} is not acceptable when $v>1$.  The reason is that, unlike the $\IrrMod{}$-type modules which are self-conjugate, the conjugates of the $\DiscMod{}^+$-type modules are not \hwms{}.  If we do not admit these conjugates in the spectrum, then the fields corresponding to the $\DiscMod{}^+$-type modules will necessarily vanish in all correlation functions.  We therefore conclude that, for $v>1$, the spectrum must be extended by the conjugate modules
\begin{equation}
\DiscMod{r,s}^- \equiv \conjmod{\DiscMod{r,s}^+} \qquad \text{(\(r=1,2,\ldots,u-1\); \(s=1,2,\ldots,v-1\)).}
\end{equation}
Just as the zero-grade subspace of $\DiscMod{r,s}^+$ may be identified with the (infinite-dimensional) highest weight $\SLA{sl}{2}$-module of highest weight $\lambda_{r,s}$, that of the conjugate module $\DiscMod{r,s}^-$ may be identified with the (infinite-dimensional) lowest weight $\SLA{sl}{2}$-module of lowest weight $-\lambda_{r,s}$.  Note that the $\DiscMod{r,s}^-$ are not lowest weight $\AKMA{sl}{2}$-modules.  They may, however, be regarded as \emph{relaxed} \hwms{}.

A relaxed \hwm{} is one that is generated by a relaxed \hws{}, this in turn being defined, for $\AKMA{sl}{2}$, as an eigenstate of $h_0$ which is annihilated by the modes $e_n$, $h_n$ and $f_n$, with $n>0$.  A standard \hws{} is therefore a relaxed \hws{} which also happens to be annihilated by $e_0$.  This terminology seems to have first appeared in \cite{FeiEqu98}, though such modules had been considered much earlier.  In particular, Adamovi\'{c} and Milas also determined the admissible $\AKMA{sl}{2}$-modules in the category of relaxed \hwms{}:

\begin{thm}[Adamovi\'{c}--Milas]
The admissible irreducibles from the category of \emph{relaxed} highest weight $\AKMA{sl}{2}$-modules at (admissible) level $k$ are precisely the admissible \hwms{}, their conjugates, and the following family of modules:
\begin{itemize}
\item $\TypMod{\lambda ; \Delta_{r,s}}$, for $r=1,2,\ldots,u-1$; $s=1,2,\ldots,v-1$ and $\lambda \in \RR / 2 \ZZ$ with $\lambda \neq \lambda_{r,s} , \lambda_{u-r,v-s} \bmod{2}$.
\end{itemize}
\end{thm}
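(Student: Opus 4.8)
\section*{Proof proposal}

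The natural framework is Zhu's theory of associative algebras for vertex operator algebras. Write $\mathsf{V}_k$ for the simple vertex operator algebra associated with $\AKMA{sl}{2}$ at the admissible level $k$. Zhu's algebra $A(\mathsf{V}_k)$ has the defining property that its irreducible modules are in bijection with the irreducible positive-energy $\mathsf{V}_k$-modules, the bijection sending a module to its zero-grade subspace. Now, a relaxed \hwm{} is precisely a positive-energy module whose zero-grade subspace is an $\SLA{sl}{2}$-weight module --- that is, $h_0$ acts semisimply with finite-dimensional eigenspaces --- acted upon through the zero-modes $e_0$, $h_0$, $f_0$. The plan is therefore to first identify $A(\mathsf{V}_k)$ as an explicit quotient of the \uea{} $U(\SLA{sl}{2})$, and then to classify the irreducible $\SLA{sl}{2}$-weight modules on which the defining relation of this quotient holds.

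For the universal affine vertex algebra $\mathsf{V}^k$ one has $A(\mathsf{V}^k) \cong U(\SLA{sl}{2})$, the Zhu map carrying $e_{-1}\ket{0}$, $h_{-1}\ket{0}$, $f_{-1}\ket{0}$ to $e$, $h$, $f$. Passing to the simple quotient, $A(\mathsf{V}_k) \cong U(\SLA{sl}{2})/I$, where $I$ is the two-sided ideal generated by the image $[\chi]$ of the non-trivial singular vector under the Zhu map. I would compute $[\chi]$ using Fuchs's projection formula \cite{FucTwo89}; the essential output is that, modulo $I$, the relation organises as a polynomial identity in the Casimir $\Omega = \tfrac{1}{2} h^2 - ef - fe$ (whose eigenvalue is $2t$ times the conformal dimension) together with a factor recording the zero-grade $\SLA{sl}{2}$-structure. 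Testing the zero-mode $\chi_0$ on a \hws{} of weight $\lambda$, exactly as in the examples above, returns a polynomial in $\lambda$ whose roots are the admissible highest weights $\lambda_{r,s}$ of \eqref{eqn:DefLambda}. This recovers the highest weight classification, the finite-dimensional tops giving the $\IrrMod{r,0}$ and the infinite-dimensional highest weight tops giving the $\DiscMod{r,s}^+$; applying the conjugation automorphism $\conjaut$ then yields the lowest weight tops $\conjmod{\DiscMod{r,s}^+} = \DiscMod{r,s}^-$.

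It remains to treat the dense case, in which neither $e_0$ nor $f_0$ acts locally nilpotently, so that the zero-grade weights fill an entire coset $\lambda + 2\ZZ$. By the classification of irreducible weight $\SLA{sl}{2}$-modules (due to Mathieu), such a module is determined by the coset $\lambda \in \RR/2\ZZ$ and the Casimir eigenvalue, and is irreducible precisely when it contains neither a highest nor a lowest weight vector. Imposing that $[\chi]$ annihilate the top space reduces, in the dense case, to the vanishing of the Casimir polynomial, whose roots I expect to be exactly the values $2t\Delta_{r,s}$ with $\Delta_{r,s}$ as in \eqref{eqn:DefConfDimRS}; this fixes the family $\TypMod{\lambda;\Delta_{r,s}}$. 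The coset $\lambda$ is then unconstrained except that a highest or lowest weight vector appears, destroying irreducibility, exactly when the two roots $\lambda_{r,s}$ and $-2-\lambda_{r,s} = \lambda_{u-r,v-s}$ of $\tfrac{1}{2}\mu(\mu+2) = 2t\Delta_{r,s}$ lie in the coset, that is, when $\lambda \equiv \lambda_{r,s}$ or $\lambda \equiv \lambda_{u-r,v-s} \pmod{2}$. These two exceptional cosets are interchanged by the symmetry \eqref{eqn:KacSymmetry}, and excluding them recovers the stated constraint on $\lambda$.

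The main obstacle is the explicit computation and factorisation of the Zhu image $[\chi] \in U(\SLA{sl}{2})$: Fuchs's formula supplies a projection, but extracting from it a clean polynomial in $\Omega$ and $\lambda$, and verifying that its roots are precisely the $\lambda_{r,s}$ and the $2t\Delta_{r,s}$, is the technical heart of the argument. A secondary subtlety is completeness --- that every irreducible relaxed \hwm{} arises in this way --- which rests on the completeness of the weight-module classification for $\SLA{sl}{2}$ and on the faithfulness of the Zhu correspondence for positive-energy modules.
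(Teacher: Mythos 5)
The paper quotes this theorem from Adamovi\'{c}--Milas without giving a proof, but it explicitly attributes their argument to exactly the strategy you outline: determining Zhu's algebra of the simple vertex algebra via Fuchs's projection of the non-trivial singular vector onto $U \bigl( \SLA{sl}{2} \bigr)$, and then classifying the irreducible $\SLA{sl}{2}$-weight modules annihilated by the resulting relation. Your sketch is a faithful reconstruction of that route, including the correct identification of the Casimir eigenvalue as $2t\Delta_{r,s}$ and of the excluded cosets $\lambda_{r,s}$ and $\lambda_{u-r,v-s}$ as the roots of $\tfrac{1}{2}\mu\brac{\mu+2} = 2t\Delta_{r,s}$.
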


\noindent Here, $\TypMod{\lambda ; \Delta_{r,s}}$ denotes the irreducible\footnote{The requirement that $\lambda \neq \lambda_{r,s} , \lambda_{u-r,v-s}$ stems from the fact that the modules $\TypMod{\lambda_{r,s} ; \Delta_{r,s}}$ and $\TypMod{\lambda_{u-r,v-s} ; \Delta_{r,s}}$ would not be irreducible.  We shall discuss the indecomposable modules that correspond to $\lambda = \lambda_{r,s} , \lambda_{u-r,v-s}$ in detail in \secref{sec:TypChar}.} relaxed \hwm{} whose zero-grade subspace is spanned by an infinite number of states, parametrised by $n \in \ZZ$, each of which has conformal dimension $\Delta_{r,s}$ and weight of the form $\lambda + 2n$.  This zero-grade subspace may be identified with an irreducible $\SLA{sl}{2}$-module of principal series type, meaning that it possesses neither a highest nor a lowest weight.  The $\AKMA{sl}{2}$-module $\TypMod{\lambda ; \Delta_{r,s}}$ may be constructed by appropriately inducing this $\SLA{sl}{2}$-module and taking the irreducible quotient.

The spectrum of irreducible admissibles therefore includes $u-1$ modules $\IrrMod{r,0}$, $\brac{u-1} \brac{v-1}$ modules $\DiscMod{r,s}^+$ and the same number of conjugate modules $\DiscMod{r,s}^-$, and $\tfrac{1}{2} \brac{u-1} \brac{v-1}$ continuous families of modules $\TypMod{\lambda ; \Delta_{r,s}}$ (because $\Delta_{r,s} = \Delta_{u-r,v-s}$ and there are no other coincidences of conformal dimensions).  Aside from the conjugation $\conjmod{\TypMod{\lambda ; \Delta_{r,s}}} = \TypMod{-\lambda ; \Delta_{r,s}}$, these admissibles are further related by spectral flow as follows:
\begin{equation} \label{eqn:SFRelations}
\sfmod{}{\IrrMod{r,0}} = \DiscMod{u-r,v-1}^+, \qquad \sfmod{-1}{\IrrMod{r,0}} = \DiscMod{u-r,v-1}^-, \qquad \sfmod{-1}{\DiscMod{r,s}^+} = \DiscMod{u-r,v-1-s}^- \quad \text{(\(s \neq v-1\)).}
\end{equation}
Of course, this has to be slightly adjusted in the non-negative integer level case:
\begin{equation}
\sfmod{}{\IrrMod{r,0}} = \sfmod{-1}{\IrrMod{r,0}} = \IrrMod{u-r,0} \qquad \text{(\(v=1\)).}
\end{equation}
Excluding this case, it makes sense to ask about modules obtained from higher spectral flows.  It turns out that for every $v>1$, the spectral flow images $\sfmod{\ell}{\mathcal{M}}$, $\ell \in \ZZ$, of any admissible module $\mathcal{M}$ are mutually non-isomorphic.  However, only three at most of these infinitely many images may be identified as relaxed \hwms{}.  The rest are irreducibles whose conformal dimensions \emph{are not bounded below}.  Nevertheless, these images are still admissible modules.\footnote{Spectral flow automorphisms do not, strictly speaking, define vertex algebra automorphisms because they do not preserve the vacuum.  However, they do preserve \opes{} which is enough to show that they map admissibles to admissibles.}  We therefore have to enlarge the spectrum of irreducible admissibles, this time for the last time.

To summarise, we may characterise the irreducible spectrum, for $v>1$, as consisting of:
\begin{itemize}[leftmargin=*]
\item $u-1$ countably-infinite families parametrised by $r=1,2,\ldots,u-1$:
\[
\sfmod{\ell}{\IrrMod{r,0}}, \qquad \text{(\(\ell \in \ZZ\)).}
\]
\item $\brac{u-1} \brac{v-2}$ countably-infinite families parametrised by $r=1,2,\ldots,u-1$ and $s=1,2,\ldots,v-2$:
\[
\sfmod{\ell}{\DiscMod{r,s}^+}, \qquad \text{(\(\ell \in \ZZ\)).}
\]
\item $\tfrac{1}{2} \brac{u-1} \brac{v-1}$ uncountably-infinite families parametrised by $r=1,2,\ldots,u-1$ and $s=1,2,\ldots,v-1$:
\[
\sfmod{\ell}{\TypMod{\lambda ; \Delta_{r,s}}}, \qquad \text{(\(\ell \in \ZZ\), \(\lambda \in \RR / 2 \ZZ\) and \(\lambda \neq \lambda_{r,s} , \lambda_{u-r,v-s} \bmod{2}\)).}
\]
\end{itemize}
We remark that the given range of $s$ for the second class of families is correct because of the isomorphisms $\sfmod{\ell}{\DiscMod{r,v-1}^+} \cong \sfmod{\ell+1}{\IrrMod{u-r,0}}$.  The three different types of families are illustrated in \figref{fig:Spec}.  Finally, we will refer to the $\sfmod{\ell}{\TypMod{\lambda ; \Delta_{r,s}}}$ as the \emph{standard modules} of the theory, following \cite{CreLog13}.  When a standard module $\sfmod{\ell}{\TypMod{\lambda ; \Delta_{r,s}}}$ is irreducible, which occurs whenever $\lambda \neq \lambda_{r,s}, \lambda_{u-r,v-s}$, we shall refer to it as being \emph{typical}.  Admissible modules which are not typical, such as the $\sfmod{\ell}{\IrrMod{r,0}}$ and the $\sfmod{\ell}{\DiscMod{r,s}^+}$, are said to be \emph{atypical}.

{
\psfrag{L1}[][]{$\IrrMod{r,0} \equiv \IrrMod{r-1}$}
\psfrag{E0}[][]{$\TypMod{\lambda ; \Delta_{r,s}}$}
\psfrag{Lb}[][]{$\DiscMod{u-r,v-1}^+$}
\psfrag{Da}[][]{$\DiscMod{r,s}^+$}
\psfrag{Lb*}[][]{$\DiscMod{u-r,v-1}^-$}
\psfrag{Da*}[][]{$\DiscMod{u-r,v-s-1}^-$}
\psfrag{g}[][]{$\sfaut$}
\psfrag{ee}[][]{}
\psfrag{ff}[][]{}
\psfrag{gg}[][]{}
\psfrag{hh}[][]{}
\psfrag{0e}[][]{}
\psfrag{aq}[][]{}
\psfrag{cq}[][]{}
\psfrag{bq}[][]{}
\psfrag{dq}[][]{}
\psfrag{ii}[][]{}
\psfrag{jj}[][]{}
\psfrag{kk}[][]{}
\psfrag{ll}[][]{}
\psfrag{mm}[][]{}
\begin{figure}
\begin{center}
\includegraphics[width=\textwidth]{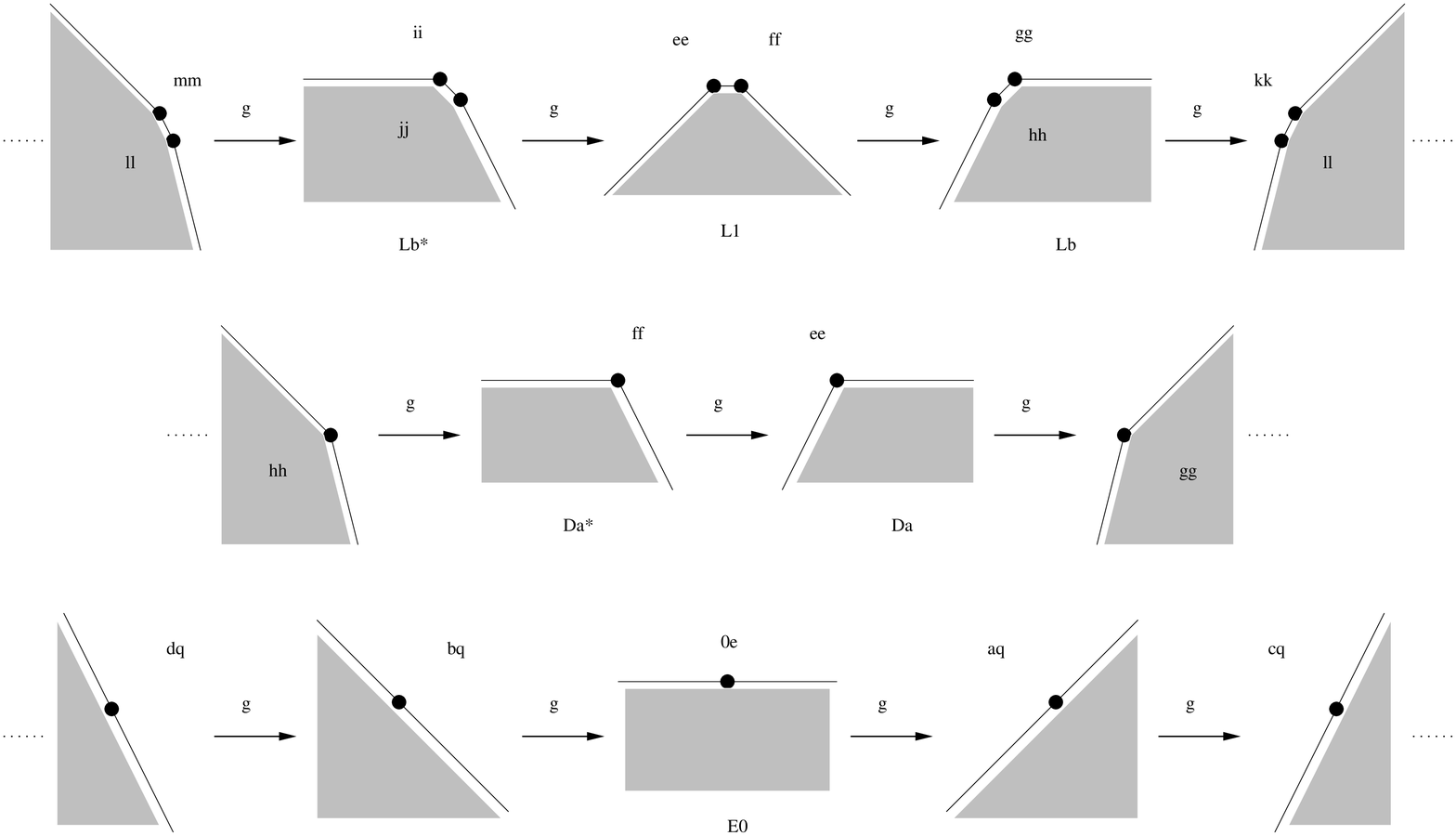}
\caption{Depictions of the three types of families of admissible irreducible $\AKMA{sl}{2}$-modules when $v>1$.  Conformal dimensions increase from top to bottom and $\SLA{sl}{2}$-weights increase from right to left.} \label{fig:Spec}
\end{center}
\end{figure}
}

\section{Standard Characters} \label{sec:TypChar}

We will assume, unless otherwise stipulated, that $v>1$ for the remainder of the article.  The admissible modules with $v=1$ coincide with the well-known integrable modules at non-negative integer level and we refer to standard texts, for example \cite{KacInf90,DiFCon97}, for their study.

To derive character formulae for the standard modules $\sfmod{\ell}{\TypMod{\lambda; \Delta_{r,s}}}$, it is actually convenient to start with certain atypical characters.  We therefore consider the structure of the Verma modules $\VerMod{r,s}$, for $r=1,2,\ldots,u-1$ and $s=1,2,\ldots,v-1$, whose level $k$ is admissible and whose irreducible quotients are the admissible modules $\DiscMod{r,s}^+$.  The characters of these Verma modules are simply given by
\begin{align}
\fch{\VerMod{r,s}}{y;z;q} &= \traceover{\VerMod{r,s}} y^k z^{h_0} q^{L_0 - c/24} = \frac{y^k z^{\lambda_{r,s}} q^{\Delta_{r,s} - c/24}}{\prod_{i=1}^{\infty} \brac{1 - z^2 q^i} \brac{1-q^i} \brac{1 - z^{-2} q^{i-1}}} \notag \\
&= \frac{-\ii y^k z^{\lambda_{r,s} + 1} q^{\Delta_{r,s} - c/24 + 1/8}}{\Jth{1}{z^2 ; q}}.
\end{align}
Their structures may be obtained straight-forwardly from the Kac-Kazhdan formula.  The singular vectors turn out to have weights of the form $\lambda_{r',s}$ and conformal dimensions $\Delta_{r',s}$, where $r' = \pm r \bmod{u}$.  More precisely, the singular vectors form an infinite braided pattern as follows:

\medskip
\begin{center}
\begin{tikzpicture}[auto,thick]
\node (O) at (0,0) [] {$r$};
\node (t1) at (1,1) [] {$-r$};
\node (b1) at (1,-1) [] {$2u-r$};
\node (t2) at (3,1) [] {$-2u+r$};
\node (b2) at (3,-1) [] {$2u+r$};
\node (t3) at (5,1) [] {$-2u-r$};
\node (b3) at (5,-1) [] {$4u-r$};
\node (t4) at (7,1) [] {$-4u+r$};
\node (b4) at (7,-1) [] {$4u+r$};
\node (t5) at (9,1) [] {$-4u-r$};
\node (b5) at (9,-1) [] {$6u-r$};
\draw [->] (O) to (t1);
\draw [->] (O) to (b1);
\draw [->] (t1) to (t2);
\draw [->] (t1) to (b2);
\draw [->] (b1) to (t2);
\draw [->] (b1) to (b2);
\draw [->] (t2) to (t3);
\draw [->] (t2) to (b3);
\draw [->] (b2) to (t3);
\draw [->] (b2) to (b3);
\draw [->] (t3) to (t4);
\draw [->] (t3) to (b4);
\draw [->] (b3) to (t4);
\draw [->] (b3) to (b4);
\draw [->] (t4) to (t5);
\draw [->] (t4) to (b5);
\draw [->] (b4) to (t5);
\draw [->] (b4) to (b5);
\draw [dotted] (t5) to (10.5,1);
\draw [dotted] (b5) to (10.5,-1);
\draw [dotted] (9.5,0) to (10.5,0);
\end{tikzpicture}
\end{center}
\medskip

\noindent Here, we indicate the singular vector by the value of $r'$, for clarity.  Adding and subtracting the characters of the Verma modules generated by these singular vectors, we arrive at a character formula for the $\DiscMod{r,s}^+$:

\begin{prop} \label{prop:ChDrs}
Let $k$ be an admissible level with $v>1$.  Then, for $r=1,2,\ldots,u-1$ and $s=1,2,\ldots,v-1$, the character of the irreducible admissible module $\DiscMod{r,s}^+$ is given by
\begin{equation} \label{mch:Drs}
\fch{\DiscMod{r,s}^+}{y;z;q} = \frac{-\ii y^k z^{\lambda_{r,s} + 1} q^{\Delta_{r,s} - c/24 + 1/8}}{\Jth{1}{z^2 ; q}} \sum_{j \in \ZZ} \sqbrac{z^{2uj} q^{j \brac{uvj+vr-us}} - z^{2 \brac{uj-r}} q^{\brac{uj-r} \brac{vj-s}}}.
\end{equation}
\end{prop}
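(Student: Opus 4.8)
The plan is to obtain $\ch{\DiscMod{r,s}^+}$ as an Euler--Poincar\'{e} alternating sum of Verma module characters dictated by the braided embedding diagram, after which \eqref{mch:Drs} reduces to bookkeeping. From the Kac--Kazhdan analysis just quoted, the singular vectors of $\VerMod{r,s}$ sit at weights $\lambda_{r',s}$ and conformal dimensions $\Delta_{r',s}$ with $r' \equiv \pm r \bmod u$, and each generates a submodule isomorphic to the Verma module $\VerMod{r',s}$ of highest weight $\lambda_{r',s} = r'-1-ts$ (extending the index $r'$ beyond the range $1,\ldots,u-1$); here one uses that any nonzero homomorphism of Verma modules is injective, as $U(\mathfrak{n}^-)$ has no zero divisors. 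The diagram organises these submodules into levels $n=0,1,2,\ldots$ by graph distance from the top, each level $n\geq 1$ carrying two singular vectors.

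First I would record the precise submodule data needed from the structure theory: the maximal proper submodule of $\VerMod{r,s}$ is the sum of the two level-$1$ submodules, and, more generally, the intersection of the two level-$n$ submodules is exactly the sum of the two level-$(n+1)$ submodules. Granting this, the short exact sequence $0 \lra A \cap B \lra A \oplus B \lra A+B \lra 0$ yields $\ch{A+B} = \ch{A} + \ch{B} - \ch{A \cap B}$ for any two submodules, and applying this repeatedly down the diagram telescopes into
\begin{equation*}
\ch{\DiscMod{r,s}^+} = \sum_{n \geq 0} \brac{-1}^n \sum_{\text{level } n} \ch{\VerMod{r',s}},
\end{equation*}
the remainder term $\pm \ch{A_{n+1}+B_{n+1}}$ dropping out in the formal limit because its minimal conformal dimension tends to infinity with $n$. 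Verifying that the intersections close up level by level onto the next pair of singular vectors---equivalently, that the associated Verma complex is exact---is the main obstacle, and it is precisely here that the full Feigin--Fuchs/Malikov--Feigin--Fuchs structure theory of admissible-level $\AKMA{sl}{2}$ Verma modules is required; everything else is formal.

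It remains to assemble the characters. Reading the $r'$-labels off the diagram, the level-$n$ entries with $n$ even are exactly $r' = r+2uj$ for $j \in \ZZ$ (carrying sign $\brac{-1}^n = +1$), while those with $n$ odd are $r' = 2uj-r$ for $j \in \ZZ$ (carrying sign $\brac{-1}^n = -1$); as $j$ runs over $\ZZ$ each diagram entry is hit exactly once. Substituting the universal Verma character
\begin{equation*}
\fch{\VerMod{r',s}}{y;z;q} = \frac{-\ii y^k z^{\lambda_{r',s}+1} q^{\Delta_{r',s} - c/24 + 1/8}}{\Jth{1}{z^2;q}}
\end{equation*}
and factoring out the common prefactor $-\ii y^k z^{\lambda_{r,s}+1} q^{\Delta_{r,s}-c/24+1/8}/\Jth{1}{z^2;q}$, the two families contribute $z^{2uj}$ and $-z^{2\brac{uj-r}}$ respectively, since $\lambda_{r+2uj,s}+1 = \brac{\lambda_{r,s}+1}+2uj$ and $\lambda_{2uj-r,s}+1 = \brac{\lambda_{r,s}+1}+2\brac{uj-r}$. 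For the $q$-exponents I would expand $\Delta_{r',s} = \brac{\brac{r'-ts}^2-1}/4t$ and use $t=u/v$ to obtain the two identities $\Delta_{r+2uj,s} - \Delta_{r,s} = j\brac{uvj+vr-us}$ and $\Delta_{2uj-r,s} - \Delta_{r,s} = \brac{uj-r}\brac{vj-s}$, which are exactly the $q$-powers in the two summands of \eqref{mch:Drs}. Finally, no convergence issue arises in the resulting sum over $j$: each $q$-exponent grows quadratically in $j$, so only finitely many terms contribute to any fixed power of $q$, and \eqref{mch:Drs} follows.
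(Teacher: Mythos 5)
Your proposal follows essentially the same route as the paper: the braided pattern of singular vectors obtained from Kac--Kazhdan, an alternating sum of Verma characters over the even and odd levels $r' = r + 2uj$ and $r' = 2uj - r$, and the exponent identities $\Delta_{r+2uj,s}-\Delta_{r,s} = j\brac{uvj+vr-us}$ and $\Delta_{2uj-r,s}-\Delta_{r,s} = \brac{uj-r}\brac{vj-s}$. You actually make the homological bookkeeping (intersections of the two level-$n$ submodules closing onto the level-$(n+1)$ pair, so that the inclusion--exclusion telescopes) more explicit than the paper, which simply states that the Verma characters are added and subtracted according to the diagram.
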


\noindent The character of the conjugate module $\DiscMod{r,s}^- = \conjmod{\DiscMod{r,s}^+}$ is obtained from this formula by inverting $z$.

The zero-grade subspace of $\DiscMod{r,s}^+$ has a basis in which each basis state has weight of the form $\lambda_{r,s} - 2m$, $m=0,1,2,\ldots$, and conformal dimension $\Delta_{r,s}$.  By \eqnref{eqn:KacSymmetry}, the zero-grade subspace of $\DiscMod{u-r,v-s}^-$ has a similar basis in which the states have weights $\lambda_{r,s} + 2m$, $m=1,2,3,\ldots$, and conformal dimension $\Delta_{r,s}$.  It follows that there exist \emph{indecomposable} modules in which $\DiscMod{r,s}^+$ and $\DiscMod{u-r,v-s}^-$ are ``glued together'' by the action of $\AKMA{sl}{2}$.\footnote{The existence of these indecomposables may be demonstrated by inducing the indecomposable $\SLA{sl}{2}$-module corresponding to this zero-grade subspace and then quotienting by the maximal submodule among those having trivial intersection with the zero-grade subspace.}  In fact, there are two such non-isomorphic indecomposables:  One which has $\DiscMod{r,s}^+$ as a submodule and $\DiscMod{u-r,v-s}^-$ as the quotient by this submodule, and one for which the identities of the submodule and quotient are swapped.  We denote these atypical indecomposables by $\TypMod{r,s}^+$ and $\TypMod{u-r,v-s}^-$, respectively, and summarise their structure in the following short exact sequences:
\begin{equation} \label{ES:Ers}
\dses{\DiscMod{r,s}^+}{\TypMod{r,s}^+}{\DiscMod{u-r,v-s}^-}, \qquad
\dses{\DiscMod{r,s}^-}{\TypMod{r,s}^-}{\DiscMod{u-r,v-s}^+}.
\end{equation}
Note that the $\TypMod{r,s}^{\pm}$ and $\TypMod{u-r,v-s}^{\pm}$ correspond precisely to the ``holes'' in the continuous spectrum of the admissible irreducibles $\TypMod{\lambda ; \Delta_{r,s}}$.  These holes acknowledge the fact that the admissibles would fail to be irreducible if we were to allow $\lambda = \lambda_{r,s}$ or $\lambda = \lambda_{u-r,v-s} = -\lambda_{r,s} \bmod 2$.  We remark that $\conjmod{\TypMod{r,s}^+} = \TypMod{r,s}^-$.

Our next task is to compute the character of $\TypMod{r,s}^+$.  From \eqref{mch:Drs}, we easily obtain that of $\DiscMod{u-r,v-s}^-$:
\begin{multline}
\ch{\DiscMod{u-r,v-s}^-} = \frac{-\ii y^k z^{-\lambda_{u-r,v-s} - 1} q^{\Delta_{u-r,v-s} - c/24 + 1/8}}{\Jth{1}{z^{-2} ; q}} \\
\cdot \sum_{j \in \ZZ} \sqbrac{z^{-2uj} q^{j \brac{uvj-vr+us}} - z^{-2 \brac{u \brac{j-1} + r}} q^{\brac{u \brac{j-1} + r} \brac{v \brac{j-1} + s}}}.
\end{multline}
Using \eqref{eqn:KacSymmetry}, $\Jth{1}{z^{-2} ; q} = -\Jth{1}{z^2 ; q}$, and sending $j$ to $-j$ in the first term and $j$ to $-j+1$ in the second term of the sum, we find that this character is identical to $-\ch{\DiscMod{r,s}^+}$.  In other words,
\begin{equation} \label{eqn:ZeroCharacter}
\ch{\TypMod{r,s}^+} = \ch{\DiscMod{r,s}^+} + \ch{\DiscMod{u-r,v-s}^-} = 0.
\end{equation}
The character of the conjugate module $\TypMod{r,s}^-$ likewise vanishes.  These vanishings generalise the results obtained for $k=-\tfrac{1}{2}$ and $k=-\tfrac{4}{3}$ in \cite{RidSL210,CreMod12} (see also \cite{KohFus88}).

Of course, the fact that the characters of the $\TypMod{r,s}^{\pm}$ vanish does not mean that the modules vanish.  As emphasised in \cite{CreMod12}, it just means that we should not consider these characters as meromorphic functions of $z$, but rather as formal power series (or better yet, algebraic distributions).  The point is that the character formula for $\DiscMod{r,s}^+$ given in \propref{prop:ChDrs} is only valid (assuming $v>1$) when expanded in the region \cite{KacMod88,FeiRes98}
\begin{equation} \label{eqn:Convergence}
\abs{q} < 1, \qquad 
\begin{cases}
1 < \abs{z}^2 < \abs{q}^{-1} & \text{(\(s \neq v-1\)),} \\
1 < \abs{z}^2 < \abs{q}^{-2} & \text{(\(s = v-1\)).}
\end{cases}
\end{equation}
The corresponding region for $\ch{\DiscMod{r,s}^-}$ is obtained by inverting $z$, so we immediately see that the regions for $\ch{\DiscMod{r,s}^+}$ and $\ch{\DiscMod{u-r,v-s}^-}$ are \emph{disjoint} \cite{FeiRes98,LesSU202}, hence that the sum \eqref{eqn:ZeroCharacter} is invalid when the characters are expanded as power series.  In fact, what this tells us is that these characters only sum to zero upon meromorphically extending them to the entire $z$-plane.

To correctly compute the sum of $\ch{\DiscMod{r,s}^+}$ and $\ch{\DiscMod{u-r,v-s}^-}$, and thereby obtain the character of $\TypMod{r,s}^+$, we use expansion formulae derived in \cite{KacInt94}, as explained in \cite[App.~A]{CreMod12}:
\begin{subequations}
\begin{align}
\frac{1}{\Jth{1}{z^2;q}} &= \frac{-\ii \Jth{1}{w;q}}{\Jth{1}{wz^2;q} \func{\eta}{q}^3} \sum_{n \in \ZZ} \frac{wz^{2n}q^n}{1-wq^n} & &\text{($1 < \abs{z}^2 < \abs{q}^{-1}$),} \label{eqn:KW1} \\
\frac{1}{\Jth{1}{z^2;q}} &= \frac{-\ii \Jth{1}{w;q}}{\Jth{1}{wz^2;q} \func{\eta}{q}^3} \sum_{n \in \ZZ} \frac{z^{2n}}{1-wq^n} & &\text{($\abs{q} < \abs{z}^2 < 1$).} \label{eqn:KW2}
\end{align}
\end{subequations}
Applying \eqref{eqn:KW1} to $\ch{\DiscMod{r,s}^+}$ and \eqref{eqn:KacSymmetry}, $\Jth{1}{z^{-2} ; q} = -\Jth{1}{z^2 ; q}$, and \eqref{eqn:KW2} to $\ch{\DiscMod{r,s}^-}$, the sum of the characters becomes
\begin{equation}
\ch{\TypMod{r,s}^+} = \frac{y^k z^{\lambda_{r,s} + 1} q^{\Delta_{r,s} - c/24 + 1/8}}{\func{\eta}{q}^3} \frac{\Jth{1}{w;q}}{\Jth{1}{wz^2;q}} \sum_{n \in \ZZ} z^{2n} \sum_{j \in \ZZ} \sqbrac{z^{2uj} q^{j \brac{uvj+vr-us}} - z^{2 \brac{uj-r}} q^{\brac{uj-r} \brac{vj-s}}}.
\end{equation}
This can be dramatically simplified by writing $z = \ee^{2 \pi \ii \zeta}$ and employing the identity
\begin{equation}
\sum_{n \in \ZZ} \ee^{4 \pi \ii \zeta n} = \sum_{m \in \ZZ} \func{\delta}{2 \zeta - m},
\end{equation}
valid as an equality of (algebraic) distributions.  Because $\Jth{1}{\ee^{2 \pi \ii m} w;q} = \ee^{\ii \pi m} \Jth{1}{w;q}$ for $m \in \ZZ$, our character sum becomes
\begin{align}
\ch{\TypMod{r,s}^+} &= \frac{y^k z^{\lambda_{r,s} + 1} q^{\Delta_{r,s} - c/24 + 1/8}}{\func{\eta}{q}^3} \sum_{m \in \ZZ} \func{\delta}{2 \zeta - m} \ee^{-\ii \pi m} \sum_{j \in \ZZ} \sqbrac{q^{j \brac{uvj+vr-us}} - q^{\brac{uj-r} \brac{vj-s}}} \notag \\
&= \frac{y^k z^{\lambda_{r,s}}}{\func{\eta}{q}^2} \frac{q^{\Delta_{r,s}^{\mathrm{Vir}} - c^{\mathrm{Vir}}/24 + 1/24} \sum_{j \in \ZZ} \sqbrac{q^{j \brac{uvj+vr-us}} - q^{\brac{uj-r} \brac{vj-s}}}}{\func{\eta}{q}} \sum_{n \in \ZZ} z^{2n}.
\end{align}
Here, we have used \eqref{eqn:ModularMagic} to express $\Delta_{r,s}$ and $c$ in terms of their Virasoro analogues because we recognise the second factor above (see \cite{RocVac85}).

\begin{prop} \label{prop:ChErs}
Let $k$ be an admissible level with $v>1$.  Then, for $r=1,2,\ldots,u-1$ and $s=1,2,\ldots,v-1$, the character of the indecomposable admissible module $\TypMod{r,s}^+$ is given by
\begin{equation} \label{ch:Ers}
\fch{\TypMod{r,s}^+}{y;z;q} = \frac{y^k z^{\lambda_{r,s}} \fvch{r,s}{q}}{\func{\eta}{q}^2} \sum_{n \in \ZZ} z^{2n},
\end{equation}
where $\vch{r,s}$ denotes the character of the irreducible Virasoro module whose \hws{} has conformal dimension $\Delta_{r,s}^{\mathrm{Vir}}$.
\end{prop}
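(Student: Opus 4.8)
The plan is to realise $\ch{\TypMod{r,s}^+}$ as the sum of its two constituent characters dictated by the short exact sequence \eqref{ES:Ers}, being careful to treat everything distributionally rather than as a meromorphic function of $z$. Since the character is additive over short exact sequences, I would first write $\ch{\TypMod{r,s}^+} = \ch{\DiscMod{r,s}^+} + \ch{\DiscMod{u-r,v-s}^-}$ and compute the second summand from \propref{prop:ChDrs}, obtaining $\ch{\DiscMod{u-r,v-s}^-}$ by inverting $z$ and invoking the Kac symmetry \eqref{eqn:KacSymmetry} together with $\Jth{1}{z^{-2};q} = -\Jth{1}{z^2;q}$.

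The crucial point, already signalled by \eqnref{eqn:ZeroCharacter}, is that these two summands converge in disjoint annuli, so naively summing their meromorphic continuations yields zero. To extract the genuine distributional character I would instead expand $1/\Jth{1}{z^2;q}$ using the expansion formulae \eqref{eqn:KW1} and \eqref{eqn:KW2}, each in its appropriate region, thereby introducing the auxiliary parameter $w$. After combining the two resulting series, the prefactor $\Jth{1}{w;q}/\Jth{1}{wz^2;q}$ is common to both terms, and the two geometric-type sums over $n$ recombine into the single sum $\sum_{n \in \ZZ} z^{2n}$.

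At this stage I would write $z = \ee^{2\pi\ii\zeta}$ and apply the distributional identity $\sum_{n \in \ZZ} \ee^{4\pi\ii\zeta n} = \sum_{m \in \ZZ} \func{\delta}{2\zeta - m}$ to collapse the $n$-sum into delta functions supported where $2\zeta \in \ZZ$. On this support the quasi-periodicity $\Jth{1}{\ee^{2\pi\ii m}w;q} = \ee^{\ii\pi m}\Jth{1}{w;q}$ causes the auxiliary $w$ to cancel entirely, leaving only a factor $\ee^{-\ii\pi m}$ that is reabsorbed into $z^{\lambda_{r,s}}$. Finally I would use the modular magic relation \eqref{eqn:ModularMagic} to trade $\Delta_{r,s}$ and $c$ for their Virasoro counterparts, whereupon the surviving $q$-series is recognised, via the Rocha-Caridi formula \cite{RocVac85}, as $\func{\eta}{q}^{-1}$ times the character $\vch{r,s}$ of the irreducible Virasoro minimal model module of dimension $\Delta_{r,s}^{\mathrm{Vir}}$, yielding \eqref{ch:Ers}.

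I expect the main obstacle to be the distributional bookkeeping rather than any single calculation. One must justify that the manipulations involving the auxiliary $w$ — in particular its clean cancellation once the delta functions appear — are legitimate equalities of algebraic distributions and genuinely independent of $w$, rather than artefacts of an illegal interchange of summation and meromorphic continuation. Everything else reduces to carefully matching indices in the theta identities and identifying the resulting $q$-series as a minimal model character.
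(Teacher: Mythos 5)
Your proposal follows the paper's own proof essentially step for step: additivity over the exact sequence \eqref{ES:Ers}, the Kac--Wakimoto expansions \eqref{eqn:KW1} and \eqref{eqn:KW2} in their respective annuli with the auxiliary parameter $w$, the distributional collapse to delta functions and cancellation of $w$ via the quasi-periodicity of $\jth{1}$, and finally \eqref{eqn:ModularMagic} together with the Rocha-Caridi formula to identify $\vch{r,s}$. This is the same argument as in \secref{sec:TypChar}, so no further comparison is needed.
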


\noindent The character of $\TypMod{r,s}^-$ is obtained by conjugating (inverting $z$) and one easily sees that $\ch{\TypMod{r,s}^-} = \ch{\TypMod{u-r,v-s}^+}$.  This proposition ties the characters of the indecomposables $\TypMod{r,s}^{\pm}$ to those of the Virasoro minimal model $\minmod{u}{v}$, strengthening an old observation of Mukhi and Panda \cite{MukFra90}.  Moreover, the structure theory for relaxed \hwms{} (see \cite{FeiEqu98,SemEmb97}) allows us to conclude something even stronger:

\begin{cor} \label{cor:ChTyp}
Let $k$ be an admissible level with $v>1$.  Then, the character of the irreducible admissible module $\TypMod{\lambda ; \Delta_{r,s}}$ is given by
\begin{equation} \label{ch:Typ}
\fch{\TypMod{\lambda ; \Delta_{r,s}}}{y;z;q} = \frac{y^k z^{\lambda} \fvch{r,s}{q}}{\func{\eta}{q}^2} \sum_{n \in \ZZ} z^{2n}.
\end{equation}
\end{cor}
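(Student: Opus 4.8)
The plan is to extract the common weight-space multiplicity from \propref{prop:ChErs} and transport it across the entire family of standard modules sharing the conformal dimension $\Delta_{r,s}$. Since every module $\TypMod{\lambda;\Delta_{r,s}}$ has $h_0$ acting semisimply with eigenvalues $\lambda+2n$ ($n\in\ZZ$), its character is forced into the shape
\begin{equation*}
\fch{\TypMod{\lambda;\Delta_{r,s}}}{y;z;q} = y^k \sum_{n\in\ZZ} z^{\lambda+2n}\, f_n(q),
\end{equation*}
where $f_n(q)$ records the graded dimensions (in $q^{L_0-c/24}$) of the weight-$(\lambda+2n)$ subspace. The whole corollary amounts to the single claim that $f_n(q)$ is independent of $n$ \emph{and} of $\lambda$, with common value $\fvch{r,s}{q}/\func{\eta}{q}^2$.

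First I would settle the $n$-independence. For the atypical indecomposable this is already manifest in \propref{prop:ChErs}, where $\ch{\TypMod{r,s}^+}$ factorises as $y^k z^{\lambda_{r,s}}\fvch{r,s}{q}\func{\eta}{q}^{-2}\sum_n z^{2n}$, so that each weight $\lambda_{r,s}+2n$ carries the identical factor $\fvch{r,s}{q}/\func{\eta}{q}^2$. For a typical module it follows from the fact that its zero-grade subspace is the irreducible principal series $\SLA{sl}{2}$-module, on which $e_0$ and $f_0$ have no kernel; because $\TypMod{\lambda;\Delta_{r,s}}$ is generated from this subspace by the affine modes and is irreducible, $e_0$ and $f_0$ induce isomorphisms between the weight-$(\lambda+2n)$ and weight-$(\lambda+2n\pm2)$ graded pieces at each fixed conformal grade, forcing $f_n=f$ for all $n$. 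Thus $\fch{\TypMod{\lambda;\Delta_{r,s}}}{y;z;q} = y^k z^{\lambda} f(q)\sum_{n\in\ZZ}z^{2n}$, and only the value of $f$ remains to be pinned down.

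The decisive step is to identify $f(q)$ and to show it does not vary with $\lambda$. Here I would invoke the structure theory of relaxed \hwms{} (\cite{FeiEqu98,SemEmb97}): the relaxed Verma module induced from the principal series zero-grade subspace carries a maximal submodule whose singular vectors are located by the Kac-Kazhdan data, which depend only on $\Delta_{r,s}$ and never on the continuous parameter $\lambda$. Consequently the graded multiplicity $f(q)$ of the irreducible quotient is rigid along the typical family and matches its value at the degenerate parameters $\lambda=\lambda_{r,s},\lambda_{u-r,v-s}$, where the relevant indecomposable is precisely $\TypMod{r,s}^+$. Reading $f(q)=\fvch{r,s}{q}/\func{\eta}{q}^2$ off \eqref{ch:Ers} and substituting it for every typical $\lambda$ then yields \eqref{ch:Typ}. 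The main obstacle is exactly this rigidity claim: one must be sure, via the relaxed-module structure theory, that the singular-vector pattern of the relaxed Verma module sees only $\Delta_{r,s}$; granting that, the passage from the atypical indecomposable to a generic typical module is merely the substitution $z^{\lambda_{r,s}}\to z^{\lambda}$, with the $q$-dependent factor carried along untouched.
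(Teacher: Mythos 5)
Your proposal is correct and follows essentially the same route as the paper: the paper establishes the explicit character of the atypical indecomposable $\TypMod{r,s}^+$ in \propref{prop:ChErs} and then passes to generic $\lambda$ by appealing to the structure theory of relaxed \hwms{} \cite{FeiEqu98,SemEmb97}, exactly the rigidity-in-$\lambda$ argument you spell out. The only difference is one of detail --- the paper leaves the constancy of the graded multiplicities in $n$ and the $\lambda$-independence of the $q$-series as a one-line citation, whereas you unpack both, which is a faithful reconstruction rather than a new method.
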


\noindent The action of spectral flow upon the character of an arbitrary $\AKMA{sl}{2}$-module $\mathcal{M}$,
\begin{equation} \label{eqn:CharSF}
\fch{\sfmod{\ell}{\mathcal{M}}}{y;z;q} = \fch{\mathcal{M}}{y z^{\ell} q^{\ell^2 / 4} ; z q^{\ell / 2} ; q},
\end{equation}
may then be used to obtain the characters of the remaining standard modules $\sfmod{\ell}{\TypMod{\lambda ; \Delta_{r,s}}}$.

\begin{ex}
The characters of the standard admissibles were worked out for $k=-\tfrac{1}{2}$ in \cite{RidSL210}, using the fact that the $\beta \gamma$ ghost system is a free field theory, and for $k=-\tfrac{4}{3}$ in \cite{CreMod12}, using the above method.  For both levels, the standard characters took the deceptively simple form
\begin{equation}
\ch{\TypMod{\lambda ; \Delta}} = \frac{y^k z^{\lambda}}{\func{\eta}{q}^2} \sum_{n \in \ZZ} z^{2n}.
\end{equation}
We can now understand this simplicity as resulting from the fact that $u=3$, $v=2$ for $k=-\tfrac{1}{2}$ and $u=2$, $v=3$ for $k=-\tfrac{4}{3}$.  The corresponding minimal model is, in both cases, the trivial theory $\minmod{2}{3} = \minmod{3}{2}$, so the Virasoro character appearing in the standard characters is just the constant $1$.  We also see that these are the only levels for which the standard characters are so simple.
\end{ex}

\section{Modular Transformations for Standard Characters} \label{sec:ModTyp}

The remarkable appearance of Virasoro minimal model characters in the standard character formulae is quite fortuitous, because it greatly facilitates the determination of the modular transformations.  For this, we write
\begin{equation}
y = \ee^{2 \pi \ii \theta}, \qquad z = \ee^{2 \pi \ii \zeta}, \qquad q = \ee^{2 \pi \ii \tau}
\end{equation}
and consider the effect on the characters of applying the standard S- and T-transformations
\begin{equation}
\modS \colon \modarg{\theta}{\zeta}{\tau} \longmapsto \modarg{\theta - \zeta^2 / \tau}{\zeta / \tau}{-1 / \tau}, \qquad
\modT \colon \modarg{\theta}{\zeta}{\tau} \longmapsto \modarg{\theta}{\zeta}{\tau + 1},
\end{equation}
for which $\modS^4 = \brac{\modS \modT}^6 = \id$.  We denote the action on characters by $\modS \set{\cdot}$ and $\modT \set{\cdot}$.

\begin{thm} \label{thm:TypMod}
Let $k$ be an admissible level with $v>1$.  Then, the characters of the standard admissible modules carry a (projective) representation of the modular group $\SLG{SL}{2 ; \ZZ}$.  Explicitly, the S-transformation is
\begin{subequations} \label{eqn:STyp}
\begin{equation} \label{eqn:STypAnsatz}
\modS \set{\ch{\sfmod{\ell}{\TypMod{\lambda ; \Delta_{r,s}}}}} = \sum_{\ell' \in \ZZ} \sideset{}{'} \sum_{r',s'} \int_{-1}^1 \modS_{(\ell,\lambda ; \Delta_{r,s}) (\ell',\lambda' ; \Delta_{r',s'})} \ch{\sfmod{\ell'}{\TypMod{\lambda' ; \Delta_{r',s'}}}} \: \dd \lambda',
\end{equation}
where the S-matrix entries are given by
\begin{equation} \label{eqn:STypEntries}
\modS_{(\ell,\lambda ; \Delta_{r,s}) (\ell',\lambda' ; \Delta_{r',s'})} = \frac{1}{2} \frac{\abs{\tau}}{-\ii \tau} \ee^{-\ii \pi \brac{k \ell \ell' + \ell \lambda' + \ell' \lambda}} \vmodS_{(r,s) (r',s')}
\end{equation}
and the $\minmod{u}{v}$ S-matrix entries are given, as usual, by \cite{CarOpe86,ItzTwo86}
\begin{equation} \label{eqn:SVir}
\vmodS_{(r,s) (r',s')} = -2 \sqrt{\frac{2}{uv}} \brac{-1}^{rs'+r's} \sin \frac{v \pi r r'}{u} \sin \frac{u \pi s s'}{v}.
\end{equation}
\end{subequations}
The prime on the sum in \eqref{eqn:STypAnsatz} indicates that $r'$ and $s'$ run over the entries of the Kac table of $\minmod{u}{v}$ modulo the Kac symmetry $\brac{r,s} \sim \brac{u-r,v-s}$.  The T-transformation is
\begin{subequations} \label{eqn:TTyp}
\begin{equation} \label{eqn:TTypAnsatz}
\modT \set{\ch{\sfmod{\ell}{\TypMod{\lambda ; \Delta_{r,s}}}}} = \ee^{\ii \pi \ell \brac{\lambda + k \ell / 2}} \ee^{2 \pi \ii \brac{\Delta_{r,s} - c/24}} \ch{\sfmod{\ell}{\TypMod{\lambda ; \Delta_{r,s}}}},
\end{equation}
so the T-matrix entries are given by
\begin{equation} \label{eqn:TTypEntries}
\modT_{(\ell,\lambda ; \Delta_{r,s}) (\ell',\lambda' ; \Delta_{r',s'})} = \ee^{\ii \pi \ell \brac{\lambda + k \ell / 2}} \ee^{2 \pi \ii \brac{\Delta_{r,s} - c/24}} \delta_{\ell = \ell'} \func{\delta}{\lambda = \lambda' \bmod{2}} \delta_{r=r'} \delta_{s=s'},
\end{equation}
\end{subequations}
again for $\brac{r,s}$ and $\brac{r',s'}$ restricted by Kac symmetry.
\end{thm}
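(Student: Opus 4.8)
The plan is to work directly from the closed form of the standard characters. By \corref{cor:ChTyp} and the spectral flow rule \eqref{eqn:CharSF}, the character of $\sfmod{\ell}{\TypMod{\lambda ; \Delta_{r,s}}}$ equals $y^k z^{k\ell+\lambda} q^{k\ell^2/4 + \lambda\ell/2} \func{\eta}{q}^{-2} \fvch{r,s}{q} \sum_{n \in \ZZ} z^{2n} q^{n\ell}$. Writing $z = \ee^{2\pi\ii\zeta}$ and $q = \ee^{2\pi\ii\tau}$ and invoking the distributional identity $\sum_{n\in\ZZ}\ee^{2\pi\ii n x}=\sum_{m\in\ZZ}\func{\delta}{x-m}$, the free-field factor becomes the delta lattice $\sum_{m\in\ZZ}\func{\delta}{2\zeta+\ell\tau-m}$. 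The character thus splits into a purely $\tau$-dependent Virasoro-times-$\eta$ piece $\fvch{r,s}{q}/\func{\eta}{q}^2$ and a distributional piece carrying all of the $\theta$- and $\zeta$-dependence, and I would transform each separately.

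For the T-transformation I would substitute $\tau \mapsto \tau+1$. The delta lattice is invariant, since shifting $m \mapsto m-\ell$ merely reindexes the sum. The phase is then collected from the three fractional powers of $q$: the spectral-flow factor $q^{k\ell^2/4+\lambda\ell/2}$ contributes $\ee^{\ii\pi\ell\brac{\lambda + k\ell/2}}$; the combination $\fvch{r,s}{q}/\func{\eta}{q}^2$ contributes $\ee^{2\pi\ii\brac{\Delta_{r,s}-c/24}}$, using $\eta(\tau+1)^2 = \ee^{\ii\pi/6}\eta(\tau)^2$ together with \eqref{eqn:ModularMagic}; and the integer powers $q^{n\ell}$ are inert. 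This yields \eqref{eqn:TTyp} at once.

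The S-transformation is the substantive computation. I would substitute $\theta \mapsto \theta - \zeta^2/\tau$, $\zeta \mapsto \zeta/\tau$ and $\tau \mapsto -1/\tau$. The $\tau$-piece is handled by $\eta(-1/\tau)^2 = (-\ii\tau)\eta(\tau)^2$ and the minimal-model S-transform $\fvch{r,s}{-1/\tau} = \sideset{}{'}\sum_{r',s'}\vmodS_{(r,s)(r',s')}\fvch{r',s'}{\tau}$, which together produce the prefactor $(-\ii\tau)^{-1}$ and the matrix \eqref{eqn:SVir}. The delta functions require care: $\func{\delta}{(2\zeta-\ell-m\tau)/\tau} = \abs{\tau}\func{\delta}{2\zeta-\ell-m\tau}$, and this is the source both of the factor $\abs{\tau}$ in \eqref{eqn:STypEntries} and of the reshuffling of the support, whereby the single sum over $m$ is reinterpreted as a sum over spectral flows $\ell'=-m$. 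Finally I would complete the square in the surviving Gaussian exponentials $\ee^{-2\pi\ii k\zeta^2/\tau}$, $\ee^{2\pi\ii\brac{k\ell+\lambda}\zeta/\tau}$ and $\ee^{-2\pi\ii\brac{k\ell^2/4+\lambda\ell/2}/\tau}$, evaluated on the support $2\zeta = \ell+m\tau$: all the $1/\tau$ terms cancel, leaving the clean phase $\ee^{\ii\pi m\lambda - \ii\pi k m^2\tau/2}$.

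It remains to recognise this distribution as the continuous combination claimed in \eqref{eqn:STypAnsatz}, and I would do so by inserting the target characters into the right-hand side. On the delta support the factors $z^{\lambda'}$ and $q^{\lambda'\ell'/2}$ coming from $\sfmod{\ell'}{\TypMod{\lambda' ; \Delta_{r',s'}}}$ combine with the kernel phase $\ee^{-\ii\pi\ell\lambda'}$ to give $\ee^{\ii\pi\lambda'(m'-\ell)}$, so that $\int_{-1}^1 \ee^{\ii\pi\lambda'(m'-\ell)}\,\dd\lambda' = 2\,\delta_{m'=\ell}$ because $m'-\ell\in\ZZ$. This Kronecker delta collapses the internal sum, cancels the $\tfrac12$ in \eqref{eqn:STypEntries}, and the remaining $\ell'$-dependent phases reduce precisely to $\ee^{\ii\pi m\lambda - \ii\pi k m^2\tau/2}$, matching the direct S-transform and establishing \eqref{eqn:STyp}. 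The projective $\SLG{SL}{2;\ZZ}$-representation statement then follows since $\modS$ and $\modT$ satisfy $\modS^4 = \brac{\modS\modT}^6 = \id$ on $\modarg{\theta}{\zeta}{\tau}$ and the character map intertwines these actions, with $\modS^2$ implementing conjugation $\lambda\mapsto-\lambda$ as a consistency check. The main obstacle throughout is the distributional bookkeeping of the S-transform: extracting the Jacobian $\abs{\tau}$ rather than $\tau$ under the simultaneous rescaling of $\zeta$ and $\tau$, and correctly trading the discrete delta-supported output for the continuous $\lambda'$-integration over standard modules. It is exactly this interchange, powered by the orthogonality of the $\ee^{\ii\pi\lambda'(m'-\ell)}$, that converts a manifestly discrete sum into the integral over the typical continuum.
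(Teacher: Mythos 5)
Your proposal is correct and follows essentially the same route as the paper: write the standard characters as a delta-function lattice weighted by $\fvch{r,s}{\tau}/\func{\eta}{\tau}^2$, transform the Virasoro and $\eta$ factors by their known modular properties, track the Jacobian $\abs{\tau}$ and Gaussian phases on the delta support, and then verify \eqref{eqn:STypAnsatz} by substituting the claimed kernel back in and using the $\lambda'$-integral to produce the Kronecker delta that collapses the internal sum. The only differences are cosmetic sign/relabelling conventions in the $m$-sum, which do not affect the argument.
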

\begin{proof}
We begin by rewriting the standard characters \eqref{ch:Typ} as functions of $\theta$, $\zeta$ and $\tau$:
\begin{equation}
\ch{\TypMod{\lambda ; \Delta_{r,s}}} = \frac{\ee^{2 \pi \ii k \theta} \fvch{r,s}{\tau}}{\func{\eta}{\tau}^2} \sum_{m \in \ZZ} \ee^{\ii \pi m \lambda} \func{\delta}{2 \zeta - m}.
\end{equation}
Applying spectral flow, as in \eqref{eqn:CharSF}, we obtain the general character formula
\begin{align} \label{ch:TypSF}
\ch{\sfmod{\ell}{\TypMod{\lambda ; \Delta_{r,s}}}} &= \frac{\ee^{2 \pi \ii k \brac{\theta + \ell \zeta + \ell^2 \tau / 4}} \fvch{r,s}{\tau}}{\func{\eta}{\tau}^2} \sum_{m \in \ZZ} \ee^{\ii \pi m \lambda} \func{\delta}{2 \zeta + \ell \tau - m} \notag \\
&= \frac{\ee^{2 \pi \ii k \theta} \ee^{-\ii \pi k \ell^2 \tau / 2} \fvch{r,s}{\tau}}{\func{\eta}{\tau}^2} \sum_{m \in \ZZ} \ee^{\ii \pi m \brac{\lambda + k \ell}} \func{\delta}{2 \zeta + \ell \tau - m}.
\end{align}
Sending $\tau$ to $\tau + 1$ and using the known transformation properties of $\vch{r,s}$ and $\eta$, it is now straight-forward to arrive at the T-transformation \eqref{eqn:TTyp}.

Verifying the S-transformation requires a little more work.  First, we apply $\modS$ to \eqref{ch:TypSF}:
\begin{align} \label{eqn:Goal}
\modS &\set{\ch{\sfmod{\ell}{\TypMod{\lambda ; \Delta_{r,s}}}}} = \frac{\ee^{2 \pi \ii k \brac{\theta - \zeta^2 / \tau}} \ee^{\ii \pi k \ell^2 / 2 \tau} \fvch{r,s}{-1 / \tau}}{\func{\eta}{-1 / \tau}^2} \sum_{m \in \ZZ} \ee^{\ii \pi m \brac{\lambda + k \ell}} \func{\delta}{\frac{2 \zeta - \ell - m \tau}{\tau}} \notag \\
&= \frac{\ee^{2 \pi \ii k \theta} \ee^{\ii \pi k \ell^2 / 2 \tau}}{-\ii \tau \: \func{\eta}{\tau}^2} \sideset{}{'} \sum_{r',s'} \vmodS_{(r,s) (r',s')} \fvch{r',s'}{\tau} \cdot \sum_{m \in \ZZ} \ee^{\ii \pi m \brac{\lambda + k \ell}} \ee^{-\ii \pi k \brac{\ell + m \tau}^2 / 2 \tau} \abs{\tau} \func{\delta}{2 \zeta - \ell - m \tau} \notag \\
&= \frac{\abs{\tau}}{-\ii \tau} \frac{\ee^{2 \pi \ii k \theta}}{\func{\eta}{\tau}^2} \sideset{}{'} \sum_{r',s'} \vmodS_{(r,s) (r',s')} \fvch{r',s'}{\tau} \sum_{m \in \ZZ} \ee^{-\ii \pi m \lambda} \ee^{-\ii \pi k m^2 \tau / 2} \func{\delta}{2 \zeta + m \tau - \ell}.
\end{align}
Now substitute \eqref{eqn:STypEntries} and \eqref{ch:TypSF} (with $\ell$, $\lambda$, $r$ and $s$ replaced by their primed counterparts) into \eqref{eqn:STypAnsatz}.  The integral over $\lambda'$ is easy to evaluate, resulting in $\delta_{\ell = m}$, and this then allows one to perform the sum over $m$.  Simplifying, and relabelling $\ell'$ as $m$, we recover \eqref{eqn:Goal}.
\end{proof}

\noindent We will refer to the quantities $\modS_{(\ell,\lambda ; \Delta_{r,s}) (\ell',\lambda' ; \Delta_{r',s'})}$ and $\modT_{(\ell,\lambda ; \Delta_{r,s}) (\ell',\lambda' ; \Delta_{r',s'})}$ as \emph{matrix elements}, even though $\lambda$ and $\lambda'$ parametrise a continuous range.  Note that the integration range of $\lambda'$ in \eqref{eqn:STypAnsatz} is only required to be a fundamental domain for $\RR / 2 \ZZ$ and that any other interval of length $2$ would suffice.

\begin{cor} \label{cor:TypModProps}
The S- and T-matrices of \thmref{thm:TypMod} are symmetric and unitary.  Moreover,
\begin{equation}
\modS_{(-\ell,-\lambda ; \Delta_{r,s}) (-\ell',-\lambda' ; \Delta_{r',s'})} = \modS_{(\ell,\lambda ; \Delta_{r,s}) (\ell',\lambda' ; \Delta_{r',s'})}
\end{equation}
and
\begin{equation}
\brac{\modS^2}_{(\ell,\lambda ; \Delta_{r,s}) (\ell',\lambda' ; \Delta_{r',s'})} = \frac{\abs{\tau}^2}{-\tau^2} \delta_{\ell' = -\ell} \func{\delta}{\lambda' = -\lambda \bmod{2}} \delta_{r' = r} \delta_{s' = s}.
\end{equation}
$\modS^2$ is therefore conjugation at the level of characters, up to a phase.
\end{cor}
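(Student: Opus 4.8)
The plan is to verify each of the three claimed properties --- symmetry, unitarity, and the explicit form of $\modS^2$ --- directly from the S-matrix formula \eqref{eqn:STypEntries}, leaning heavily on the corresponding facts for the Virasoro minimal model S-matrix $\vmodS$, which are standard. First I would record that the prefactor
\begin{equation}
\frac{1}{2} \frac{\abs{\tau}}{-\ii \tau} \ee^{-\ii \pi \brac{k \ell \ell' + \ell \lambda' + \ell' \lambda}}
\end{equation}
is manifestly invariant under the simultaneous exchange $\brac{\ell,\lambda} \leftrightarrow \brac{\ell',\lambda'}$, since the exponent is symmetric in the two index pairs. Combined with the symmetry $\vmodS_{(r,s)(r',s')} = \vmodS_{(r',s')(r,s)}$ of the minimal model S-matrix, this gives symmetry of $\modS$ immediately. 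The symmetry of $\modT$ is even more transparent from \eqref{eqn:TTypEntries}, as every factor is already diagonal. The sign-flip identity $\modS_{(-\ell,-\lambda;\Delta_{r,s})(-\ell',-\lambda';\Delta_{r',s'})} = \modS_{(\ell,\lambda;\Delta_{r,s})(\ell',\lambda';\Delta_{r',s'})}$ follows because negating all four of $\ell,\ell',\lambda,\lambda'$ leaves the exponent $k\ell\ell' + \ell\lambda' + \ell'\lambda$ unchanged, while $\vmodS$ carries no dependence on these labels.

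The real content is in unitarity and in computing $\modS^2$, and here I would treat the continuous labels $\brac{\ell,\lambda}$ and the discrete minimal-model labels $\brac{r,s}$ as factorising. The key step is to compute the ``matrix product'' $\sum_{\ell'} \sideset{}{'}\sum_{r',s'} \int_{-1}^{1} \modS_{(\ell,\lambda;\Delta_{r,s})(\ell',\lambda';\Delta_{r',s'})} \overline{\modS_{(\ell'',\lambda'';\Delta_{r'',s''})(\ell',\lambda';\Delta_{r',s'})}} \: \dd\lambda'$ and show it collapses to the appropriate delta functions. For the Virasoro part I would invoke the known unitarity $\sideset{}{'}\sum_{r',s'} \vmodS_{(r,s)(r',s')} \overline{\vmodS_{(r'',s'')(r',s')}} = \delta_{r=r''}\delta_{s=s''}$ (with the primed sum running over Kac representatives). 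For the continuous part, the crucial mechanism is that the $\lambda'$-integral of $\ee^{-\ii\pi(\ell\lambda' )}\ee^{+\ii\pi(\ell''\lambda')}$ over a length-$2$ fundamental domain produces $\delta_{\ell = \ell''}$, after which the remaining sum over $\ell'$ of $\ee^{-\ii\pi\ell'(\lambda - \lambda'')}$ yields $\func{\delta}{\lambda = \lambda'' \bmod 2}$ via the distributional identity $\sum_{n\in\ZZ}\ee^{\ii\pi n x} = \sum_{m}\func{\delta}{x-2m}$ already used in \secref{sec:TypChar}. The $\abs{\tau}/(-\ii\tau)$ prefactors multiply to $\abs{\tau}^2/\abs{\tau}^2 = 1$ in modulus once one pairs each factor with its conjugate, and the factor of $\tfrac{1}{2}$ compensates the length-$2$ integration domain.

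For $\modS^2$ I would instead pair $\modS$ with itself (no conjugation), so the two $\abs{\tau}/(-\ii\tau)$ prefactors combine to $\abs{\tau}^2/(-\tau^2)$, matching the claimed answer, while the Virasoro factor contributes $\brac{\vmodS^2}_{(r,s)(r',s')} = \delta_{r=r'}\delta_{s=s'}$ via the charge-conjugation property of the minimal model (which is trivial for $\minmod{u}{v}$ in the sense that $\brac{\vmodS}^2$ is the identity on Kac representatives). The continuous part now produces $\delta_{\ell' = -\ell}$ and $\func{\delta}{\lambda' = -\lambda \bmod 2}$ because the relevant exponent becomes $-\ii\pi(k\ell\ell' + \ell\lambda' + \ell'\lambda)$ summed against a second copy with the same sign, forcing the reflection $\brac{\ell',\lambda'} = \brac{-\ell,-\lambda}$. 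Comparing with the conjugation action $\conjmod{\TypMod{\lambda;\Delta_{r,s}}} = \TypMod{-\lambda;\Delta_{r,s}}$ and $\conjmod{\sfmod{\ell}{\mathcal{M}}} = \sfmod{-\ell}{\conjmod{\mathcal{M}}}$, this identifies $\modS^2$ with conjugation up to the phase $\abs{\tau}^2/(-\tau^2)$, as asserted. I expect the main obstacle to be purely bookkeeping: carefully tracking the $\abs{\tau}/(-\ii\tau)$ factors and the factor of $\tfrac{1}{2}$ against the length-$2$ integration domain so that the normalisations come out exactly, and ensuring the distributional manipulations (interchanging the $\lambda'$-integral with the $\ell'$-sum) are legitimate as identities of algebraic distributions rather than ordinary functions.
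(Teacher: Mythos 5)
Your proposal is correct and is essentially the argument the paper intends: the corollary is stated without proof as a direct verification from the explicit entries \eqref{eqn:STypEntries} and \eqref{eqn:TTypEntries}, using symmetry, reality and unitarity of the Virasoro minimal model S-matrix together with the orthogonality relations $\int_{-1}^{1} \ee^{-\ii \pi \brac{\ell - \ell''} \lambda'} \, \dd \lambda' = 2 \delta_{\ell = \ell''}$ and $\sum_{\ell' \in \ZZ} \ee^{-\ii \pi \ell' \brac{\lambda - \lambda''}} = 2 \func{\delta}{\lambda = \lambda'' \bmod 2}$, exactly as you describe. Your bookkeeping of the two factors of $\tfrac{1}{2}$ against the two factors of $2$ from these relations, and of $\abs{\tau}/(-\ii\tau)$ versus its conjugate or its square, all checks out.
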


It is important to note that the representation of $\SLG{SL}{2 ; \ZZ}$ on the standard characters is only projective because of the phase $\abs{\tau} / \ii \tau$ appearing in \eqref{eqn:STypEntries}.  One can easily check that $\modS^4$ and $\brac{\modS \modT}^6$ are proportional to the identity transformation in this representation, with the proportionality constants being this phase to the fourth and sixth powers, respectively.  The fact that the S-matrix entries contain an explicit $\tau$-dependence through this phase is not really worrying because, as was explained in \cite{CreRel11}, this phase will cancel when pairing chiral and antichiral components to form (bulk) modular invariants.  Likewise, $\modS^2$ and conjugation differ by a phase for chiral modules, but are identical in the bulk.  Most importantly, this phase will also cancel when applying the Verlinde formula (\secref{sec:Verlinde}).

\section{Atypical Characters} \label{sec:AtypChar}

In this section, we return to the determination of the characters of the atypical irreducible admissibles.  While we have already computed the characters of the $\DiscMod{r,s}^+$ in \propref{prop:ChDrs}, and the remaining irreducible characters follow from applying spectral flow, there is still the issue of disjoint convergence regions to deal with.  Instead of revisiting this, we shall instead employ a well-known trick \cite{RozSTM93} in which atypical characters are computed as (infinite) linear combinations of limits of standard characters.  In this formalism, a (topological) basis for the linear span of the admissible characters is provided by those of the standard modules, recalling that these include the atypical indecomposables $\sfmod{\ell}{\TypMod{r,s}^+}$ and $\sfmod{\ell}{\TypMod{u-r,v-s}^+}$.

To verify that the atypical characters may indeed be expressed as (infinite) linear combinations of elements from this character basis, we follow \cite{CreRel11} in constructing resolutions for the atypical modules in terms of the indecomposables $\sfmod{\ell}{\TypMod{r,s}^+}$ and $\sfmod{\ell}{\TypMod{u-r,v-s}^+}$.  These are (infinite) exact sequences whose terms are all indecomposables of this form, except for the last two which are the atypical module and the zero module (in that order).  Their construction follows easily from repeatedly splicing the short exact sequences \eqref{ES:Ers} for the indecomposables with their spectral flow versions.

To begin, we apply spectral flow to the first sequence of \eqref{ES:Ers} so as to get $\DiscMod{r,s}^+$ as the quotient.  This preserves exactness.  However, because of the slight difference in the spectral flow orbit structures (see \eqref{eqn:SFRelations}), the results differ according as to whether $s=v-1$ or not:
\begin{subequations} \label{es:LD}
\begin{align}
\dses{\sfmod{}{\DiscMod{r,s+1}^+}}{\sfmod{}{\TypMod{r,s+1}^+}}{\DiscMod{r,s}^+}& & &\text{($s \neq v-1$),} \label{es:Drs} \\
\dses{\sfmod{2}{\DiscMod{u-r,1}^+}}{\sfmod{2}{\TypMod{u-r,1}^+}}{\DiscMod{r,v-1}^+}& & &\text{($s = v-1$).}
\end{align}
Since $\DiscMod{r,v-1}^+ \cong \sfmod{}{\IrrMod{u-r,0}}$, we also obtain
\begin{equation} \label{es:Lr}
\dses{\sfmod{}{\DiscMod{r,1}^+}}{\sfmod{}{\TypMod{r,1}^+}}{\IrrMod{r,0}}.
\end{equation}
\end{subequations}
Splicing these short exact sequences, we arrive at the desired resolutions.

\begin{prop} \label{prop:Resolutions}
Let $k$ be an admissible level with $v>1$.  Then, the atypical irreducible module $\IrrMod{r,0} = \IrrMod{r-1}$ has the following resolution:
\begin{subequations} \label{eqn:LES}
\begin{multline} \label{eqn:ResL}
\cdots \lra \sfmod{3v-1}{\TypMod{r,v-1}^+} \lra \cdots \lra \sfmod{2v+2}{\TypMod{r,2}^+} \lra \sfmod{2v+1}{\TypMod{r,1}^+} \\
\lra \sfmod{2v-1}{\TypMod{u-r,v-1}^+} \lra \cdots \lra \sfmod{v+2}{\TypMod{u-r,2}^+} \lra \sfmod{v+1}{\TypMod{u-r,1}^+} \\
\lra \sfmod{v-1}{\TypMod{r,v-1}^+} \lra \cdots \lra \sfmod{2}{\TypMod{r,2}^+} \lra \sfmod{}{\TypMod{r,1}^+} \lra \IrrMod{r,0} \lra 0.
\end{multline}
For $\DiscMod{r,s}^+$, 
we have, for $s \neq v-1$, instead
\begin{multline}
\cdots \lra \sfmod{3v-s-1}{\TypMod{r,v-1}^+} \lra \cdots \lra \sfmod{2v-s+2}{\TypMod{r,2}^+} \lra \sfmod{2v-s+1}{\TypMod{r,1}^+} \\
\lra \sfmod{2v-s-1}{\TypMod{u-r,v-1}^+} \lra \cdots \lra \sfmod{v-s+2}{\TypMod{u-r,2}^+} \lra \sfmod{v-s+1}{\TypMod{u-r,1}^+} \\
\lra \sfmod{v-s-1}{\TypMod{r,v-1}^+} \lra \cdots \lra \sfmod{2}{\TypMod{r,s+2}^+} \lra \sfmod{}{\TypMod{r,s+1}^+} \lra \DiscMod{r,s}^+ \lra 0.
\end{multline}
(The resolution for $s=v-1$ may be obtained from \eqref{eqn:ResL} by applying spectral flow.)
\end{subequations}
\end{prop}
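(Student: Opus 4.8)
The plan is to construct each resolution by repeatedly \emph{splicing} the short exact sequences \eqref{es:LD}, following \cite{CreRel11}. Recall the splicing principle: given $\dses{A}{B}{C}$ together with a second sequence $\dses{A'}{B'}{A}$ resolving the submodule $A$, the composite $B' \twoheadrightarrow A \hookrightarrow B$ yields an exact three-term piece $B' \lra B \lra C \lra 0$, and the image of $B' \lra B$ is exactly $A$, the kernel of $B \lra C$. Iterating resolves $C$ by the middle terms. Since the middle terms of every sequence in \eqref{es:LD} are standard indecomposables of the form $\sfmod{\ell}{\TypMod{r',s'}^+}$, the output will have the required shape \emph{provided} the submodule left over at each stage is again of $\DiscMod{}^+$-type up to spectral flow, so that the process can be continued. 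I will also use that the spectral flow functor $\sfaut$ is exact (it is an invertible twist of the action, hence an autoequivalence), so applying a power of $\sfaut$ to any sequence in \eqref{es:LD} again gives a short exact sequence.

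First I would treat $\IrrMod{r,0}$. The base of the resolution is \eqref{es:Lr}, which supplies $\sfmod{}{\TypMod{r,1}^+} \lra \IrrMod{r,0} \lra 0$ with leftover submodule $\sfmod{}{\DiscMod{r,1}^+}$. To resolve this, apply $\sfaut$ to \eqref{es:Drs} (with $s=1$) to obtain $\dses{\sfmod{2}{\DiscMod{r,2}^+}}{\sfmod{2}{\TypMod{r,2}^+}}{\sfmod{}{\DiscMod{r,1}^+}}$ and splice. Iterating this for $s=1,2,\ldots,v-2$, applying $\sfaut^{s}$ to \eqref{es:Drs} at each step, produces the first block of terms $\sfmod{}{\TypMod{r,1}^+},\sfmod{2}{\TypMod{r,2}^+},\ldots,\sfmod{v-1}{\TypMod{r,v-1}^+}$ and leaves the submodule $\sfmod{v-1}{\DiscMod{r,v-1}^+}$.

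At this point $s$ has reached $v-1$, so the iteration must switch to the second sequence of \eqref{es:LD}. Applying $\sfaut^{v-1}$ to $\dses{\sfmod{2}{\DiscMod{u-r,1}^+}}{\sfmod{2}{\TypMod{u-r,1}^+}}{\DiscMod{r,v-1}^+}$ and splicing yields the next term $\sfmod{v+1}{\TypMod{u-r,1}^+}$: the spectral-flow index jumps by $2$ and the first Kac label flips $r\mapsto u-r$, exactly matching \eqref{eqn:ResL}. The leftover submodule is $\sfmod{v+1}{\DiscMod{u-r,1}^+}$, so one repeats the whole procedure with $u-r$ in place of $r$. After a second block and a second $s=v-1$ transition — now using $\dses{\sfmod{2}{\DiscMod{r,1}^+}}{\sfmod{2}{\TypMod{r,1}^+}}{\DiscMod{u-r,v-1}^+}$ with $\sfaut^{2v-1}$ — the leftover submodule becomes $\sfmod{2v+1}{\DiscMod{r,1}^+}=\sfaut^{2v}\tbrac{\sfmod{}{\DiscMod{r,1}^+}}$, i.e. the starting configuration translated by $\sfaut^{2v}$. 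Hence the construction never terminates and gives the claimed $2v$-periodic infinite resolution, with exactness automatic at every term from the splicing. The resolution for $\DiscMod{r,s}^+$ with $s\neq v-1$ is produced by the identical argument started from \eqref{es:Drs} instead of \eqref{es:Lr}; the only change is that the first block of $\TypMod{r,\bullet}^+$ terms begins at second Kac label $s+1$, which fixes all the offsets displayed in the second sequence of \eqref{eqn:LES}. Finally, the $s=v-1$ case follows at once by applying $\sfaut$ to the resolution \eqref{eqn:ResL} of $\IrrMod{u-r,0}$, via the isomorphism $\DiscMod{r,v-1}^+\cong\sfmod{}{\IrrMod{u-r,0}}$ recorded above \eqref{es:Lr}.

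I expect the main obstacle to be the careful bookkeeping of spectral-flow indices across the $s=v-1$ transitions, where the index increments by $2$ rather than $1$ and the first Kac label toggles between $r$ and $u-r$. One must check that these jumps reproduce the exact offsets shown in \eqref{eqn:LES} and, in particular, that the spectral-flow indices \emph{absent} from the resolution are precisely the positive multiples of $v$. Confirming that the sequence of leftover submodules closes up after translation by $\sfaut^{2v}$ — so that the resolution is genuinely infinite and uniform — is the crux; once the index arithmetic is pinned down, everything else is formal homological algebra.
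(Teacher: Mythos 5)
Your proposal is correct and follows exactly the paper's route: the paper's entire proof is the remark that the resolutions are obtained by repeatedly splicing the short exact sequences \eqref{es:LD} with their spectral flow images, and your index bookkeeping (blocks of length $v-1$, the $r\mapsto u-r$ flip with a jump of $2$ in the spectral flow index at each $s=v-1$ transition, and $2v$-periodicity) correctly reproduces \eqref{eqn:LES}.
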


\noindent We remark that it is easy to derive similar resolutions involving the $\sfmod{\ell}{\TypMod{r,s}^-}$.  However, the character identity $\ch{\sfmod{\ell}{\TypMod{r,s}^+}} = \ch{\sfmod{\ell}{\TypMod{u-r,v-s}^-}}$ implies that this will not lead to anything new.

These resolutions are impressive, but they are really just a means of combining all the information contained in the short exact sequences \eqref{es:LD}.  For the $\IrrMod{r,0}$, this turns out to be very convenient for streamlining the computations of the following sections, but for the $\DiscMod{r,s}^+$, the above resolution is a little complicated and we will find it more convenient to work directly with the short exact sequences.  With this in mind, we present the corresponding character identities that we will need in what follows.

\begin{cor} \label{cor:AtypChars}
Let $k$ be an admissible level with $v>1$.  Then, the characters of the irreducible atypical modules are related to the characters of the indecomposable standard modules as follows:
\begin{subequations} \label{eqn:AtypChars}
\begin{align}
\ch{\IrrMod{r,0}} &= \sum_{s=1}^{v-1} \brac{-1}^{s-1} \sum_{\ell = 0}^{\infty} \set{\ch{\sfmod{2v \ell + s}{\TypMod{r,s}^+}} - \ch{\sfmod{2v \brac{\ell + 1} - s}{\TypMod{u-r,v-s}^+}}}, \label{eqn:ChL} \\
\ch{\DiscMod{r,s}^+} &= \ch{\sfmod{}{\TypMod{r,s+1}^+}} - \ch{\sfmod{}{\DiscMod{r,s+1}^+}} \qquad \text{(\(s \neq v-1\)).} \label{eqn:Induct}
\end{align}
\end{subequations}
Of course, spectral flow may be used to obtain the expressions for the remaining atypicals.
\end{cor}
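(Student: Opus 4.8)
The plan is to deduce both identities from the \emph{additivity of characters on short exact sequences}: whenever $\dses{A}{B}{C}$ is exact, the $(h_0,L_0)$-graded characters satisfy $\ch{B} = \ch{A} + \ch{C}$. Every map in the resolutions of \propref{prop:Resolutions} and in the sequences \eqref{es:LD} is grade-preserving (the spectral-flow shifts of \eqref{eqn:CharSF} having been absorbed into the modules), so this additivity, together with some sign-and-index bookkeeping, is all the combinatorics of the two formulae require.

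The second identity \eqref{eqn:Induct} is immediate: applying additivity to the sequence \eqref{es:Drs}, namely $\dses{\sfmod{}{\DiscMod{r,s+1}^+}}{\sfmod{}{\TypMod{r,s+1}^+}}{\DiscMod{r,s}^+}$, gives $\ch{\sfmod{}{\TypMod{r,s+1}^+}} = \ch{\sfmod{}{\DiscMod{r,s+1}^+}} + \ch{\DiscMod{r,s}^+}$, which rearranges to \eqref{eqn:Induct}. For the first identity \eqref{eqn:ChL}, I would start from \eqref{es:Lr}, obtaining $\ch{\IrrMod{r,0}} = \ch{\sfmod{}{\TypMod{r,1}^+}} - \ch{\sfmod{}{\DiscMod{r,1}^+}}$, and then repeatedly eliminate the trailing $\ch{\sfmod{\ell}{\DiscMod{\cdot,\cdot}^+}}$ term by substituting the spectral-flowed image of the appropriate sequence from \eqref{es:LD}. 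Each substitution contributes one further $\TypMod{}^+$ character with an alternating sign and raises the spectral-flow index; as $s$ passes $v-1$ the braiding encoded in \eqref{es:LD} forces the switch from $\TypMod{r,\cdot}^+$ to $\TypMod{u-r,\cdot}^+$, which is exactly the alternation between the two families in \eqref{eqn:ChL}. Equivalently, one reads the Euler--Poincar\'e sum $\ch{\IrrMod{r,0}} = \sum_{i\geq 0}(-1)^i \ch{P_i}$ off the resolution \eqref{eqn:ResL} and reorganises the terms $P_i$ into super-blocks of length $2(v-1)$: the term with $i = (2v-2)\ell + (s-1)$ is $\sfmod{2v\ell+s}{\TypMod{r,s}^+}$ with sign $(-1)^{s-1}$, while the paired $\TypMod{u-r,v-s}^+$ term sits at spectral flow $2v(\ell+1)-s$ with sign $(-1)^{s}$. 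Collecting these over $s=1,\dots,v-1$ and $\ell\geq 0$ reproduces \eqref{eqn:ChL}; this matching of signs and spectral-flow labels is routine but must be carried out attentively.

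The genuine obstacle is the \emph{convergence} of the infinite sum in \eqref{eqn:ChL}. Its left-hand side is the honest graded character of a highest weight module, whereas each standard character $\ch{\sfmod{\ell}{\TypMod{r,s}^+}}$ is only meaningful as an algebraic distribution — indeed its meromorphic avatar vanishes by \eqref{eqn:ZeroCharacter} — so the equality must be read in the completed space of such distributions. I would make this precise by controlling the telescoping remainder: after eliminating the first $N$ terms, the difference between $\ch{\IrrMod{r,0}}$ and the partial sum is $\pm\ch{\sfmod{N}{\DiscMod{\cdot,\cdot}^+}}$, and the claim reduces to showing this tends to zero as $N\to\infty$. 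The delicate point is the grading in which the limit is taken: tracking the joint $(h_0,L_0)$-grading, whether the weight-$\mu$, dimension-$\Delta$ component of $\sfmod{N}{\DiscMod{\cdot,\cdot}^+}$ survives is governed by the quadratic $\tfrac14 N^2 k - \tfrac12 N \mu$, so for $k<0$ only finitely many terms contribute to each bigraded piece and convergence is immediate. For $k>0$ this quadratic grows the wrong way and individual terms no longer vanish grade by grade; there the limit must instead be extracted from the cancellation inherent in the paired structure of \eqref{eqn:ChL} (the same cancellation that makes $\ch{\TypMod{r,s}^+}$ meromorphically trivial), most naturally by arguing directly with the distributional form supplied by \propref{prop:ChErs}. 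Pinning down this convergence in a uniform way across all admissible $k$ is the step I expect to demand the most care.
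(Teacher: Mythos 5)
Your argument is essentially the paper's own (the corollary is stated there without proof, as an immediate consequence of \propref{prop:Resolutions}): \eqref{eqn:Induct} is character additivity applied to \eqref{es:Drs}, \eqref{eqn:ChL} is the Euler--Poincar\'e sum read off the resolution \eqref{eqn:ResL}, and your sign and spectral-flow bookkeeping checks out against that resolution. On the one point where you go further --- convergence --- you are actually more careful than the source: the paper merely asserts that each $(h_0,L_0)$-graded component receives contributions from only finitely many terms, which is immediate for $k<0$ but, exactly as your quadratic shows, fails for positive admissible levels (e.g.\ at $k=\tfrac12$ every $\sfmod{4\ell+1}{\TypMod{1,1}^+}$ contributes to the vacuum weight space), so there the sum must indeed be extracted from the pairwise cancellations you describe.
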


\noindent Note that the character formula \eqref{eqn:ChL} is convergent in the sense that the multiplicity of each weight space only receives contributions from finitely many of the characters of the indecomposables $\sfmod{\ell}{\TypMod{r,s}^+}$ and $\sfmod{\ell}{\TypMod{u-r,v-s}^+}$.

\begin{ex}
When $k=-\tfrac{1}{2}$, so $u=3$ and $v=2$, there are two atypical spectral flow orbits which we may take to be represented by $\IrrMod{0}$ and $\IrrMod{1}$.  The resolutions of \propref{prop:Resolutions} are
\begin{equation}
\begin{aligned}
\cdots \lra \sfmod{7}{\TypMod{2,1}^+} \lra \sfmod{5}{\TypMod{1,1}^+} \lra \sfmod{3}{\TypMod{2,1}^+} \lra \sfmod{}{\TypMod{1,1}^+} \lra \IrrMod{0} \lra &0, \\
\cdots \lra \sfmod{7}{\TypMod{1,1}^+} \lra \sfmod{5}{\TypMod{2,1}^+} \lra \sfmod{3}{\TypMod{1,1}^+} \lra \sfmod{}{\TypMod{2,1}^+} \lra \IrrMod{1} \lra &0
\end{aligned}
\end{equation}
and the character formulae from \corref{cor:AtypChars} become
\begin{equation}
\begin{aligned}
\ch{\IrrMod{0}} &= \sum_{\ell = 0}^{\infty} \set{\ch{\sfmod{4 \ell + 1}{\TypMod{1,1}^+}} - \ch{\sfmod{4 \ell + 3}{\TypMod{2,1}^+}}} = \sum_{\ell = 0}^{\infty} \brac{-1}^{\ell} \ch{\sfmod{2 \ell + 1}{\TypMod{\ell + 1/2 ; -1/8}^+}}, \\
\ch{\IrrMod{1}} &= \sum_{\ell = 0}^{\infty} \set{\ch{\sfmod{4 \ell + 1}{\TypMod{2,1}^+}} - \ch{\sfmod{4 \ell + 3}{\TypMod{1,1}^+}}} = \sum_{\ell = 0}^{\infty} \brac{-1}^{\ell} \ch{\sfmod{2 \ell + 1}{\TypMod{\ell - 1/2 ; -1/8}^+}}.
\end{aligned}
\end{equation}
Here, we note that $\lambda_{1,1} = -\tfrac{3}{2}$, $\lambda_{2,1} = -\tfrac{1}{2}$ and $\Delta_{1,1} = \Delta_{2,1} = -\tfrac{1}{8}$ (see \figref{fig:Tables}).  In this way, we recover the special case considered in \cite[Sec.~3.2]{CreMod12}.
\end{ex}

\section{Modular Transformations for Atypical Characters} \label{sec:ModAtyp}

It is now relatively easy to obtain the modular transformations of the characters of the $\sfmod{\ell}{\IrrMod{r,0}}$ from \corref{cor:AtypChars}.  The main difficulty is in simplifying the finite sums which appear in the character formula \eqref{eqn:ChL}.  This step may be overcome by using the following identity whose proof is straight-forward and best left to symbolic algebra packages.

\begin{lem} \label{lem:UsefulIdentity}
Given $R,S,u \in \ZZ$ and $v \in \ZZ \setminus \set{0}$, we have the following identity of functions of $\mu \in \RR$:
\begin{equation}
\brac{\cos \brac{\pi \mu} + \brac{-1}^R \cos \frac{u \pi S}{v}} \sum_{s=1}^{v-1} \brac{-1}^{\brac{R-1} \brac{s-1}} \sin \bigl( \brac{v-s} \pi \mu \bigr) \sin \frac{u \pi s S}{v} = \frac{1}{2} \sin \brac{v \pi \mu} \sin \frac{u \pi S}{v}.
\end{equation}
\end{lem}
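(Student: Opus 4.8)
The plan is to collapse both parity cases of $R$ and the whole $s$-sum into a single telescoping computation. First I would set $\alpha = \pi\mu$ and $\beta = \tfrac{u\pi S}{v}$, so that the target identity reads $\brac{\cos\alpha + \brac{-1}^R\cos\beta}\sum_{s=1}^{v-1}\brac{-1}^{\brac{R-1}\brac{s-1}}\sin\bigl(\brac{v-s}\alpha\bigr)\sin\brac{s\beta} = \tfrac12\sin\brac{v\alpha}\sin\beta$, and I would record the one arithmetic fact that does all the work, namely $\sin\brac{v\beta} = \sin\brac{u\pi S} = 0$ since $uS \in \ZZ$.

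Next I would remove the two $R$-dependent signs simultaneously by a shift of $\beta$. Using $\brac{-1}^{s-1}\sin\brac{s\beta} = -\sin\bigl(s\brac{\beta+\pi}\bigr)$ and $\cos\brac{\beta+\pi} = -\cos\beta$, one checks that for both parities of $R$ the prefactor becomes $\cos\alpha - \cos\gamma$ and the summand becomes $\epsilon\sin\bigl(\brac{v-s}\alpha\bigr)\sin\brac{s\gamma}$, where $\gamma = \beta$, $\epsilon = +1$ when $R$ is odd and $\gamma = \beta+\pi$, $\epsilon = -1$ when $R$ is even; in either case $\sin\brac{v\gamma} = 0$ because $v\gamma \in \pi\ZZ$. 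This reduces everything to the single ``core'' identity: for all real $\alpha,\gamma$ with $\sin\brac{v\gamma} = 0$,
\[
\brac{\cos\alpha - \cos\gamma}\sum_{s=1}^{v-1}\sin\bigl(\brac{v-s}\alpha\bigr)\sin\brac{s\gamma} = \tfrac12\sin\brac{v\alpha}\sin\gamma.
\]

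To prove the core identity I would multiply by $2\brac{\cos\alpha - \cos\gamma}$ and expand with the product-to-sum rules $2\cos\alpha\sin\bigl(\brac{v-s}\alpha\bigr) = \sin\bigl(\brac{v-s+1}\alpha\bigr) + \sin\bigl(\brac{v-s-1}\alpha\bigr)$ and $2\cos\gamma\sin\brac{s\gamma} = \sin\bigl(\brac{s+1}\gamma\bigr) + \sin\bigl(\brac{s-1}\gamma\bigr)$. Setting $a_s = \sin\bigl(\brac{v-s+1}\alpha\bigr)\sin\brac{s\gamma}$ and $b_s = \sin\bigl(\brac{v-s-1}\alpha\bigr)\sin\brac{s\gamma}$, the four resulting terms regroup so that the resulting expression equals $\brac{a_s - a_{s+1}} + \brac{b_s - b_{s-1}}$. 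Summing over $s = 1,\ldots,v-1$ telescopes to the boundary terms $a_1 - a_v + b_{v-1} - b_0$, three of which vanish ($a_v = \sin\alpha\sin\brac{v\gamma} = 0$ by the key fact, while $b_{v-1}$ and $b_0$ each carry a factor $\sin 0$), leaving $a_1 = \sin\brac{v\alpha}\sin\gamma$; dividing by the factor $2$ gives the core identity. Feeding this back, the odd case is immediate, and in the even case the extra $\epsilon = -1$ is cancelled by $\sin\gamma = \sin\brac{\beta+\pi} = -\sin\beta$, so both cases return $\tfrac12\sin\brac{v\alpha}\sin\beta$.

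The main obstacle, and really the only nonroutine step, is spotting the regrouping of the four expanded terms into the two telescoping pairs — in particular recognising that the ``third'' term is exactly $a_{s+1}$ and the ``fourth'' is exactly $b_{s-1}$; once this is seen, the rest is bookkeeping. A secondary point to watch is that the crucial cancellation of $a_v$ relies on $\sin\brac{v\gamma} = 0$, which must be verified in both parities (it holds, since $v\gamma$ equals $u\pi S$ or $u\pi S + v\pi$, both integer multiples of $\pi$); and for the degenerate range $v = 1$ the sum is empty and both sides vanish, consistently with the key fact.
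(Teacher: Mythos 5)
Your proof is correct, and it supplies something the paper deliberately omits: the text merely declares the identity ``straight-forward and best left to symbolic algebra packages'' and gives no argument at all. Your route is a genuine, human-checkable derivation. The two good ideas are (i) isolating the single arithmetic input $\sin\brac{v\beta} = \sin\brac{u\pi S} = 0$ with $\beta = u\pi S/v$, which shows the lemma really only depends on $v\beta \in \pi\ZZ$ rather than on the specific form $u\pi S/v$, and (ii) absorbing both parities of $R$ into one core identity via the shift $\gamma = \beta + \pi$ (which preserves $\sin\brac{v\gamma}=0$), after which the product-to-sum expansion of $2\brac{\cos\alpha - \cos\gamma}\sin\tbrac{\brac{v-s}\alpha}\sin\brac{s\gamma}$ telescopes as $\brac{a_s - a_{s+1}} + \brac{b_s - b_{s-1}}$ exactly as you say, with $a_v$, $b_{v-1}$, $b_0$ all vanishing and $a_1 = \sin\brac{v\alpha}\sin\gamma$ surviving; the residual sign $\epsilon = -1$ in the even case is indeed cancelled by $\sin\brac{\beta+\pi} = -\sin\beta$. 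I verified each step. The only caveat is one you half-address: the lemma as stated allows $v$ negative, where under the usual empty-sum convention the left-hand side vanishes but the right-hand side need not (e.g.\ $v=-2$, $u=S=1$), so your argument covers $v \geqslant 1$ only. This is best read as a defect of the lemma's hypotheses rather than of your proof --- the paper only ever invokes it with $v \geqslant 2$ --- but it is worth flagging that the statement should really say $v \in \ZZ_{\geqslant 1}$.
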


\begin{thm} \label{thm:AtypMod}
Let $k$ be an admissible level with $v>1$.  Then, the characters of the atypical irreducible modules $\sfmod{\ell}{\IrrMod{r,0}} = \sfmod{\ell}{\IrrMod{r-1}}$ have the following S-transformations:
\begin{subequations} \label{eqn:SAtyp}
\begin{equation}
\modS \set{\ch{\sfmod{\ell}{\IrrMod{r,0}}}} = \sum_{\ell' \in \ZZ} \sideset{}{'} \sum_{r',s'} \int_{-1}^1 \modS_{\overline{(\ell;r,0)} (\ell',\lambda' ; \Delta_{r',s'})} \ch{\sfmod{\ell'}{\TypMod{\lambda' ; \Delta_{r',s'}}}} \: \dd \lambda'.
\end{equation}
Here, the atypical S-matrix entries are given by
\begin{equation} \label{eqn:SAtypEntries}
\modS_{\overline{(\ell;r,0)} (\ell',\lambda' ; \Delta_{r',s'})} = \frac{1}{2} \frac{\abs{\tau}}{-\ii \tau} \frac{\ee^{-\ii \pi \brac{k \ell \ell' + \ell \lambda' + \ell' \brac{r-1}}}}{2 \cos \brac{\pi \lambda'} + \brac{-1}^{r'} 2 \cos \brac{k \pi s'}} \vmodS_{(r,1) (r',s')}
\end{equation}
\end{subequations}
and the $\minmod{u}{v}$ S-matrix entries were given in \eqnref{eqn:SVir}.
\end{thm}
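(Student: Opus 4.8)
The plan is to apply $\modS$ to the character identity \eqref{eqn:ChL} of \corref{cor:AtypChars} term by term, feed in the standard-module S-transformation of \thmref{thm:TypMod}, and then collapse the two resulting sums using \lemref{lem:UsefulIdentity}. I would first settle the case $\ell = 0$. Each indecomposable $\TypMod{r,s}^+$ is a standard module with weight parameter $\lambda = \lambda_{r,s}$, so by \propref{prop:ChErs} its character transforms according to \eqref{eqn:STypEntries}; the same holds for $\TypMod{u-r,v-s}^+$ with $\lambda = \lambda_{u-r,v-s}$. Writing the internal summation index of \eqref{eqn:ChL} as $n \geq 0$ and assuming $\modS$ may be moved through this (weight-space convergent) sum, I would obtain the ansatz \eqref{eqn:SAtyp} with a coefficient that is a double sum over $n$ and over $s = 1,\dots,v-1$, carrying the Kac-symmetric Virasoro matrix $\vmodS_{(r,s)(r',s')} = \vmodS_{(u-r,v-s)(r',s')}$ in each summand.

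Next I would simplify the phases. Using $\lambda_{r,s} = r-1-ts$, $\lambda_{u-r,v-s} = ts-r-1$ and $k = t-2$, the $ts\ell'$ terms cancel, while $tv = u \in \ZZ$ and $kv = u-2v \in \ZZ$ collapse every phase of the form $\ee^{2\pi\ii(\text{integer})}$ to $1$. Both families then share the $n$-dependence $\ee^{-2\pi\ii vn\lambda'}$ and the prefactor $\ee^{-\ii\pi(r-1)\ell'}$, so the sum over $n$ is geometric with value $\brac{1-\ee^{-2\pi\ii v\lambda'}}^{-1} = \brac{2\ii\,\ee^{-\ii\pi v\lambda'}\sin\brac{v\pi\lambda'}}^{-1}$. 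Combining the two terms of \eqref{eqn:ChL} would produce, for each $s$, the factor $\ee^{-\ii\pi s\lambda'} - \ee^{-2\pi\ii v\lambda'}\ee^{\ii\pi s\lambda'} = 2\ii\,\ee^{-\ii\pi v\lambda'}\sin\brac{(v-s)\pi\lambda'}$, and after cancelling the $2\ii\,\ee^{-\ii\pi v\lambda'}$ against the geometric factor this would leave an overall $1/\sin\brac{v\pi\lambda'}$ multiplying $\sum_{s=1}^{v-1}\brac{-1}^{s-1}\vmodS_{(r,s)(r',s')}\sin\brac{(v-s)\pi\lambda'}$.

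Then I would bring the remaining $s$-sum into the exact shape of \lemref{lem:UsefulIdentity}. Substituting \eqref{eqn:SVir} and rewriting $\brac{-1}^{s-1}\brac{-1}^{r's} = \brac{-1}^{r'}\brac{-1}^{(r'-1)(s-1)}$, the sum becomes $\brac{-1}^{r'}\sum_{s=1}^{v-1}\brac{-1}^{(r'-1)(s-1)}\sin\brac{(v-s)\pi\lambda'}\sin\frac{u\pi ss'}{v}$ times the $s$-independent factor $-2\sqrt{2/uv}\,\brac{-1}^{rs'}\sin\frac{v\pi rr'}{u}$. This sum is exactly the left-hand side of \lemref{lem:UsefulIdentity} with $R = r'$, $S = s'$ and $\mu = \lambda'$, so the lemma evaluates it as $\tfrac12\sin\brac{v\pi\lambda'}\sin\frac{u\pi s'}{v}$ over $\cos\brac{\pi\lambda'} + \brac{-1}^{r'}\cos\frac{u\pi s'}{v}$. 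The numerator $\sin\brac{v\pi\lambda'}$ then cancels the leftover $1/\sin\brac{v\pi\lambda'}$; using $\cos\frac{u\pi s'}{v} = \cos\brac{k\pi s'}$ and recognising $\vmodS_{(r,1)(r',s')} = -2\sqrt{2/uv}\,\brac{-1}^{rs'+r'}\sin\frac{v\pi rr'}{u}\sin\frac{u\pi s'}{v}$, everything reassembles into \eqref{eqn:SAtypEntries} at $\ell = 0$. Finally, the general $\ell$ would follow by applying spectral flow to \eqref{eqn:ChL} through \eqref{eqn:CharSF}: this shifts every internal index by $\ell$ and so multiplies each standard S-matrix element by the common, $n$- and $s$-independent factor $\ee^{-\ii\pi\brac{k\ell\ell' + \ell\lambda'}}$, which factors out of both sums and supplies precisely the missing $\ell$-dependence.

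I expect the real obstacle to be twofold. The routine but error-prone labour is the phase bookkeeping and the reduction of the finite $s$-sum to the precise form demanded by \lemref{lem:UsefulIdentity}; that lemma is the pre-packaged hard identity, so once the sum sits in that form the collapse is immediate. The genuinely delicate point is analytic: justifying that $\modS$ commutes with the infinite sum and making rigorous sense of the geometric series $\sum_{n \geq 0}\ee^{-2\pi\ii vn\lambda'}$. Its value $\brac{1-\ee^{-2\pi\ii v\lambda'}}^{-1}$ has poles exactly where $v\lambda' \in \ZZ$, that is, at the excluded atypical weights $\lambda' = \pm\lambda_{r',s'} \bmod 2$ where the denominator of \eqref{eqn:SAtypEntries} likewise vanishes, so one must check that these singularities match and that the resulting identity is to be read as a distribution on the typical sector, the atypical $\lambda'$ forming a harmless measure-zero set under the $\lambda'$-integration in \eqref{eqn:SAtyp}.
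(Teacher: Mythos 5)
Your proposal is correct and follows essentially the same route as the paper's proof: substitute the standard S-matrix entries into the resolution-derived character identity \eqref{eqn:ChL}, perform the geometric sum over the spectral flow index, reduce the finite sum over $s''$ to the form of \lemref{lem:UsefulIdentity}, and reassemble. Your closing remark about the geometric series sitting at its radius of convergence is also the caveat the paper itself raises (it defers the regularisation to a remark after the proof), so nothing essential is missing.
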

\begin{proof}
From \eqnref{eqn:ChL}, we immediately obtain
\begin{multline}
\modS_{\overline{(\ell;r,0)} (\ell' , \lambda' ; \Delta_{r',s'})} = \sum_{s''=1}^{v-1} \brac{-1}^{s''-1} \sum_{\ell''=0}^{\infty} \left[ \modS_{(\ell + 2v \ell'' + s'' , \lambda_{r,s''} ; \Delta_{r,s''}) (\ell' , \lambda' ; \Delta_{r',s'})} \right. \\
\left. - \modS_{(\ell + 2v \brac{\ell'' + 1} - s'' , \lambda_{u-r,v-s''} ; \Delta_{u-r,v-s''}) (\ell' , \lambda' ; \Delta_{r',s'})} \right].
\end{multline}
Inserting the standard S-matrix elements \eqref{eqn:STypEntries} and simplifying using \eqref{eqn:DefLambda} and \eqref{eqn:KacSymmetry}, the right-hand side becomes
\begin{equation} \label{eqn:Careful}
\frac{1}{2} \frac{\abs{\tau}}{-\ii \tau} \ee^{-\ii \pi \brac{k \ell \ell' + \ell \lambda' + \ell' \brac{r-1}}} \sum_{\ell''=0}^{\infty} \ee^{-2 \pi \ii v \lambda' \ell''} \sum_{s''=1}^{v-1} \brac{-1}^{s''-1} \sqbrac{\ee^{-\ii \pi s'' \lambda'} - \ee^{\ii \pi \brac{s'' - 2v} \lambda'}} \vmodS_{(r,s'') (r',s')}.
\end{equation}
Performing the $\ell''$-sum and extracting the $s''$-dependent factors from $\vmodS_{(r,s'') (r',s')}$ now gives
\begin{equation}
\frac{1}{2} \frac{\abs{\tau}}{-\ii \tau} \ee^{-\ii \pi \brac{k \ell \ell' + \ell \lambda' + \ell' \brac{r-1}}} \sum_{s''=1}^{v-1} \brac{-1}^{\brac{r'-1} \brac{s''-1}} \frac{\sin \brac{\brac{v-s''} \pi \lambda'}}{\sin \brac{v \pi \lambda'}} \frac{\sin \brac{\pi s' s'' t}}{\sin \brac{\pi s' t}} \vmodS_{(r,1) (r',s')}.
\end{equation}
The result now follows from \lemref{lem:UsefulIdentity}.
\end{proof}

\noindent Of course, the T-transformations of the atypical characters are also easy to obtain.

As remarked above, this procedure would also allow us to determine the atypical S-matrix entries $\modS_{\overline{(\ell;r,s)} (\ell',\lambda' ; \Delta_{r',s'})}$, defined by transforming the characters of the $\sfmod{\ell}{\DiscMod{r,s}^+}$.  However, the resulting entries are not particularly pleasant to work with and we will see that we can proceed with our computations without their explicit form.

\begin{cor} \label{cor:VacMod}
Let $k$ be an admissible level with $v>1$.  Then, the S-matrix entries for the vacuum module $\IrrMod{0} = \IrrMod{1,0}$ take the form
\begin{equation} \label{eqn:SVacEntries}
\modS_{\overline{(0;1,0)} (\ell',\lambda' ; \Delta_{r',s'})} = \frac{1}{2} \frac{\abs{\tau}}{-\ii \tau} \frac{1}{2 \cos \brac{\pi \lambda'} + \brac{-1}^{r'} 2 \cos \brac{k \pi s'}} \vmodS_{(1,1) (r',s')}.
\end{equation}
\end{cor}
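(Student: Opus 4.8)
The plan is to obtain this corollary as an immediate specialisation of \thmref{thm:AtypMod}. The vacuum module $\IrrMod{0}$ is, by the Adamovi\'{c}--Milas classification, the module $\IrrMod{1,0}$: since $\lambda_{r,0} = r-1$ (from \eqnref{eqn:DefLambda}), the highest weight $\lambda = 0$ forces $r=1$. As we are transforming the character of the unflowed vacuum module, the spectral flow index is $\ell = 0$. Hence I would simply set $\ell = 0$ and $r = 1$ in the general atypical S-matrix entry \eqref{eqn:SAtypEntries}.

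Carrying out the substitution, the exponential prefactor $\ee^{-\ii \pi \brac{k \ell \ell' + \ell \lambda' + \ell' \brac{r-1}}}$ collapses completely: the first two terms vanish because $\ell = 0$, and the third term vanishes because $r-1 = 0$. The exponent is therefore identically zero and the prefactor reduces to $1$. Simultaneously, the Virasoro S-matrix index $\vmodS_{(r,1)(r',s')}$ becomes $\vmodS_{(1,1)(r',s')}$. What remains is precisely the asserted expression
\begin{equation*}
\modS_{\overline{(0;1,0)} (\ell',\lambda' ; \Delta_{r',s'})} = \frac{1}{2} \frac{\abs{\tau}}{-\ii \tau} \frac{1}{2 \cos \brac{\pi \lambda'} + \brac{-1}^{r'} 2 \cos \brac{k \pi s'}} \vmodS_{(1,1) (r',s')}.
\end{equation*}
There is no genuine obstacle here: the content of the result lies entirely in \thmref{thm:AtypMod}, and this corollary merely records the most important special case for later use in the Verlinde computation of \secref{sec:Verlinde}. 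The only point meriting a moment's care is the bookkeeping identification $\IrrMod{0} = \IrrMod{1,0}$, which I would state explicitly so that the reader sees why the relevant parameter values are $\ell = 0$ and $r = 1$.
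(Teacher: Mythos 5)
Your proposal is correct and matches the paper's (implicit) derivation exactly: the corollary is simply \eqref{eqn:SAtypEntries} specialised to $\ell=0$ and $r=1$, whereupon the exponential prefactor becomes $1$ and the Virasoro entry becomes $\vmodS_{(1,1)(r',s')}$. Nothing further is required.
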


Finally, we make a few comments:  First, we remark that these atypical S-matrix ``entries'' are not really entries of the S-matrix because we have chosen our character basis to consist of those of the typical irreducibles $\sfmod{\ell}{\TypMod{\lambda ; \Delta_{r,s}}}$ and the atypical indecomposables $\sfmod{\ell}{\TypMod{\lambda_{r,s} ; \Delta_{r,s}}^+}$.  There is therefore no sense in trying to construct S-matrix entries involving two atypical irreducibles (these do not come from our basis).  Second, we note that the atypical S-matrix entry \eqref{eqn:SAtypEntries} diverges precisely when $\lambda'$ takes on the atypical values $\lambda' = \lambda_{r',s'}$ and $\lambda_{u-r',v-s'}$.  Our last comment is to admit that we have been a little cavalier with regard to the sum over $\ell''$ in \eqnref{eqn:Careful}.  We should be more careful here because the summand sits at the radius of convergence, hence a regularisation is in order \cite{CreW1p13}.  However, this will not affect the Verlinde computations of the next section.

\begin{ex}
When $v=2$, $s'$ is restricted to be $1$ and we have $\cos \brac{k \pi s'} = \cos \brac{u \pi / 2} = 0$.  The denominator of each atypical S-matrix entry therefore simplifies to $2 \cos \brac{\pi \lambda'}$.  This agrees with the result obtained for $k=-\tfrac{1}{2}$ in \cite{CreMod12}.  We can also recover the result reported there for $k=-\tfrac{4}{3}$ as then $r'$ is restricted to be $1$ and $s'$ to be $1$ or $2$.  For both choices of $s'$, the denominator simplifies to $2 \cos \brac{\pi \lambda'} + 1$.
\end{ex}

\section{The Verlinde Formula} \label{sec:Verlinde}

Having determined the S-matrix entries for both typical and atypical irreducibles, we can now consider the implications for the fusion rules of the admissible level theories.  For this, we use the continuum version of the Verlinde formula.  Of course, we expect indecomposable representations in the spectrum when $v>1$, so the Verlinde formula cannot tell us about the fusion ring directly, but rather it is expected to give the structure constants of the Grothendieck ring of fusion.  This is the quotient of the fusion ring by the ideal generated by the (formal) differences of each indecomposable and the direct sum of its composition factors.\footnote{That this is indeed an ideal, hence that the Grothendieck ring is well-defined, requires that fusion define an exact functor from the category of admissible modules to itself.  This is not guaranteed in general (see \cite{GabFus09} for examples), but we expect that fusion is exact for the fractional level theories studied here.}  In any case, we may use the Verlinde formula to define structure constants and investigate whether they seem to define reasonable Grothendieck fusion rings.  Our conjecture here is as follows:

\begin{conj} \label{conj:Verlinde}
Let $k$ be an admissible level with $v>1$.  Then, the continuum Verlinde formula
\begin{multline} \label{eqn:Verlinde}
\fuscoeff{(\ell , \lambda ; \Delta_{r,s}) (\ell' , \lambda' ; \Delta_{r',s'})}{(\ell'' , \lambda'' ; \Delta_{r'',s''})} \\
= \sum_{m \in \ZZ} \sideset{}{'} \sum_{R,S} \int_{-1}^1 \frac{\modS_{(\ell , \lambda ; \Delta_{r,s}) (m , \mu ; \Delta_{R,S})} \modS_{(\ell' , \lambda' ; \Delta_{r',s'}) (m , \mu ; \Delta_{R,S})} \modS_{(\ell'' , \lambda'' ; \Delta_{r'',s''}) (m , \mu ; \Delta_{R,S})}^*}{\modS_{\overline{(0;1,0)} (m , \mu ; \Delta_{R,S})}} \: \dd \mu,
\end{multline}
and its atypical generalisations with one or both of $(\ell , \lambda ; \Delta_{r,s})$ and $(\ell' , \lambda' ; \Delta_{r',s'})$ replaced by $\overline{(\ell ; r,s)}$ and $\overline{(\ell' ; r',s')}$, respectively, give the structure constants of the Grothendieck ring of fusion.  The Grothendieck fusion rules take the form
\begin{equation}
\Gr{\sfmod{\ell}{\TypMod{\lambda ; \Delta_{r,s}}}} \fuse \Gr{\sfmod{\ell'}{\TypMod{\lambda' ; \Delta_{r',s'}}}} = \sum_{\ell'' \in \ZZ} \ \sideset{}{'} \sum_{r'',s''} \int_{-1}^1 \fuscoeff{(\ell , \lambda ; \Delta_{r,s}) (\ell' , \lambda' ; \Delta_{r',s'})}{(\ell'' , \lambda'' ; \Delta_{r'',s''})} \Gr{\sfmod{\ell''}{\TypMod{\lambda'' ; \Delta_{r'',s''}}}} \: \dd \lambda'',
\end{equation}
along with their atypical generalisations.  Here, the square brackets $\tGr{\cdots}$ remind us that we are working with the Grothendieck quotient of the fusion ring.
\end{conj}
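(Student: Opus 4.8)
The plan is to treat this as a verification programme rather than a direct proof, since no independent computation of fusion exists at general admissible level. First I would substitute the explicit S-matrix entries from \thmref{thm:TypMod} and \thmref{thm:AtypMod} into the continuum Verlinde formula \eqref{eqn:Verlinde}. The typical--typical case is cleanest: inserting \eqref{eqn:STypEntries} for the three relaxed factors and \eqref{eqn:SVacEntries} in the denominator, the $\tau$-dependent phases $\abs{\tau} / \brac{-\ii \tau}$ cancel between numerator and denominator, exactly as anticipated in the remark following \corref{cor:TypModProps}. The integral over $\mu$ then combines with the spectral-flow sum over $m$ and the exponential weight factors $\ee^{-\ii \pi \brac{\cdots}}$ to localise onto $\ell'' = \ell + \ell'$ and $\lambda'' = \lambda + \lambda' \bmod 2$, while the minimal-model factors $\vmodS_{(r,s) (R,S)}$ assemble, via the ordinary Virasoro Verlinde formula, into the Virasoro fusion multiplicities $\vfuscoeff{(r,s)}{(r'',s'')}$. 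The upshot is that typical fusion factorises into a spectral-flow and weight part times a Virasoro part.

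Next I would treat the atypical generalisations by the same substitution, now using \eqref{eqn:SAtypEntries}. The new feature is the denominator $2 \cos \brac{\pi \mu} + \brac{-1}^R 2 \cos \brac{k \pi S}$; when an atypical module is fused, this denominator must cancel against a matching factor produced by the $\mu$-integration, and it is precisely \lemref{lem:UsefulIdentity} that effects this cancellation. I expect the atypical--atypical products to be the most delicate, since they require the careful treatment of the boundary-of-convergence $\ell''$-sums flagged in the closing remark of \secref{sec:ModAtyp}.

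The decisive step, and the main obstacle, is establishing that the resulting structure constants are \emph{non-negative integers}. This is genuinely non-trivial because the atypical characters were expressed via \corref{cor:AtypChars} as \emph{alternating} infinite combinations of standard characters, so the Verlinde formula delivers the atypical fusion coefficients as alternating sums whose non-negativity is in no way manifest. I would prove non-negativity by evaluating the Grothendieck fusion rules in closed form (\propQref{prop:FusTyp}{prop:FusIrrTyp}{prop:FusIrrIrr}{prop:FusDrs}) and reading off the coefficients directly; the collapse of these alternating sums to manifestly non-negative integers is the content of \thmref{thm:GrFusCoeffPos} and constitutes the strongest internal evidence for the conjecture.

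Finally, since the conjecture asserts equality with the \emph{true} Grothendieck fusion coefficients, I would close the loop with external consistency checks: verifying associativity and the correct unit action of $\IrrMod{0}$, matching against the independently known fusion rules at $k = -\tfrac{1}{2}$ and $k = -\tfrac{4}{3}$ from \cite{GabFus01,RidFus10,CreMod12}, and, following the mechanism of \cite{RidSL208}, recovering the Koh--Sorba ``fusion rules'' of \cite{KohFus88}---negative coefficients included---upon quotienting by the appropriate ideal (\secref{sec:KohSorba}). The fundamental gap preventing this from being a theorem is the absence of an independent proof that the continuum Verlinde formula must compute fusion in the presence of indecomposables; closing it would require showing that fusion is exact (the caveat noted in the footnote to the present conjecture) together with a genuine Verlinde-type theorem for this non-rational vertex-algebra setting, which lies beyond the modular data alone.
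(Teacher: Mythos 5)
This statement is a conjecture, and the paper offers no proof of it, only supporting evidence; your proposal correctly recognises this and reproduces essentially the same programme the paper carries out in \secDref{sec:Verlinde}{sec:KohSorba} (explicit evaluation of the continuum Verlinde coefficients, the non-negativity of \thmref{thm:GrFusCoeffPos}, consistency with the known $k=-\tfrac{1}{2}$ and $k=-\tfrac{4}{3}$ fusion rules, and recovery of the Koh--Sorba rules upon quotienting by the standard ideal). The only small divergence is that for the atypical--atypical products the paper sidesteps the direct application of \lemref{lem:UsefulIdentity} to the Verlinde integrand by instead combining the resolution character identity \eqref{eqn:ChL} with \propref{prop:FusIrrTyp}, whereas you propose the direct route; both work, and you correctly identify the exactness of fusion and the absence of a Verlinde-type theorem in this non-rational setting as the genuine remaining gaps.
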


\noindent This conjecture has been verified for $k=-\tfrac{1}{2}$ in \cite{RidFus10}, subject only to the standard and well-tested assumption that fusion respects spectral flow.  It was also shown there that fusion is exact with the same assumption.  The conjecture has been similarly checked for the atypical $k=-\tfrac{4}{3}$ modules and one of the typical modules in \cite{CreMod12} (the fusion rules at this level were only computed in \cite{GabFus01} for one typical admissible).  We remark that computing the Grothendieck fusion rules using the Verlinde formula guarantees that they will be commutative and associative.

Let us turn to the computation of the Grothendieck fusion coefficients for the standard modules.  Substituting the S-matrix entries \eqref{eqn:STypEntries} and \eqref{eqn:SVacEntries}, the right-hand side of \eqref{eqn:Verlinde} becomes
\begin{multline}
\frac{1}{2} \sum_{m \in \ZZ} \ee^{-\ii \pi \brac{k \brac{\ell + \ell' - \ell''} + \lambda + \lambda' - \lambda''} m} \Biggl[ \int_{-1}^1 \ee^{-\ii \pi \brac{\ell + \ell' - \ell''} \mu} \cos \brac{\pi \mu} \: \dd \mu \sideset{}{'} \sum_{R,S} \frac{\vmodS_{(r,s) (R,S)} \vmodS_{(r',s') (R,S)} \vmodS_{(r'',s'') (R,S)}}{\vmodS_{(1,1) (R,S)}} \Biggr. \\
\Biggl. + \int_{-1}^1 \ee^{-\ii \pi \brac{\ell + \ell' - \ell''} \mu} \: \dd \mu \: \sideset{}{'} \sum_{R,S} \brac{-1}^R \cos \brac{k \pi S} \frac{\vmodS_{(r,s) (R,S)} \vmodS_{(r',s') (R,S)} \vmodS_{(r'',s'') (R,S)}}{\vmodS_{(1,1) (R,S)}} \Biggr].
\end{multline}
We recognise the Virasoro fusion coefficient $\vfuscoeff{(r,s) (r',s')}{(r'',s'')}$ as the sum over $R$ and $S$ in the first term.  The sum in the second term is similarly recognised after realising that
\begin{align}
\brac{-1}^R \cos \frac{u \pi S}{v} \vmodS_{(r',s') (R,S)} &= -\sqrt{\frac{2}{uv}} \brac{-1}^{r'S + R \brac{s' \pm 1}} \sin \frac{v \pi r' R}{u} \sqbrac{\sin \frac{u \pi \brac{s'-1} S}{v} + \sin \frac{u \pi \brac{s'+1} S}{v}} \notag \\
&= \frac{1}{2} \sqbrac{\vmodS_{(r',s'-1) (R,S)} + \vmodS_{(r',s'+1) (R,S)}}.
\end{align}
We remark that when $s'-1 = 0$ or $s'+1 = v$, the corresponding sine functions vanish in the above expression.  Thus, when the indices $s'-1$ and $s'+1$ fall out of the Kac table, the above Virasoro S-matrix entries should be understood to vanish.  With this proviso in mind, the sum over $m$ and integral over $\mu$ are now easily dealt with and we arrive at a general expression for the standard Grothendieck fusion coefficients:
\begin{multline} \label{eqn:FusCoeffTyp}
\fuscoeff{(\ell , \lambda ; \Delta_{r,s}) (\ell' , \lambda' ; \Delta_{r',s'})}{(\ell'' , \lambda'' ; \Delta_{r'',s''})} \\
= \Bigl[ \delta_{\ell'' = \ell + \ell' + 1} \func{\delta}{\lambda'' = \lambda + \lambda' - k \bmod{2}} + \delta_{\ell'' = \ell + \ell' - 1} \func{\delta}{\lambda'' = \lambda + \lambda' + k \bmod{2}} \Bigr] \vfuscoeff{(r,s) (r',s')}{(r'',s'')} \\
+ \delta_{\ell'' = \ell + \ell'} \func{\delta}{\lambda'' = \lambda + \lambda' \bmod{2}} \sqbrac{\vfuscoeff{(r,s) (r',s'-1)}{(r'',s'')} + \vfuscoeff{(r,s) (r',s'+1)}{(r'',s'')}}.
\end{multline}
Of course, the Virasoro fusion coefficients vanish too whenever an index falls outside the $\minmod{u}{v}$ Kac table.

\begin{prop} \label{prop:FusTyp}
Let $k$ be an admissible level with $v>1$.  Then, the Grothendieck fusion rules for the standard admissibles are given by
\begin{multline} \label{eqn:GrFusTyp}
\Gr{\sfmod{\ell}{\TypMod{\lambda ; \Delta_{r,s}}}} \fuse \Gr{\sfmod{\ell'}{\TypMod{\lambda' ; \Delta_{r',s'}}}} \\
= \sum_{r'',s''} \vfuscoeff{(r,s) (r',s')}{(r'',s'')} \brac{\Gr{\sfmod{\ell + \ell' + 1}{\TypMod{\lambda + \lambda' - k ; \Delta_{r'',s''}}}} + \Gr{\sfmod{\ell + \ell' - 1}{\TypMod{\lambda + \lambda' + k ; \Delta_{r'',s''}}}}} \\
+ \sum_{r'',s''} \brac{\vfuscoeff{(r,s) (r',s'-1)}{(r'',s'')} + \vfuscoeff{(r,s) (r',s'+1)}{(r'',s'')}} \Gr{\sfmod{\ell + \ell'}{\TypMod{\lambda + \lambda' ; \Delta_{r'',s''}}}}.
\end{multline}
\end{prop}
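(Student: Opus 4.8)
The plan is to obtain the Grothendieck fusion rules \eqref{eqn:GrFusTyp} by reading them off directly from the closed form \eqref{eqn:FusCoeffTyp} for the structure constants, which has just been derived, upon substituting the latter into the Grothendieck product expansion. Recall that the continuum Verlinde formula expands $\Gr{\sfmod{\ell}{\TypMod{\lambda ; \Delta_{r,s}}}} \fuse \Gr{\sfmod{\ell'}{\TypMod{\lambda' ; \Delta_{r',s'}}}}$ as a sum over $\ell'' \in \ZZ$, a Kac-symmetry-restricted sum over $r'',s''$, and an integral over $\lambda'' \in [-1,1]$, with each class $\Gr{\sfmod{\ell''}{\TypMod{\lambda'' ; \Delta_{r'',s''}}}}$ weighted by the coefficient $\fuscoeff{(\ell , \lambda ; \Delta_{r,s}) (\ell' , \lambda' ; \Delta_{r',s'})}{(\ell'' , \lambda'' ; \Delta_{r'',s''})}$. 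Inserting \eqref{eqn:FusCoeffTyp} is therefore all that is required, and the remaining work is purely the evaluation of the resulting Kronecker and Dirac deltas.

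I would then proceed term by term. Each of the three summands in \eqref{eqn:FusCoeffTyp} carries a Kronecker delta fixing $\ell''$ to one of $\ell+\ell'+1$, $\ell+\ell'-1$ or $\ell+\ell'$, which collapses the sum over $\ell''$ to exactly the three spectral-flow indices appearing on the right-hand side of \eqref{eqn:GrFusTyp}. Each summand also carries a factor $\func{\delta}{\lambda'' = \lambda + \lambda' - k \bmod 2}$, $\func{\delta}{\lambda'' = \lambda + \lambda' + k \bmod 2}$ or $\func{\delta}{\lambda'' = \lambda + \lambda' \bmod 2}$, which localises the integral over the fundamental domain $[-1,1]$ of $\RR / 2\ZZ$ and selects the weight $\lambda + \lambda' - k$, $\lambda + \lambda' + k$ or $\lambda + \lambda'$ respectively; this is unambiguous because $\TypMod{\lambda'' ; \Delta_{r'',s''}}$ depends on $\lambda''$ only modulo $2$. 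The sum over $r'',s''$ survives untouched, paired with the Virasoro fusion coefficients $\vfuscoeff{(r,s) (r',s')}{(r'',s'')}$, $\vfuscoeff{(r,s) (r',s'-1)}{(r'',s'')}$ and $\vfuscoeff{(r,s) (r',s'+1)}{(r'',s'')}$, so collecting the three pieces reproduces \eqref{eqn:GrFusTyp}. The only bookkeeping point is that these Virasoro coefficients are already well defined on Kac-symmetry classes, so the primed sum of the Verlinde formula passes consistently to the sum written in the statement.

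The substance of the computation lies entirely upstream, in the derivation of \eqref{eqn:FusCoeffTyp} rather than in this final substitution, so the main obstacle has in effect already been surmounted. There one must recognise the Virasoro-Verlinde sum $\vfuscoeff{(r,s) (r',s')}{(r'',s'')}$ inside the triple-$\modS$ product and, crucially, use the shift identity rewriting $\brac{-1}^R \cos \brac{k \pi S} \, \vmodS_{(r',s') (R,S)}$ as $\tfrac{1}{2} \bigl[ \vmodS_{(r',s'-1) (R,S)} + \vmodS_{(r',s'+1) (R,S)} \bigr]$, together with the convention that any Virasoro S-matrix entry with an index falling outside the $\minmod{u}{v}$ Kac table vanishes. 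Given \eqref{eqn:FusCoeffTyp}, the proposition then follows by inspection, so no genuine difficulty remains at this stage.
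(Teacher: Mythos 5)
Your proposal matches the paper's own treatment: the paper derives the closed form \eqref{eqn:FusCoeffTyp} for the standard Grothendieck fusion coefficients immediately before the proposition (recognising the Virasoro Verlinde sums and using the shift identity for $\brac{-1}^R \cos \brac{u \pi S / v} \, \vmodS_{(r',s') (R,S)}$), and then obtains \eqref{eqn:GrFusTyp} exactly as you describe, by substituting into the Verlinde expansion and collapsing the deltas. Your remark on dropping the primed sums is the same observation the paper makes, namely that $(r'',s'')$ and $(u-r'',v-s'')$ cannot both appear in a $\minmod{u}{v}$ fusion rule.
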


\noindent Of course, one can insert the known expressions for the Virasoro fusion coefficients into \eqref{eqn:FusCoeffTyp} and \eqref{eqn:GrFusTyp} to obtain completely explicit, if rather lengthy, formulae.  We recall that these coefficients have the form
\begin{subequations}
\begin{equation}
\vfuscoeff{(r,s) (r',s')}{(r'',s'')} = \vfuscoeff{(r,1) (r',1)}{(r'',1)} \vfuscoeff{(1,s) (1,s')}{(1,s'')} \equiv \vpfuscoeff{r, r'}{r''}{u} \vpfuscoeff{s, s'}{s''}{v},
\end{equation}
where
\begin{equation} \label{eqn:VirFusCoeff}
\vpfuscoeff{t, t'}{t''}{w} = 
\begin{cases}
1 & \text{if \(\abs{t-t'}+1 \leqslant t'' \leqslant \min \set{t+t'-1,2w-t-t'-1}\) and \(t+t'+t''\) is odd,} \\
0 & \text{otherwise.}
\end{cases}
\end{equation}
\end{subequations}
In particular, we note the following useful identities:
\begin{gather} \label{eqn:VirFusIdent}
\vpfuscoeff{1, t'}{t''}{w} = \delta_{t''=t'}, \qquad 
\vpfuscoeff{t, w-t'}{w-t''}{w} = \vpfuscoeff{t, t'}{t''}{w}, \qquad 
\vpfuscoeff{t, w-1}{t''}{w} = \delta_{t''=w-t}.
\end{gather}
It is a useful exercise to check that \eqref{eqn:GrFusTyp} is symmetric under $(r,s) \leftrightarrow (r',s')$, hence that this Grothendieck fusion rule is commutative.  We have also dropped the primes from the summations in this rule because $(r'',s'')$ and $(u-r'',v-s'')$ cannot appear together in $\minmod{u}{v}$ fusion rules.

Notice that the spectral flow indices in \eqref{eqn:FusCoeffTyp}, and hence in \eqref{eqn:GrFusTyp}, are always constrained so that the total spectral flow index is conserved, meaning that the right-hand sides depend upon the sum of the spectral flow indices of the modules being fused, rather than upon their individual indices.  This means that the Grothendieck fusion rules for the standard modules satisfy
\begin{equation} \label{eqn:GrFusionSF}
\Gr{\sfmod{\ell}{\mathcal{M}}} \fuse \Gr{\sfmod{\ell'}{\mathcal{M'}}} = \Gr{\sfmod{\ell + \ell'}{\mathcal{M} \fuse \mathcal{M}'}},
\end{equation}
We may therefore restrict to untwisted modules with $\ell = \ell' = 0$ without any loss of generality.  A well-known, but still open, conjecture asserts that spectral flow also respects the genuine fusion rules in the sense that the analogue of \eqref{eqn:GrFusionSF} holds.

The (Grothendieck) fusion with $\tGr{\TypMod{\lambda ; \Delta_{1,1}}}$ is particularly nice because it preserves the minimal model index $r$ and generates modules with general $s$ from those with $s=1$:
\begin{equation} \label{GrFR:TxT}
\Gr{\TypMod{\lambda ; \Delta_{1,1}}} \fuse \Gr{\TypMod{\mu ; \Delta_{r,s}}} = \Gr{\sfmod{}{\TypMod{\lambda + \mu - k ; \Delta_{r,s}}}} + \Gr{\sfmod{-1}{\TypMod{\lambda + \mu + k ; \Delta_{r,s}}}} + \Gr{\TypMod{\lambda + \mu ; \Delta_{r,s-1}}} + \Gr{\TypMod{\lambda + \mu ; \Delta_{r,s+1}}}.
\end{equation}
To generate the ``seed'' modules $\tGr{\TypMod{\lambda ; \Delta_{r,1}}}$, we fuse $\tGr{\TypMod{\lambda ; \Delta_{1,1}}}$ with the atypicals $\IrrMod{r,0} = \IrrMod{r-1}$.  In fact, the Grothendieck fusion of such an atypical  with a standard module is even easier to compute than that of two standard modules because the denominators appearing in \thmref{thm:AtypMod} and \corref{cor:VacMod} cancel.

\begin{prop} \label{prop:FusIrrTyp}
Let $k$ be an admissible level with $v>1$.  Then, the Grothendieck fusion of $\sfmod{\ell}{\IrrMod{r,0}} = \sfmod{\ell}{\IrrMod{r-1}}$ and $\sfmod{\ell'}{\TypMod{\lambda' ; \Delta_{r',s'}}}$ is given by
\begin{equation} \label{eqn:FusIrrTyp}
\Gr{\sfmod{\ell}{\IrrMod{r,0}}} \fuse \Gr{\sfmod{\ell'}{\TypMod{\lambda' ; \Delta_{r',s'}}}} = \sum_{r''} \vfuscoeff{(r,1) (r',1)}{(r'',1)} \Gr{\sfmod{\ell + \ell'}{\TypMod{r-1 + \lambda' ; \Delta_{r'',s'}}}}.
\end{equation}
\end{prop}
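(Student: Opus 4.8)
The plan is to assume Conjecture~\ref{conj:Verlinde} and simply evaluate the atypical generalisation of the continuum Verlinde formula \eqref{eqn:Verlinde} in which the first slot $(\ell,\lambda;\Delta_{r,s})$ is replaced by the atypical label $\overline{(\ell;r,0)}$. First I would substitute the three S-matrix ingredients: the atypical entry \eqref{eqn:SAtypEntries} for $\sfmod{\ell}{\IrrMod{r,0}}$, the standard entry \eqref{eqn:STypEntries} (and its complex conjugate) for the two typical modules, and the vacuum entry \eqref{eqn:SVacEntries} in the denominator. The whole point, as flagged before the statement, is that the factor $2\cos\brac{\pi\mu}+\brac{-1}^R 2\cos\brac{k\pi S}$ sitting in the denominator of the atypical entry \eqref{eqn:SAtypEntries} is \emph{identical} to the denominator of the vacuum entry \eqref{eqn:SVacEntries}; dividing one by the other cancels this factor outright. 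The modular phases $\abs{\tau}/\brac{-\ii\tau}$ (two unconjugated and one conjugated in the numerator, against one in the denominator) multiply to unity for real $\tau$, and the four explicit factors of $\tfrac{1}{2}$ combine to $\tfrac{1}{4}$.

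After these cancellations the integrand is just a product of exponential phases times the Virasoro ratio $\vmodS_{(r,1)(R,S)}\vmodS_{(r',s')(R,S)}\vmodS_{(r'',s'')(R,S)}/\vmodS_{(1,1)(R,S)}$, with no residual $\mu$-dependence beyond a single exponential. Collecting the phases, the $\mu$-exponent is $-\ii\pi\mu\brac{\ell+\ell'-\ell''}$ and the $m$-exponent is $-\ii\pi m\sqbrac{k\brac{\ell+\ell'-\ell''}+\brac{r-1}+\lambda'-\lambda''}$, and crucially these decouple. I would then perform $\int_{-1}^1\dd\mu$, which yields $2\delta_{\ell''=\ell+\ell'}$ (the absence of any $\cos\brac{\pi\mu}$ factor is exactly why no shift $\ell''\to\ell\pm1$ occurs here, in contrast to the standard--standard case), and the sum over $m\in\ZZ$, which yields $2\func{\delta}{\lambda''=\lambda'+r-1\bmod{2}}$ once $\ell''=\ell+\ell'$ removes the $k$-dependent term. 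The numerical prefactor is then $\tfrac{1}{4}\cdot2\cdot2=1$.

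The remaining sum over the Virasoro labels $R,S$ is precisely the minimal-model Verlinde formula, so it collapses to the Virasoro fusion coefficient $\vfuscoeff{(r,1)(r',s')}{(r'',s'')}$. I would then invoke the factorisation of the minimal-model coefficients together with the identity $\vpfuscoeff{1,s'}{s''}{v}=\delta_{s''=s'}$ from \eqref{eqn:VirFusIdent} to write $\vfuscoeff{(r,1)(r',s')}{(r'',s'')}=\vfuscoeff{(r,1)(r',1)}{(r'',1)}\,\delta_{s''=s'}$, giving the fusion coefficient
\[
\fuscoeff{\overline{(\ell;r,0)}(\ell',\lambda';\Delta_{r',s'})}{(\ell'',\lambda'';\Delta_{r'',s''})}=\delta_{\ell''=\ell+\ell'}\,\func{\delta}{\lambda''=\lambda'+r-1\bmod{2}}\,\delta_{s''=s'}\,\vfuscoeff{(r,1)(r',1)}{(r'',1)}.
\]
Assembling the Grothendieck fusion product through the defining sum-and-integral of Conjecture~\ref{conj:Verlinde}, the Kronecker deltas collapse the $\ell''$- and $s''$-sums to $\ell''=\ell+\ell'$ and $s''=s'$, while the delta in $\lambda''$ collapses $\int_{-1}^1\dd\lambda''$ to $\lambda''=r-1+\lambda'$, leaving exactly \eqref{eqn:FusIrrTyp}. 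The computation is essentially bookkeeping once the denominator cancellation is noticed; the only steps demanding care are tracking the combinatorial factors of two and correctly recognising the Virasoro Verlinde sum, so I do not expect any genuine obstacle here --- the substantive work was already done in establishing \thmref{thm:AtypMod} and \corref{cor:VacMod}.
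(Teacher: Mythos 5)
Your proposal is correct and is exactly the computation the paper intends: the remark immediately preceding the proposition (that ``the denominators appearing in \thmref{thm:AtypMod} and \corref{cor:VacMod} cancel'') is the entire content of the paper's unwritten proof, and your bookkeeping --- the cancellation of the factor $2\cos\brac{\pi\mu}+\brac{-1}^{R}2\cos\brac{k\pi S}$, the net prefactor $\tfrac{1}{4}\cdot 2\cdot 2=1$, the $\mu$-integral giving $\delta_{\ell''=\ell+\ell'}$ with no $\ell''\to\ell\pm1$ shift, the $m$-sum, the minimal-model Verlinde sum, and the collapse via $\vpfuscoeff{1,s'}{s''}{v}=\delta_{s''=s'}$ --- mirrors the paper's explicit derivation of \propref{prop:FusTyp} adapted to this easier case. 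One trivial nit: the modular phases cancel not ``for real $\tau$'' but for all $\tau$, since after dividing by the vacuum entry one is left with one unconjugated and one conjugated factor of the unimodular quantity $\abs{\tau}/\brac{-\ii\tau}$.
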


\noindent This confirms \eqref{eqn:GrFusionSF} once again.  If $u>2$, we may take $r=2$ and deduce that
\begin{equation} \label{GrFR:LxT}
\Gr{\IrrMod{1}} \fuse \Gr{\TypMod{\mu ; \Delta_{r,s}}} = \Gr{\TypMod{\mu - 1 ; \Delta_{r-1,s}}} + \Gr{\TypMod{\mu + 1 ; \Delta_{r+1,s}}}.
\end{equation}
Thus, one can generate the seeds $\tGr{\TypMod{\lambda ; \Delta_{r,1}}}$, with $r>1$, by fusing $\tGr{\TypMod{\lambda ; \Delta_{1,1}}}$ repeatedly with $\tGr{\IrrMod{1}}$.  Fusing these seeds repeatedly with $\tGr{\TypMod{\lambda ; \Delta_{1,1}}}$ then generates the remaining standard modules $\tGr{\TypMod{\lambda ; \Delta_{r,s}}}$.

We remark that if $\TypMod{\mu ; \Delta_{r,s}}$ is irreducible, meaning that $\mu \neq \lambda_{r,s} , \lambda_{u-r,v-s} \bmod{2}$, then so are $\TypMod{\mu - 1 ; \Delta_{r-1,s}}$ and $\TypMod{\mu + 1 ; \Delta_{r+1,s}}$.  Moreover, the conformal dimensions of the \hwss{} of these irreducibles satisfy
\begin{equation} \label{eqn:DimsDon'tMatch}
\Delta_{r+1,s} - \Delta_{r-1,s} = \frac{r}{t} - s = \frac{r}{u}v - s \notin \ZZ.
\end{equation}
We may therefore conclude that the modules $\TypMod{\mu - 1 ; \Delta_{r-1,s}}$ and $\TypMod{\mu + 1 ; \Delta_{r+1,s}}$ appearing on the right-hand side of \eqref{GrFR:LxT} may not be combined into a single indecomposable module.  \eqnDref{GrFR:LxT}{eqn:DimsDon'tMatch} then imply the \emph{genuine} fusion rule
\begin{equation} \label{FR:LxT}
\IrrMod{1} \fuse \TypMod{\mu ; \Delta_{r,s}} = \TypMod{\mu - 1 ; \Delta_{r-1,s}} \oplus \TypMod{\mu + 1 ; \Delta_{r+1,s}} \qquad \text{(\(\mu \neq \lambda_{r,s} , \lambda_{u-r,v-s} \bmod{2}\)).}
\end{equation}
Of course, this deduction is contingent upon the validity of our continuum Verlinde formula.  One can similarly deduce that the typical fusion rule \eqref{GrFR:TxT} implies the corresponding genuine fusion rule for generic $\lambda$ and $\mu$, more precisely for
\begin{equation}
\lambda \neq \pm k \bmod{2}, \qquad 
\mu \neq \lambda_{r,s}, \lambda_{u-r,v-s} \bmod{2}, \qquad 
\lambda + \mu \neq 0, 1, \lambda_{r,s} \pm k, \lambda_{u-r,v-s} \pm k.
\end{equation}

We now turn to the Grothendieck fusion of the atypicals $\IrrMod{r,0}$ with one another.

\begin{prop} \label{prop:FusIrrIrr}
Let $k$ be an admissible level with $u>2$ and $v>1$.  Then, the Grothendieck fusion of the $\sfmod{\ell}{\IrrMod{r,0}}$ is given by
\begin{equation} \label{GrFR:LxL}
\Gr{\sfmod{\ell}{\IrrMod{r,0}}} \fuse \Gr{\sfmod{\ell'}{\IrrMod{r',0}}} = \sum_{r''} \vfuscoeff{(r,1) (r',1)}{(r'',1)} \Gr{\sfmod{\ell + \ell'}{\IrrMod{r'',0}}}.
\end{equation}
\end{prop}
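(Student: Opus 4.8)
The plan is to avoid feeding two atypical modules directly into the continuum Verlinde formula \eqref{eqn:Verlinde}. Doing so would place two factors of the atypical S-matrix \eqref{eqn:SAtypEntries} in the numerator against only a single vacuum factor \eqref{eqn:SVacEntries} in the denominator, leaving one uncancelled factor of $2 \cos \brac{\pi \mu} + \brac{-1}^R 2 \cos \brac{k \pi S}$ inside the $\mu$-integral; the resulting integrals are awkward contour integrals, in contrast to the exact cancellation of denominators that made \propref{prop:FusIrrTyp} clean. Instead I would bootstrap from that atypical-times-standard rule together with the resolution \eqref{eqn:ChL} of \corref{cor:AtypChars}. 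By the spectral-flow covariance \eqref{eqn:GrFusionSF} it suffices to treat $\ell = \ell' = 0$ and then restore general $\ell, \ell'$ at the end.

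First I would expand one factor, say $\Gr{\IrrMod{r',0}}$, using \eqref{eqn:ChL}, writing its class as the alternating double sum over $s$ and $\ell''$ of $\Gr{\sfmod{2v\ell''+s}{\TypMod{r',s}^+}}$ and $-\Gr{\sfmod{2v\brac{\ell''+1}-s}{\TypMod{u-r',v-s}^+}}$. Since Grothendieck fusion is bilinear, I would apply \propref{prop:FusIrrTyp} to $\Gr{\IrrMod{r,0}}$ against each term; this is legitimate at the atypical weights $\lambda' = \lambda_{r',s}$ because the right-hand side of \eqref{eqn:FusIrrTyp} depends on $\lambda'$ only through the shift $r-1+\lambda'$ and is therefore continuous there. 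Crucially, the coefficient $\vpfuscoeff{r, r'}{r''}{u}$ produced by \propref{prop:FusIrrTyp} is independent of $s$ and of the spectral-flow index, so it factors straight out of the double sum.

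The heart of the argument is then to check that the remaining double sum reassembles, term for term, into the resolutions \eqref{eqn:ChL} of the $\IrrMod{r'',0}$. For the first family this requires the parity match $r - 1 + \lambda_{r',s} \equiv \lambda_{r'',s} \pmod{2}$; by \eqref{eqn:DefLambda} the discrepancy is $r + r' - r'' - 1$, which is even exactly when $r + r' + r''$ is odd, and this is precisely the support condition of $\vpfuscoeff{r, r'}{r''}{u}$ in \eqref{eqn:VirFusCoeff}. For the second family I would use the Kac symmetries: the identity $\vpfuscoeff{r, u-r'}{u-r''}{u} = \vpfuscoeff{r, r'}{r''}{u}$ from \eqref{eqn:VirFusIdent} relabels the summation index as $u-r''$, so that the produced module is the second-family term $\TypMod{r-1+\lambda_{u-r',v-s} ; \Delta_{u-r'',v-s}}$, while \eqref{eqn:KacSymmetry} and \eqref{eqn:DefLambda} confirm that its weight equals $\lambda_{u-r'',v-s} \bmod{2}$, again exactly on the support of the fusion coefficient. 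The signs $\brac{-1}^{s-1}$ and the relative $\pm$ of the two families, as well as the spectral-flow indices $2v\ell''+s$ and $2v\brac{\ell''+1}-s$, are all carried along untouched, so the inner alternating sum over $s$ and $\ell''$ is literally \eqref{eqn:ChL} for $\IrrMod{r'',0}$; summing against $\vpfuscoeff{r, r'}{r''}{u}$ then yields \eqref{GrFR:LxL}.

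The main obstacle is precisely this reassembly bookkeeping: one must verify that the weight-parity matching and the Kac-symmetry relabelling are \emph{simultaneously} compatible with the support of $\vpfuscoeff{r, r'}{r''}{u}$, so that no spurious standard modules survive and each reconstructed $\IrrMod{r'',0}$ carries exactly its correct coefficient. The hypothesis $u > 2$ enters only to guarantee that atypicals $\IrrMod{r,0}$ with $r > 1$ exist, so that the statement is not vacuous, while the running assumption $v > 1$ is what makes the resolutions \eqref{eqn:ChL} available in the first place.
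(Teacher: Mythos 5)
Your proposal is correct and follows essentially the same route as the paper: both avoid the direct Verlinde integral (whose atypical denominators fail to cancel), expand one atypical factor via the resolution \eqref{eqn:ChL}, apply \propref{prop:FusIrrTyp} term by term, use the $s$-independence of $\vpfuscoeff{r,r'}{r''}{u}$ together with the parity condition $r+r'+r''$ odd and the Kac symmetry of \eqref{eqn:VirFusIdent} to rewrite the shifted weights as $\lambda_{r'',s}$ and $\lambda_{u-r'',v-s}$, and then reassemble the alternating sum into \eqref{eqn:ChL} for $\IrrMod{r'',0}$. The only cosmetic difference is which of the two factors you resolve, which is immaterial by commutativity.
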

\begin{proof}
This time, the denominators of the atypical S-matrix entries do not cancel and we would have to apply \lemref{lem:UsefulIdentity} to the integrand of the Verlinde formula in order to proceed.  However, it turns out to be much easier to combine \eqref{eqn:ChL} with \propref{prop:FusIrrTyp} in this case.  First, note that because \eqref{eqn:GrFusionSF} holds for Verlinde computations, we may assume that $\ell = \ell' = 0$ for simplicity.  Now,
\begin{align}
\Gr{\IrrMod{r,0}} \fuse \Gr{\IrrMod{r',0}} &= \sum_{s=1}^{v-1} \brac{-1}^{s-1} \sum_{\ell = 0}^{\infty} \set{\Gr{\sfmod{2v \ell + s}{\TypMod{r,s}}} - \Gr{\sfmod{2v \brac{\ell + 1} - s}{\TypMod{u-r,v-s}}}} \fuse \Gr{\IrrMod{r',0}} \notag \\
&= \sum_{s=1}^{v-1} \brac{-1}^{s-1} \sum_{\ell = 0}^{\infty} \sum_{r'',s''} \vfuscoeff{(r,s) (r',1)}{(r'',s'')} \notag \\
&\mspace{50mu} \cdot \set{\Gr{\sfmod{2v \ell + s}{\TypMod{\lambda_{r,s} + r'-1; \Delta_{r'',s''}}}} - \Gr{\sfmod{2v \brac{\ell + 1} - s}{\TypMod{\lambda_{u-r,v-s} + r'-1; \Delta_{r'',s''}}}}},
\end{align}
where we have used Kac symmetry to identify $\vfuscoeff{(r,s) (r',1)}{(r'',s'')}$ with $\vfuscoeff{(u-r,v-s) (r',1)}{(r'',s'')}$.  We now note that $\vfuscoeff{(r,s) (r',1)}{(r'',s'')} = \vfuscoeff{(r,1) (r',1)}{(r'',1)} \delta_{s''=s}$, hence that we may replace $\lambda_{r,s} + r'-1 = \lambda_{r+r'-1,s}$ by $\lambda_{r'',s''}$ (because $r+r'-1 = r'' \bmod 2$ when the fusion coefficient is non-zero) and, similarly, $\lambda_{u-r,v-s} + r'-1$ by $\lambda_{u-r'',v-s''}$.  Using \eqref{eqn:ChL} once again, we obtain
\begin{equation}
\Gr{\IrrMod{r,0}} \fuse \Gr{\IrrMod{r',0}} = \sum_{r''} \vfuscoeff{(r,1) (r',1)}{(r'',1)} \Gr{\IrrMod{r'',0}},
\end{equation}
as required.
\end{proof}

\noindent Putting $r'=2$, we obtain
\begin{equation} \label{eqn:GrFR:LxL1}
\Gr{\IrrMod{r}} \fuse \Gr{\IrrMod{1}} = \Gr{\IrrMod{r-1}} + \Gr{\IrrMod{r+1}},
\end{equation}
with $\IrrMod{-1} = \IrrMod{u-1} = \set{0}$ as usual.  It is easy to check that this result always lifts to the genuine fusion ring, hence that \eqref{GrFR:LxL} does too (using associativity).  Our (conjectured) Verlinde formula therefore implies the following result:

\begin{thm} \label{thm:FusionSubring}
Let $k$ be an admissible level and let $\fusring{k}$ denote the fusion ring generated by the admissible level $k$ $\AKMA{sl}{2}$-modules.  Then, the subring of $\fusring{k}$ generated by the $\IrrMod{r}$, with $r=0,1,\ldots,u-2$, is isomorphic to the non-negative integer level fusion ring $\fusring{u-2}$.  In particular, the irreducible module $\IrrMod{u-2}$ is a simple current in $\fusring{k}$ of dimension $\frac{1}{4} \brac{u-2}v$.
\end{thm}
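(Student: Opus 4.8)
The plan is to show that the Grothendieck fusion rules \eqref{GrFR:LxL} of \propref{prop:FusIrrIrr} already realise the target ring abstractly, and then to promote them to genuine fusion rules. I would proceed in three stages: (i) match the structure constants occurring in \eqref{GrFR:LxL} with the known fusion coefficients of $\AKMA{sl}{2}$ at non-negative integer level $u-2$; (ii) lift the Grothendieck fusion of the $\IrrMod{r}$ to genuine fusion, establishing complete reducibility; and (iii) identify the simple current and compute its conformal dimension. The case $v=1$ is immediate, since then $k=u-2$ is itself a non-negative integer and $\fusring{k}=\fusring{u-2}$ with the $\IrrMod{r}$, $r=0,\ldots,u-2$, exhausting the spectrum, so I would assume $v>1$ (and, for \propref{prop:FusIrrIrr}, $u>2$) throughout.

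For stage (i), I would set $a=r-1$, $b=r'-1$, $c=r''-1$ and note that the coefficient $\vfuscoeff{(r,1) (r',1)}{(r'',1)}$ in \eqref{GrFR:LxL} equals $\vpfuscoeff{r, r'}{r''}{u}$ by the first identity in \eqref{eqn:VirFusIdent}. Inserting $w=u$ into \eqref{eqn:VirFusCoeff} and translating the defining conditions $\abs{r-r'}+1 \leqslant r'' \leqslant \min\set{r+r'-1,2u-r-r'-1}$, $r+r'+r''$ odd, into the shifted variables yields exactly $\abs{a-b}\leqslant c \leqslant \min\set{a+b,2(u-2)-a-b}$ with $a+b+c$ even, which are precisely the $\SLA{su}{2}$ level $u-2$ Verlinde coefficients. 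Since $\IrrMod{r}$, $r=0,1,\ldots,u-2$, are in bijection with the irreducibles generating $\fusring{u-2}$, this matches structure constants and so exhibits a ring isomorphism at the Grothendieck level.

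Stage (ii) is where I expect the main work to lie, since the Verlinde/Grothendieck computation only controls composition factors. I must rule out that the genuine product glues the factors $\IrrMod{r-1}$ and $\IrrMod{r+1}$ of \eqref{eqn:GrFR:LxL1} into a non-split indecomposable (or produces stray standard modules). The key estimate is a grading incompatibility: from $\Delta_a = \tfrac{a(a+2)}{4t}$ one gets $\Delta_{r+1}-\Delta_{r-1} = \tfrac{r+1}{t} = \tfrac{(r+1)v}{u}$, and since $1 \leqslant r+1 \leqslant u-1$ and $\gcd\set{u,v}=1$ this is never an integer. Exactly as in the passage from \eqnref{eqn:DimsDon'tMatch} to \eqref{FR:LxT}, any indecomposable containing both simple modules would force their minimal conformal dimensions to differ by an integer, which is impossible; hence $\IrrMod{r}\fuse\IrrMod{1} = \IrrMod{r-1}\oplus\IrrMod{r+1}$ genuinely. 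Associativity, together with the fact that $\IrrMod{1}$ generates the whole subring, then propagates complete reducibility to every product $\IrrMod{r}\fuse\IrrMod{r'}$ by induction, so genuine fusion coincides with \eqref{GrFR:LxL}. The principal obstacle is that this lift is contingent on fusion being exact for these theories (so that the Grothendieck data faithfully records composition factors) and on the continuum Verlinde formula; I would flag both assumptions explicitly, as is done elsewhere in the paper.

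For stage (iii), within $\fusring{u-2}$ the top module $\IrrMod{u-2}$ obeys the standard simple current rule $\IrrMod{u-2}\fuse\IrrMod{a} = \IrrMod{u-2-a}$, so in particular $\IrrMod{u-2}\fuse\IrrMod{u-2} = \IrrMod{0}$ and $\IrrMod{u-2}$ is invertible, hence a simple current. Its conformal dimension is $\Delta_{u-2} = \tfrac{(u-2)u}{4t} = \tfrac{(u-2)uv}{4u} = \tfrac{1}{4}(u-2)v$, the claimed value. The boundary case $u=2$ is consistent: the subring degenerates to $\fusring{0}$, the simple current is the vacuum $\IrrMod{0}$, and the formula correctly returns dimension $0$.
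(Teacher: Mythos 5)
Your proposal is correct and follows essentially the same route as the paper: it rests on \propref{prop:FusIrrIrr}, identifies the coefficients $\vpfuscoeff{r,r'}{r''}{u}$ with the level $u-2$ fusion coefficients, lifts \eqref{eqn:GrFR:LxL1} to the genuine fusion ring via the non-integrality of $\Delta_{r+1}-\Delta_{r-1}=(r+1)v/u$ (the check the paper leaves implicit in ``it is easy to check''), and propagates by associativity. The only difference is that you spell out the details the paper omits, including the boundary cases $v=1$ and $u=2$.
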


\noindent We remark that this theorem holds for $v=1$, where the simple current $\IrrMod{u-2} = \IrrMod{k}$ has dimension $\frac{1}{4} k$.

\begin{ex}
For $k=-\tfrac{1}{2}$, $u=3$ and the theorem says that $\IrrMod{0}$ and $\IrrMod{1}$ generate a subring isomorphic to the fusion ring of $\AKMA{sl}{2}$ at level $1$.  Thus, we indeed have a simple current:  $\IrrMod{1} \fuse \IrrMod{1} = \IrrMod{0}$.  For $k=-\tfrac{4}{3}$, $u=2$, so the theorem only tells us that the vacuum module $\IrrMod{0}$ generates a subring isomorphic to the fusion ring of the trivial theory --- the simple current guaranteed by the theorem is only non-trivial when $u>2$.
\end{ex}

Finally, the Grothendieck fusion rules of the $\DiscMod{r,s}^+$ and their images under spectral flow follow, with a little effort, from \eqref{eqn:Induct} and the rules already determined.  We will need the following identities pertaining to Virasoro fusion coefficients in addition to those stated in \eqref{eqn:VirFusIdent}.  They follow directly from the explicit formula \eqref{eqn:VirFusCoeff}.

\begin{lem}
The factorised Virasoro fusion coefficients $\vpfuscoeff{t,t'}{t''}{w}$ satisfy the following identities:
\begin{subequations}
\begin{align}
\vpfuscoeff{t,t'}{t''}{w} - \vpfuscoeff{t+1,t'+1}{t''}{w} &= \delta_{t''=2w-t-t'-1} & &\text{(\(t+t' > w-1\)),} \label{eqn:FusCoeff1} \\
\vpfuscoeff{t,t'}{t''}{w} - \vpfuscoeff{t+1,t'+1}{t''}{w} &= 0 & &\text{(\(t+t' = w-1\)),} \label{eqn:FusCoeff2} \\
\vpfuscoeff{t+1,t'+1}{t''}{w} - \vpfuscoeff{t,t'}{t''}{w} &= \delta_{t''=t+t'+1} & &\text{(\(t+t' < w-1\)).} \label{eqn:FusCoeff3}
\end{align}
\end{subequations}
\end{lem}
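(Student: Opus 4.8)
The plan is to prove all three identities by substituting the explicit expression \eqref{eqn:VirFusCoeff} for $\vpfuscoeff{t,t'}{t''}{w}$ directly and comparing the supports in $t''$ of the two coefficients in each difference. The crucial observation is that the replacement $\brac{t,t'} \mapsto \brac{t+1,t'+1}$ leaves two of the three defining conditions untouched: the lower bound $\abs{t-t'}+1$ is unchanged, and the parity requirement that $t+t'+t''$ be odd is preserved, because the shift alters $t+t'$ by the even number $2$. Hence $\vpfuscoeff{t,t'}{t''}{w}$ and $\vpfuscoeff{t+1,t'+1}{t''}{w}$ are indicators of intervals in $t''$ that share the same lower endpoint and the same parity sublattice, and differ only through their upper endpoints.

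First I would record how the upper endpoint changes. Writing $M = \min \set{t+t'-1 , 2w-t-t'-1}$ for the bound governing $\vpfuscoeff{t,t'}{t''}{w}$, the bound governing $\vpfuscoeff{t+1,t'+1}{t''}{w}$ is $M' = \min \set{t+t'+1 , 2w-t-t'-3}$. The whole statement therefore reduces to determining which argument of each minimum is active, which is dictated by the sign of $\brac{t+t'} - w$. The three regimes in the Lemma correspond exactly to this trichotomy: for $t+t' > w-1$ the second argument wins in both minima, giving $M = 2w-t-t'-1$ and $M' = M-2$; for $t+t' = w-1$ one finds $M = M' = t+t'-1$; and for $t+t' < w-1$ the first argument wins in both, giving $M = t+t'-1$ and $M' = M+2$.

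With the endpoints in hand the conclusion is immediate. In the first regime the support shrinks by two at the top, so $\vpfuscoeff{t,t'}{t''}{w} - \vpfuscoeff{t+1,t'+1}{t''}{w}$ is supported on $\set{M-1,M}$; the parity condition kills $t''=M-1$ (there $t+t'+t''$ is even) and selects $t''=M=2w-t-t'-1$, which is \eqref{eqn:FusCoeff1}. The second regime gives coinciding supports and hence \eqref{eqn:FusCoeff2}. In the third regime the support grows by two at the top, the difference $\vpfuscoeff{t+1,t'+1}{t''}{w} - \vpfuscoeff{t,t'}{t''}{w}$ is supported on $\set{M+1,M+2}$, and parity selects $t''=M+2=t+t'+1$, which is \eqref{eqn:FusCoeff3}.

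I do not anticipate a serious obstacle, since the argument is elementary case-checking; the one point demanding care is the interaction of the shifting upper endpoint with the fixed lower bound $\abs{t-t'}+1$. One must verify that at the value of $t''$ singled out by parity the interval is genuinely nonempty, so that the surviving coefficient really equals $1$ rather than being forced to vanish by $\abs{t-t'}+1 > t''$. A short check confirms that $\abs{t-t'}+1$ never exceeds $2w-t-t'-1$ (the relevant endpoint in the first regime) nor $t+t'+1$ (in the third) once $t,t'$ lie in the admissible range $1 \leqslant t,t' \leqslant w-1$ of the Kac table, so no spurious cancellation occurs and the Kronecker deltas are reproduced exactly.
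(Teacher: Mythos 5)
Your proof is correct and follows exactly the route the paper intends: the paper offers no argument beyond the remark that the identities ``follow directly from the explicit formula \eqref{eqn:VirFusCoeff}'', and your case analysis of the two minima (together with the parity and lower-bound checks) is precisely that direct verification, carried out in full.
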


\begin{prop} \label{prop:FusDrs}
Let $k$ be an admissible level with $v>1$.  Then, the Grothendieck fusion rules involving the $\sfmod{\ell}{\DiscMod{r,s}^+}$ are
\begin{align}
\Gr{\sfmod{\ell}{\IrrMod{r,0}}} \fuse \Gr{\sfmod{\ell'}{\DiscMod{r',s'}^+}} &= \sum_{r''} \vfuscoeff{(r,1) (r',1)}{(r'',1)} \Gr{\sfmod{\ell + \ell'}{\DiscMod{r'',s'}^+}}, \label{GrFR:LxD} \\
\Gr{\sfmod{\ell}{\TypMod{\lambda ; \Delta_{r,s}}}} \fuse \Gr{\sfmod{\ell'}{\DiscMod{r',s'}^+}} &= \sum_{r'',s''} \vfuscoeff{(r,s) (r',s'+1)}{(r'',s'')} \Gr{\sfmod{\ell + \ell'}{\TypMod{\lambda + \lambda_{r',s'} ; \Delta_{r'',s''}}}} \notag \\
&\mspace{55mu} + \sum_{r'',s''} \vfuscoeff{(r,s) (r',s')}{(r'',s'')} \Gr{\sfmod{\ell + \ell' + 1}{\TypMod{\lambda + \lambda_{r',s'+1} ; \Delta_{r'',s''}}}}, \label{GrFR:ExD} \\
\Gr{\sfmod{\ell}{\DiscMod{r,s}^+}} \fuse \Gr{\sfmod{\ell'}{\DiscMod{r',s'}^+}} &= 
\begin{cases}
\displaystyle \sum_{r'',s''} \vfuscoeff{(r,s) (r',s')}{(r'',s'')} \Gr{\sfmod{\ell + \ell' + 1}{\TypMod{\lambda_{r'',s+s'+1}; \Delta_{r'',s''}}}} \\
\displaystyle \mspace{20mu} + \sum_{r''} \vfuscoeff{(r,1) (r',1)}{(r'',1)} \Gr{\sfmod{\ell + \ell'}{\DiscMod{r'',s+s'}^+}}, & \text{if \(s+s'<v\),} \\
\displaystyle \sum_{r'',s''} \vfuscoeff{(r,s+1) (r',s'+1)}{(r'',s'')} \Gr{\sfmod{\ell + \ell' + 1}{\TypMod{\lambda_{r'',s+s'+1}; \Delta_{r'',s''}}}} \\
\displaystyle \mspace{20mu} + \sum_{r''} \vfuscoeff{(r,1) (r',1)}{(r'',1)} \Gr{\sfmod{\ell + \ell' + 1}{\DiscMod{u-r'',s+s'-v+1}^+}}, & \text{if \(s+s' \geqslant v\).}
\end{cases}
\label{GrFR:DxD}
\end{align}
\end{prop}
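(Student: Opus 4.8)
The plan is to reduce every fusion involving a $\DiscMod{}^+$-type module to fusions with standard and $\IrrMod{}$-type modules by means of the Grothendieck character identity \eqref{eqn:Induct}, $\Gr{\DiscMod{r,s}^+} = \Gr{\sfmod{}{\TypMod{r,s+1}^+}} - \Gr{\sfmod{}{\DiscMod{r,s+1}^+}}$ (valid for $s \neq v-1$), and then to induct \emph{downwards} on the discrete index $s'$. Since \eqref{eqn:GrFusionSF} holds for all Verlinde products, I fix $\ell = \ell' = 0$ at the outset. The common base of every induction is $s' = v-1$, where \eqref{eqn:SFRelations} gives $\DiscMod{r',v-1}^+ \cong \sfmod{}{\IrrMod{u-r',0}}$; this turns each base case into an instance of the atypical rules already established in \propref{prop:FusIrrIrr}, \propref{prop:FusIrrTyp} and \propref{prop:FusTyp}, after massaging with Kac symmetry \eqref{eqn:KacSymmetry}, the coefficient identities \eqref{eqn:VirFusIdent}, and the observation that a Virasoro fusion coefficient carrying an index equal to $v$ falls outside the Kac table and hence vanishes.

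For \eqref{GrFR:LxD} this works cleanly. In the inductive step I substitute \eqref{eqn:Induct} for $\Gr{\DiscMod{r',s'}^+}$, evaluate $\Gr{\IrrMod{r,0}} \fuse \Gr{\sfmod{}{\TypMod{r',s'+1}^+}}$ with \propref{prop:FusIrrTyp} (using that $\TypMod{r',s'+1}^+$ carries the standard character with weight label $\lambda_{r',s'+1}$, and that $r-1 + \lambda_{r',s'+1} \equiv \lambda_{r'',s'+1} \bmod 2$ whenever $\vfuscoeff{(r,1)(r',1)}{(r'',1)} \neq 0$), and treat the remaining term $\Gr{\IrrMod{r,0}} \fuse \Gr{\sfmod{}{\DiscMod{r',s'+1}^+}}$ by the induction hypothesis. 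The shared factor $\vfuscoeff{(r,1)(r',1)}{(r'',1)}$ pulls out, and re-collecting the two pieces through \eqref{eqn:Induct} (now read for $\DiscMod{r'',s'}^+$) reproduces \eqref{GrFR:LxD}.

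For \eqref{GrFR:ExD} the inductive step again inserts \eqref{eqn:Induct}, but now I evaluate the standard term $\Gr{\TypMod{\lambda ; \Delta_{r,s}}} \fuse \Gr{\sfmod{}{\TypMod{r',s'+1}^+}}$ with the standard fusion rule \eqref{eqn:GrFusTyp} and the discrete term by induction. The decisive simplifications are that the top spectral-flow contributions (those carrying $\sfmod{2}{\cdots}$) cancel against the inductive term, while the congruence $\lambda_{r',s'+1} + k \equiv \lambda_{r',s'} \bmod 2$ recasts the survivors into the two sums of \eqref{GrFR:ExD}; in contrast to \eqref{GrFR:DxD} below, no difference identities are needed here.

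The hard part is \eqref{GrFR:DxD}. Expanding the second factor via \eqref{eqn:Induct}, the standard piece $\Gr{\sfmod{}{\TypMod{r',s'+1}^+}} \fuse \Gr{\DiscMod{r,s}^+}$ is evaluated by the just-proved \eqref{GrFR:ExD} and the discrete piece by the induction hypothesis, i.e.\ \eqref{GrFR:DxD} at $s'+1$. Once the $\sfmod{2}{\cdots}$ terms cancel by commutativity of the Virasoro coefficients, the surviving $s$-fusion coefficients do \emph{not} match \eqref{GrFR:DxD} termwise; they must be reorganised by the difference identities \eqref{eqn:FusCoeff1}--\eqref{eqn:FusCoeff3}, each of which contributes an extra $\delta$-term. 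This extra term is exactly a $\TypMod{}$-indecomposable that \eqref{eqn:Induct} reabsorbs, converting it into the spectral-flow-zero $\DiscMod{}^+$ on the right-hand side while cancelling the leftover $\sfmod{}{\DiscMod{}^+}$ contribution. The case split $s+s' < v$ versus $s+s' \geqslant v$ records precisely which of \eqref{eqn:FusCoeff3}, \eqref{eqn:FusCoeff2} or \eqref{eqn:FusCoeff1} applies: the base case $s'=v-1$ always sits in the $\geqslant v$ regime, the downward recursion crosses the boundary exactly once at $s+s'=v-1$, and in the $\geqslant v$ branch the $\delta_{s''=2v-s-s'-1}$ of \eqref{eqn:FusCoeff1} must be rewritten via Kac symmetry \eqref{eqn:KacSymmetry} and \eqref{eqn:SFRelations} to produce the wrapped label $\DiscMod{u-r'',s+s'-v+1}^+$ with its extra unit of spectral flow. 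Steering the spectral-flow indices and weight labels cleanly through this boundary crossing is the main bookkeeping obstacle.
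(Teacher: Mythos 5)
Your proposal is correct and follows essentially the same route as the paper: downward induction on $s'$ from the base case $s'=v-1$ (where \eqref{eqn:SFRelations} reduces everything to \propref{prop:FusIrrIrr}, \propref{prop:FusIrrTyp} and \propref{prop:FusTyp}), with the inductive step driven by \eqref{eqn:Induct}, the clean cancellations you describe for \eqref{GrFR:LxD} and \eqref{GrFR:ExD}, and the three-way case analysis via \eqref{eqn:FusCoeff1}--\eqref{eqn:FusCoeff3} for \eqref{GrFR:DxD}. The only quibble is cosmetic: in the $s+s'\geqslant v$ branch the relevant delta from \eqref{eqn:FusCoeff1} is $\delta_{s''=2v-s-s'-2}$ rather than $\delta_{s''=2v-s-s'-1}$, but this does not affect the argument.
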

\begin{proof}
As the Verlinde formula respects spectral flow, we may simplify our calculations by assuming that $\ell = \ell' = 0$.  Probably the easiest proofs of these Grothendieck fusion rules are by induction on $s'$ using \eqref{es:Drs}.  We first detail the argument for \eqref{GrFR:LxD}.  The base case is $s' = v-1$ for which $\DiscMod{r',s'}^+ = \sfmod{}{\IrrMod{u-r',0}}$:
\begin{align}
\Gr{\IrrMod{r,0}} \fuse \Gr{\DiscMod{r',v-1}^+} &= \Gr{\IrrMod{r,0}} \fuse \Gr{\sfmod{}{\IrrMod{u-r',0}}} = \sum_{r''} \vfuscoeff{(r,1) (u-r',1)}{(r'',1)} \Gr{\sfmod{}{\IrrMod{r'',0}}} \notag \\
&= \sum_{r''} \vfuscoeff{(r,1) (u-r',1)}{(u-r'',1)} \Gr{\sfmod{}{\IrrMod{u-r'',0}}} = \sum_{r''} \vfuscoeff{(r,1) (r',1)}{(r'',1)} \Gr{\sfmod{}{\IrrMod{u-r'',0}}} \notag \\
&= \sum_{r''} \vfuscoeff{(r,1) (r',1)}{(r'',1)} \Gr{\DiscMod{r'',v-1}^+}.
\end{align}
Here, we have used \propref{prop:FusIrrIrr}, shifted $r''$ to $u-r''$, and employed the second identity of \eqref{eqn:VirFusIdent}.  Assuming that \eqref{GrFR:LxD} holds for a given $s'+1$, we obtain
\begin{align}
\Gr{\IrrMod{r,0}} \fuse \Gr{\DiscMod{r',s'}^+} &= \Gr{\IrrMod{r,0}} \fuse \brac{\Gr{\sfmod{}{\TypMod{\lambda_{r',s'+1}; \Delta_{r',s'+1}}}} - \Gr{\sfmod{}{\DiscMod{r',s'+1}^+}}} \notag \\
&= \sum_{r'',s''} \vfuscoeff{(r,1) (r',s'+1)}{(r'',s'')} \Gr{\sfmod{}{\TypMod{\lambda_{r+r'-1,s'+1}; \Delta_{r'',s''}}}} - \sum_{r''} \vfuscoeff{(r,1) (r',1)}{(r'',1)} \Gr{\sfmod{}{\DiscMod{r'',s'+1}^+}} \notag \\
&= \sum_{r''} \vfuscoeff{(r,1) (r',1)}{(r'',1)} \brac{\Gr{\sfmod{}{\TypMod{\lambda_{r'',s'+1}; \Delta_{r'',s'+1}}}} - \Gr{\sfmod{}{\DiscMod{r'',s'+1}^+}}} \notag \\
&= \sum_{r''} \vfuscoeff{(r,1) (r',1)}{(r'',1)} \Gr{\sfmod{\ell + \ell'}{\DiscMod{r'',s'}^+}},
\end{align}
using \eqref{eqn:Induct}, \propref{prop:FusIrrTyp}, the first identity of \eqref{eqn:VirFusIdent}, and noting that the Virasoro fusion coefficient vanishes unless $r'' = r+r'-1 \bmod{2}$.

\eqnref{GrFR:ExD} follows by a similar argument involving \propref{prop:FusTyp}, though this time we need all the identities of \eqref{eqn:VirFusIdent}.  We omit the details.  \eqnref{GrFR:DxD} is the most involved.  The base case $s' = v-1$ proceeds smoothly using \eqref{GrFR:LxD} and noting that $s+s'$ is necessarily $v$ or greater.  We remark that in this case, the first sum on the right-hand side of \eqref{GrFR:DxD} vanishes because $(r',s'+1) = (r',v)$ falls outside the Kac table.

To tackle the induction step with $s'<v-1$, we use \eqref{GrFR:ExD} and \eqref{eqn:Induct} to derive that
\begin{multline} \label{eqn:Derive}
\Gr{\DiscMod{r,s}^+} \fuse \Gr{\DiscMod{r',s'}^+} = \sum_{r''} \vpfuscoeff{r,r'}{r''}{u} \bigg\{ \sum_{s''} \vpfuscoeff{s+1,s'+1}{s''}{v} \Gr{\sfmod{}{\TypMod{\lambda_{r'',s+s'+1}; \Delta_{r'',s''}}}} \\
+ \sum_{s''} \vpfuscoeff{s,s'+1}{s''}{v} \Gr{\sfmod{2}{\TypMod{\lambda_{r'',s+s'+2}; \Delta_{r'',s''}}}} \bigg\} - \Gr{\DiscMod{r,s}^+} \fuse \Gr{\sfmod{}{\DiscMod{r',s'+1}^+}},
\end{multline}
into which we substitute the appropriate version of \eqref{GrFR:DxD}.  There are three cases to consider:
\begin{enumerate}[leftmargin=*]
\item When $s+s' \geqslant v$, the first term of the substitution almost precisely cancels the second term of \eqref{eqn:Derive}.  Indeed, the identity \eqref{eqn:FusCoeff1} lets us replace the second line of \eqref{eqn:Derive} by
\begin{equation} \label{eqn:Recognise}
\Gr{\sfmod{2}{\TypMod{\lambda_{r'',s+s'+2}; \Delta_{r'',2v-s-s'-2}}}} - \Gr{\sfmod{2}{\DiscMod{u-r'',s+s'-v+2}^+}}
\end{equation}
which, after a little massaging, we recognise from \eqref{eqn:Induct} as $\tGr{\sfmod{}{\DiscMod{u-r'',s+s'-v+1}^+}}$.
\item When $s+s'=v-1$, everything proceeds as in the previous case except that \eqref{eqn:Recognise} is now noted to be $\tGr{\sfmod{}{\IrrMod{u-r'',0}}} = \tGr{\DiscMod{r'',v-1}^+}$ by \eqref{es:Lr}.  The identity \eqref{eqn:FusCoeff2} takes care of the first line of \eqref{eqn:Derive}.
\item Finally, if $s+s' \leqslant v-2$, then the first term of the substitution perfectly cancels the second term of \eqref{eqn:Derive}.  We therefore have to use \eqref{eqn:FusCoeff3} to isolate the term $\tGr{\sfmod{}{\TypMod{\lambda_{r'',s+s'+1}; \Delta_{r'',s+s'+1}}}}$ from the first term of \eqref{eqn:Derive} (thereby leaving the first term in the required form).  Up to this first term, the right-hand side of \eqref{eqn:Derive} then becomes $\tGr{\sfmod{}{\TypMod{\lambda_{r'',s+s'+1}; \Delta_{r'',s+s'+1}}}} - \tGr{\sfmod{}{\DiscMod{r'',s+s'+1}^+}} = \tGr{\DiscMod{r'',s+s'}^+}$. \qedhere
\end{enumerate}
\end{proof}

\noindent It is straight-forward, though a little tedious, to analyse when these Grothendieck fusion rules lift to genuine fusion rules.  For example, \eqref{GrFR:LxD} always does.

Explicit formulae aside, an important consequence of these computations is the following:

\begin{thm} \label{thm:GrFusCoeffPos}
The Grothendieck fusion coefficients are non-negative integers.
\end{thm}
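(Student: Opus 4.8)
The plan is to recognise this statement as a corollary of the explicit fusion computations already performed: essentially all of the genuine work has been pushed into the fusion-rule propositions, so the proof amounts to inspecting the closed forms obtained there and checking that they are manifestly non-negative.

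First I would reduce to the untwisted case. By the spectral-flow covariance \eqref{eqn:GrFusionSF}, each Grothendieck fusion coefficient depends only on the sum $\ell + \ell'$ of the spectral-flow indices, so it suffices to treat representatives with $\ell = \ell' = 0$. Moreover, by the identifications \eqref{eqn:SFRelations}, every irreducible admissible module is a spectral-flow image of one of $\IrrMod{r,0}$, $\DiscMod{r,s}^+$ or $\TypMod{\lambda ; \Delta_{r,s}}$; in particular each conjugate discrete module $\DiscMod{r,s}^-$ is a spectral-flow image of a $\DiscMod{}^+$-type module or of some $\IrrMod{r,0}$. Consequently every Grothendieck fusion coefficient coincides, up to spectral flow, with one of those displayed in \propQref{prop:FusTyp}{prop:FusIrrTyp}{prop:FusIrrIrr}{prop:FusDrs}.

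Next I would simply read off their structure. In each of \eqref{eqn:FusCoeffTyp}, \eqref{eqn:FusIrrTyp}, \eqref{GrFR:LxL}, \eqref{GrFR:LxD}, \eqref{GrFR:ExD} and \eqref{GrFR:DxD}, the coefficient is a finite sum of products of the factorised Virasoro fusion coefficients $\vpfuscoeff{\cdot , \cdot}{\cdot}{u}$ and $\vpfuscoeff{\cdot , \cdot}{\cdot}{v}$, multiplied by Kronecker and Dirac selectors in the labels $\ell''$ and $\lambda''$. The explicit formula \eqref{eqn:VirFusCoeff} shows that each $\vpfuscoeff{\cdot , \cdot}{\cdot}{w}$ takes only the values $0$ and $1$, and the selectors are likewise non-negative. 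Since the non-negative integers are closed under addition and multiplication, every coefficient is a non-negative integer, as claimed.

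The point worth emphasising is where the difficulty genuinely lies. As stressed in the introduction, the atypical characters are expressed through the alternating-sign resolutions of \propref{prop:Resolutions}, so a naive reading of the Verlinde formula produces \emph{differences} of standard coefficients, with no a priori reason for positivity. The real content is therefore the cancellation of these signs in the closed-form rules --- most visibly in the three-case induction proving \propref{prop:FusDrs}, where the identities \eqref{eqn:FusCoeff1}--\eqref{eqn:FusCoeff3} are exactly what convert alternating sums back into single non-negative Virasoro coefficients. Granting the cited propositions, the theorem is immediate; the only residual obstacle is the bookkeeping required to confirm that those propositions, together with spectral flow and conjugation, do exhaust all pairs of irreducibles.
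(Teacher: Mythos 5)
Your proposal is correct and matches the paper's own (implicit) argument: the theorem is presented there as an immediate consequence of the explicit Grothendieck fusion rules in \propQref{prop:FusTyp}{prop:FusIrrTyp}{prop:FusIrrIrr}{prop:FusDrs}, whose coefficients are visibly sums of products of the $0$--$1$ valued Virasoro coefficients \eqref{eqn:VirFusCoeff}, with spectral flow reducing all other cases to these. Your closing observation --- that the real content is the sign cancellation effected inside the proofs of those propositions via \eqref{eqn:FusCoeff1}--\eqref{eqn:FusCoeff3} --- is exactly the point the paper makes when it calls the result ``highly non-trivial'' despite the alternating resolutions.
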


\noindent This is an extremely important consistency check for our conjectured continuum Verlinde formula, the positivity strongly supporting the truth of the conjecture.  Granting this, we feel justified in claiming that the longstanding problem of obtaining a sensible Verlinde formula for fractional level \WZW{} models has been solved, at least for $\AKMA{sl}{2}$.

\section{Comparison with Koh and Sorba} \label{sec:KohSorba}

Of course, the initial difficulty encountered when trying to apply the standard Verlinde formula \cite{KacMod88} for highest weight admissible modules to fractional level models was that the resulting ``fusion coefficients'' were often negative integers.  This was first noted by Koh and Sorba in \cite{KohFus88}.  These negative Verlinde coefficients were shown in \cite{RidSL208}, for $k=-\tfrac{1}{2}$, to be consequences of treating the characters of the highest weight admissible modules as theta functions and implicitly continuing them outside their correct convergence regions \eqref{eqn:Convergence}.  Specifically, it was noted that the correct $k=-\tfrac{1}{2}$ fusion rules reduce to those deduced by Koh and Sorba from the standard Verlinde formula if one imposes the character identity \eqref{eqn:ZeroCharacter} (and its spectral flow versions) at the level of modules.  This reduction was also explicitly checked to reproduce all negative coefficients for $k=-\tfrac{4}{3}$ in \cite{CreMod12}.

In the general formalism we have developed here, this reduction procedure amounts to setting all the standard modules to zero in the Grothendieck fusion ring.  The following corollary of the computations of \secref{sec:Verlinde} is therefore pertinent:

\begin{prop} \label{prop:GrFRIdeal}
For $v>1$, the standard modules generate an ideal of the Grothendieck fusion ring.
\end{prop}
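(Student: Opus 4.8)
The plan is to read off the answer from the explicit Grothendieck fusion rules already established in \secref{sec:Verlinde}. Since the continuum Verlinde formula produces a \emph{commutative} ring, it suffices to show that left-multiplication by any standard module carries the whole Grothendieck fusion ring into the linear span of the standard modules; commutativity then upgrades this one-sided closure to a genuine two-sided ideal. By bilinearity of the fusion product, I would further reduce the claim to checking that the Grothendieck fusion of a single standard module $\sfmod{\ell}{\TypMod{\lambda ; \Delta_{r,s}}}$ with each member of a spanning set of the ring again lands in the span of the standards. A natural spanning set is the collection of all irreducible admissibles, namely the $\sfmod{\ell}{\IrrMod{r,0}}$, the $\sfmod{\ell}{\DiscMod{r,s}^{\pm}}$, and the typical $\sfmod{\ell}{\TypMod{\lambda ; \Delta_{r,s}}}$.

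The verification then splits into cases according to the second factor. First, fusion of a standard module with a standard module is given by \eqref{eqn:GrFusTyp} in \propref{prop:FusTyp}, whose right-hand side is visibly a finite sum of standard modules (the shifted weights $\lambda + \lambda' \pm k$ and $\lambda + \lambda'$ may take atypical values, but the corresponding indecomposables $\TypMod{r'',s''}^{\pm}$ are themselves among the standards). Second, fusion with an atypical $\sfmod{\ell}{\IrrMod{r,0}}$ is given by \eqref{eqn:FusIrrTyp} in \propref{prop:FusIrrTyp}, again a sum of standards. Third, fusion with an atypical $\sfmod{\ell}{\DiscMod{r',s'}^+}$ is given by \eqref{GrFR:ExD} in \propref{prop:FusDrs}, once more a sum of standards; the conjugate case $\DiscMod{r',s'}^-$ follows by applying the conjugation automorphism, since $\conjmod{\TypMod{\lambda ; \Delta_{r,s}}} = \TypMod{-\lambda ; \Delta_{r,s}}$ already sends standards to standards. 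In each case the spectral flow indices are merely additive by \eqref{eqn:GrFusionSF}, so the output modules remain standard. Having checked these cases, I would conclude that the span of the standard modules is closed under fusion by arbitrary ring elements and hence is an ideal.

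The argument is genuinely short because all the work resides in the already-computed rules: the essential observation is that the right-hand sides of \eqref{eqn:GrFusTyp}, \eqref{eqn:FusIrrTyp} and \eqref{GrFR:ExD} never produce an atypical irreducible $\IrrMod{r,0}$ or a bare $\DiscMod{r,s}^{\pm}$. The only point demanding care --- and the nearest thing to an obstacle --- is the bookkeeping at the atypical weights $\lambda = \lambda_{r,s}, \lambda_{u-r,v-s}$, where the ``standard module'' is the indecomposable $\TypMod{r,s}^{\pm}$ rather than an irreducible. One must keep in mind that these indecomposables are counted among the standards, so that the subspace whose closure we are verifying is correctly identified and genuinely closed. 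No computation beyond \secref{sec:Verlinde} is required.
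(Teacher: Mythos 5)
Your proposal is correct and follows exactly the route the paper intends: the proposition is stated as an immediate corollary of the explicit Grothendieck fusion rules of \secref{sec:Verlinde}, since the right-hand sides of \eqref{eqn:GrFusTyp}, \eqref{eqn:FusIrrTyp} and \eqref{GrFR:ExD} consist solely of standard modules, and commutativity plus the spanning set of irreducibles (with the $\DiscMod{r,s}^-$ handled either by conjugation or, even more directly, as spectral flow images via \eqref{eqn:SFRelations} together with \eqref{eqn:GrFusionSF}) closes the argument. Your cautionary remark about the atypical weights, where the ``standard'' is the indecomposable $\TypMod{r,s}^{\pm}$ rather than an irreducible, is exactly the right bookkeeping point.
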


\noindent We claim that the structure coefficients of the quotient of the Grothendieck fusion ring by the ideal of (equivalence classes of) standard modules are precisely the Verlinde coefficients computed by Koh and Sorba.  Settling this claim will then confirm that the explanation detailed in \cite{RidSL208} for the negative ``fusion coefficients'' is correct for all admissible $\AKMA{sl}{2}$ theories.  We have no doubt that this continues to hold for admissible theories based on higher rank semisimple Lie algebras.

Before demonstrating our claim, we pause to note that we do not claim that the standard modules generate an ideal of the genuine fusion ring itself.  The known fusion rules for $k=-\tfrac{1}{2}$ and $k=-\tfrac{4}{3}$ \cite{GabFus01,RidFus10} show that this assertion is false.  Rather, we expect that the irreducible standard modules (the typical modules) are \emph{projective} in an appropriate category of $\AKMA{sl}{2}_k$-modules (vertex algebra modules) and that they span, together with the projective covers of the atypical irreducibles, an ideal of the fusion ring.  We hope to return to questions of projectivity in the future.

To compare with Koh and Sorba, we first present a short dictionary to translate their notation:
\begin{center}
\begin{tabular}{c||C|C|C|C|C|C|C}
KS & m & t & u & n & k & \phi_{0:n} & \phi_{k:n} \\
\hline
CR & k & u-2v & v & r-1 & s & \IrrMod{n+1,0} & \DiscMod{n+1,k}^+
\end{tabular}
\end{center}
Modulo an obvious typo, their fusion rules \cite[Eq.~(14)]{KohFus88} become
\begin{equation} \label{eqn:KohSorba}
\KSGr{\DiscMod{r,s}^+} \fuse \KSGr{\DiscMod{r',s'}^+} = 
\begin{cases}
\displaystyle +\sum_{r''} \vfuscoeff{(r,1) (r',1)}{(r'',1)} \KSGr{\DiscMod{r'',s+s'}^+} & \text{if \(s+s'<v\),} \\
\displaystyle -\sum_{r''} \vfuscoeff{(r,1) (r',1)}{(r'',1)} \KSGr{\DiscMod{u-r'',s+s'-v}^+} & \text{if \(s+s' \geqslant v\).}
\end{cases}
\end{equation}
where we let $\DiscMod{r,0}^+ \equiv \IrrMod{r,0}$ for convenience and use angled brackets $\tKSGr{\cdots}$ in anticipation of quotienting the Grothendieck fusion ring (where elements are indicated with square brackets $\tGr{\cdots}$) by its standard ideal.

\begin{thm} \label{thm:KSDerived}
The Grothendieck fusion rules reduce to the fusion rules of Koh and Sorba upon quotienting by the ideal of standard modules.
\end{thm}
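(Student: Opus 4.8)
The plan is to descend the Grothendieck fusion rules \eqref{GrFR:LxL}, \eqref{GrFR:LxD} and \eqref{GrFR:DxD} to the quotient of the Grothendieck fusion ring by the standard ideal, whose existence is furnished by \propref{prop:GrFRIdeal}. Passing to this quotient sets every class $\Gr{\sfmod{\ell}{\TypMod{\lambda ; \Delta_{r,s}}}}$ to zero, including the atypical indecomposable standard modules $\sfmod{\ell}{\TypMod{r,s}^{\pm}}$. Concretely, the reduction therefore amounts to deleting all $\TypMod{}$-type terms from the right-hand sides of these fusion rules and reinterpreting the surviving $\DiscMod{}^+$- and $\IrrMod{}$-type terms in the quotient. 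I write $\KSGr{\cdots}$ for the image of $\Gr{\cdots}$ under this quotient map.

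The first step is to extract the one non-trivial relation among the surviving atypical classes. Applying the Grothendieck functor to the short exact sequences \eqref{es:Drs} and \eqref{es:Lr}, whose central terms $\sfmod{}{\TypMod{r,s+1}^+}$ are standard and hence vanish in the quotient, gives, after reindexing, the identity
\[
\KSGr{\sfmod{}{\DiscMod{r,s}^+}} = -\KSGr{\DiscMod{r,s-1}^+}
\]
in the quotient ring for each $s = 1, \ldots, v-1$, where the boundary case $s=1$ comes from \eqref{es:Lr} under the convention $\DiscMod{r,0}^+ \equiv \IrrMod{r,0}$. This is the only identity we shall need, and it is precisely where the quotient does real work.

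Next I would check the index regimes of the Koh--Sorba rule \eqref{eqn:KohSorba} in turn. When at least one of the second indices is zero, the relevant Grothendieck rules are \eqref{GrFR:LxL} and \eqref{GrFR:LxD} (the two orderings being identified by commutativity); their right-hand sides are already free of $\TypMod{}$-terms and, since then $s+s' < v$, they coincide verbatim with the first branch of \eqref{eqn:KohSorba}. For $\DiscMod{}^+ \fuse \DiscMod{}^+$ with $s+s' < v$, deleting the single $\TypMod{}$-term from the corresponding branch of \eqref{GrFR:DxD} leaves exactly the first line of \eqref{eqn:KohSorba}.

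The main obstacle is the branch $s+s' \geqslant v$ of \eqref{GrFR:DxD}, where sign, spectral-flow index and second $\DiscMod{}^+$-label all appear to disagree with Koh and Sorba. After deleting its standard term, this branch reduces (taking $\ell = \ell' = 0$, as permitted by \eqref{eqn:GrFusionSF}) to $\sum_{r''} \vfuscoeff{(r,1) (r',1)}{(r'',1)} \KSGr{\sfmod{}{\DiscMod{u-r'',s+s'-v+1}^+}}$. Applying the key relation above with $(r,s) \mapsto (u-r'', s+s'-v+1)$ --- legitimate since $1 \leqslant s+s'-v+1 \leqslant v-1$ --- converts each summand into $-\KSGr{\DiscMod{u-r'',s+s'-v}^+}$, simultaneously supplying the minus sign, absorbing the spectral flow, and lowering the second index by one. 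This is exactly the second branch of \eqref{eqn:KohSorba}, the boundary case $s+s' = v$ being covered by the convention $\DiscMod{u-r'',0}^+ = \IrrMod{u-r'',0}$. Collecting the regimes completes the identification.
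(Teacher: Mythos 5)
Your proposal is correct and follows essentially the same route as the paper: check each index regime of \eqref{eqn:KohSorba} against \eqref{GrFR:LxL}, \eqref{GrFR:LxD} and \eqref{GrFR:DxD} after deleting standard classes, with the only real work in the $s+s' \geqslant v$ branch. The paper obtains the key sign-and-shift identity there by chaining the vanishing \eqref{eqn:ZeroCharacter} with the spectral flow relations \eqref{eqn:SFRelations} (passing through the conjugate $\DiscMod{}^-$ modules), whereas you read the equivalent relation $\KSGr{\sfmod{}{\DiscMod{r,s}^+}} = -\KSGr{\DiscMod{r,s-1}^+}$ directly off the exact sequences \eqref{es:Drs} and \eqref{es:Lr}; this is the same content in a slightly cleaner package.
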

\begin{proof}
This is a straight-forward check.  Consider the Grothendieck fusion rule \eqref{GrFR:DxD} with $\ell = \ell' = 0$.  Setting all (equivalence classes of) standard modules to zero, this becomes
\begin{equation}
\KSGr{\DiscMod{r,s}^+} \fuse \KSGr{\DiscMod{r',s'}^+} = \sum_{r''} \vfuscoeff{(r,1) (r',1)}{(r'',1)} \KSGr{\DiscMod{r'',s+s'}^+}
\end{equation}
for $s+s'<v$, in agreement with \eqref{eqn:KohSorba}.  When $s+s' \geqslant v$, we have to use in addition \eqref{eqn:ZeroCharacter} and then \eqref{eqn:SFRelations}:
\begin{align}
\KSGr{\DiscMod{r,s}^+} \fuse \KSGr{\DiscMod{r',s'}^+} &= +\sum_{r''} \vfuscoeff{(r,1) (r',1)}{(r'',1)} \KSGr{\sfmod{}{\DiscMod{u-r'',s+s'-v+1}^+}} \notag \\
&= -\sum_{r''} \vfuscoeff{(r,1) (r',1)}{(r'',1)} \KSGr{\sfmod{}{\DiscMod{r'',2v-s-s'-1}^-}} \notag \\
&= -\sum_{r''} \vfuscoeff{(r,1) (r',1)}{(r'',1)} \KSGr{\DiscMod{u-r'',s+s'-v}^+}.
\end{align}
Similar considerations for \eqref{GrFR:LxL} and \eqref{GrFR:LxD} now complete the proof.
\end{proof}

\noindent Finally, we cannot resist recording the following amusing summary of this result:\footnote{We blame Simon Wood for this quip and direct any complaints towards his general direction.}  In order to recover the well-known negative fusion coefficients for fractional level $\AKMA{sl}{2}$ \WZW{} models, we have to set our standards to zero.

\section{Examples} \label{sec:Examples}

We now illustrate the results of our Verlinde formula computations by specialising to certain admissible levels.  These levels will be chosen to have the added benefit that the extended algebras defined by the simple currents $\IrrMod{u-2}$ are (potentially) interesting.  Determining the algebraic structure of simple current extensions is straight-forward, though there are certain subtleties that arise.  We refer to \cite{RidSU206} (see also \cite{RidSL208}) for a detailed account of these, quoting only the results (adapted to the $\SLA{sl}{2;\RR}$ adjoint) that we require for our examples.

Let the zero-grade subspace of the simple current $\IrrMod{u-2}$ be spanned by states $\ket{\psi^{(n)}}$ of weight $u-2 - 2n$, for $n = 0, 1, \ldots, u-2$.  These states all have conformal dimension $\Delta = \tfrac{1}{4} \brac{u-2} v$.  Define constants $\eps_n$, $\mu_{m,n}$ and $\mu_{J,n}$ (for appropriate $J \in \SLA{sl}{2}$) by
\begin{equation}
\tbrac{\psi^{(m)}_r}^{\dag} = \eps_m \psi^{(\lambda - m)}_{-r}, \qquad 
\begin{aligned}
\func{J}{z} \func{\psi^{(n)}}{w} &= \mu_{J,n} \func{\psi^{(n)}}{w} \func{J}{z}, \\
\func{\psi^{(m)}}{z} \func{\psi^{(n)}}{w} &= \mu_{m,n} \func{\psi^{(n)}}{w} \func{\psi^{(m)}}{z},
\end{aligned}
\end{equation}
where $\func{\psi^{(n)}}{w}$ is the field corresponding to $\ket{\psi^{(n)}}$ and the $\psi^{(n)}_r$ are its modes.  We obtain, as in \cite{RidSU206}, the following results:
\begin{itemize}[leftmargin=*]
\item The $\mu_{J,n}$ are real and independent of $n$.  Moreover, $\mu_{h,n} = 1$, whereas $\mu_{e,n} \mu_{f,n} = 1$.
\item The $\eps_m$ are all related by $\eps_m = \mu_{f,n}^m \eps_0$.
\item For any $u$, the choice $\mu_{e,n} = \mu_{f,n} = \eps_m = 1$ is consistent.\footnote{In fact, there is a second consistent choice when $u$ is even:  $\mu_{e,n} = \mu_{f,n} = -1$, $\eps_m = \brac{-1}^m$ and $\mu_{m,n} = \brac{-1}^{2 \Delta + m+n}$.  The fact that there exist different consistent choices (for $u$ even) for the extension field localities reflects the choice that we have in extending the $\SLA{sl}{2;\RR}$ adjoint to the extended algebra.}
\item Assuming that $2 \Delta \in \ZZ$ ($uv$ is even), this choice leads to $\mu_{m,n} = \brac{-1}^{2 \Delta + u} = \brac{-1}^{u+v+uv/2}$.
\end{itemize}
Note that the constants $\mu_{J,n}$ and $\mu_{m,n}$ are mutual locality indices so, for example, the extension fields $\func{\psi^{(n)}}{w}$ are all mutually bosonic with respect to $\func{h}{z}$.

\subsection*{Example:  $k=-\tfrac{1}{2}$}

We start with the familiar case of $t=\frac{3}{2}$, giving $u=3$, $v=2$ and $c=-1$.  From the table in \figref{fig:Tables}, we see that the standard modules all have the form $\sfmod{\ell}{\TypMod{\lambda;-1/8}}$, with $\lambda \in \RR / 2 \ZZ$, and the irreducible atypicals have the form $\sfmod{\ell}{\IrrMod{\mu}}$, with $\mu \in \set{0,1}$.  The Grothendieck fusion rules imply the genuine fusion rules
\begin{equation}
\IrrMod{1} \fuse \IrrMod{1} = \IrrMod{0}, \qquad 
\IrrMod{1} \fuse \TypMod{\lambda;-1/8} = \TypMod{\lambda + 1;-1/8} \qquad \text{(\(\lambda \neq \pm \tfrac{1}{2} \bmod{2}\))}
\end{equation}
and spectral flow extends this by \eqref{eqn:GrFusionSF}.\footnote{In this section, we will present all fusion rules in the untwisted sector for clarity.  We will also omit explicitly noting the fusion rules involving the vacuum module which acts as the fusion identity.}  The standard module Grothendieck fusion rules similarly imply that
\begin{equation}
\TypMod{\lambda;-1/8} \fuse \TypMod{\mu;-1/8} = \sfmod{}{\TypMod{\lambda + \mu + 1/2;-1/8}} \oplus \sfmod{-1}{\TypMod{\lambda + \mu - 1/2;-1/8}},
\end{equation}
but only when $\lambda + \mu \neq 0,1 \bmod{2}$ --- otherwise, comparing the conformal dimensions of states in the fusion product does not guarantee complete reducibility.  Indeed, the fusion of standard modules with $\lambda + \mu \in \ZZ$ was shown in \cite{RidFus10} to yield indecomposable modules $\StagMod{\lambda + \mu}$ on which the Virasoro mode $L_0$ acts non-diagonalisably (staggered modules).

Since $u$ is odd and the simple current $\IrrMod{1}$ has zero grade fields of dimension $\Delta = \frac{1}{2}$, we can conclude that the extension fields are mutually bosonic with respect to the affine fields and one another:  $\brac{-1}^{2 \Delta + u} = 1$.  Computing the extended algebra is very simple because the non-regular \opes{} are summarised by
\begin{equation}
\func{\psi^{(0)}}{z} \func{\psi^{(1)}}{w} \sim \frac{-1}{z-w},
\end{equation}
which we identify as describing the $\beta \gamma$ ghost algebra.  We remark that it was shown in \cite{RidSL208} that this simple current extension fails to be associative if we had chosen the $\SLA{su}{2}$ adjoint.

\subsection*{Example:  $k=-\tfrac{4}{3}$}

For this level, we have $u=2$, $v=3$ and $c=-6$, so the standard modules all have the form $\sfmod{\ell}{\TypMod{\lambda;-1/3}}$, with $\lambda \in \RR / 2 \ZZ$.  This time (see \figref{fig:Tables}), the atypical irreducibles fall into two classes $\sfmod{\ell}{\IrrMod{0}}$ and $\sfmod{\ell}{\DiscMod{-2/3}^+}$.  The above results for the Grothendieck fusion now imply the following fusion rule for the standard modules:
\begin{equation}
\TypMod{\lambda;-1/3} \fuse \TypMod{\mu;-1/3} = \sfmod{}{\TypMod{\lambda + \mu + 4/3;-1/3}} \oplus \TypMod{\lambda + \mu;-1/3} \oplus \sfmod{-1}{\TypMod{\lambda + \mu - 4/3;-1/3}}.
\end{equation}
This time, we can only be sure that the fusion product is completely reducible for $\lambda + \mu \neq 0,1, \pm \tfrac{2}{3} \bmod{2}$.  It was shown in \cite{GabFus01} that taking $\lambda = \mu = 0$ gives a staggered module.  We expect that $\lambda + \mu = 0, \pm \tfrac{2}{3}$ always gives staggered modules whereas $\lambda + \mu = 1$ does not.

The Grothendieck fusion rules involving $\DiscMod{-2/3}^+ = \DiscMod{1,1}^+$ now follow from \propref{prop:FusDrs}:
\begin{subequations}
\begin{align}
\TypMod{\lambda;-1/3} \fuse \DiscMod{-2/3}^+ &= \TypMod{\lambda - 2/3;-1/3} \oplus \sfmod{}{\TypMod{\lambda + 2/3;-1/3}} \qquad \text{(\(\lambda \neq 0 \bmod{2}\)),} \\
\DiscMod{-2/3}^+ \fuse \DiscMod{-2/3}^+ &= \sfmod{}{\TypMod{\lambda_{1,3};-1/3}} \oplus \DiscMod{-4/3}^+ = \sfmod{}{\TypMod{0;-1/3}} \oplus \sfmod{}{\IrrMod{0}}.
\end{align}
\end{subequations}
The first fusion rule was shown to be staggered in \cite{GabFus01} when $\lambda = 0 \bmod{2}$.

\subsection*{Example:  $k=+\tfrac{1}{2}$}

For this admissible level, we have $c=\tfrac{3}{5}$, $u=5$ and $v=2$, so there are four atypical irreducibles and two continuous families of standard modules (up to spectral flow):
\[
\set{\IrrMod{0}, \IrrMod{1}, \IrrMod{2}, \IrrMod{3}; \TypMod{\lambda; 1/8}, \TypMod{\lambda; -3/40}}.
\]
We refer to \figref{fig:Tables} for the admissible weights $\lambda_{r,s}$ and conformal dimensions $\Delta_{r,s}$ that characterise $k=\tfrac{1}{2}$.  In particular, we note that the standard modules are typical (irreducible) for all $\lambda \neq \pm \tfrac{1}{2} \bmod{2}$.

We summarise the Grothendieck fusion rules of the admissible irreducibles in \tabref{tab:FRk=1/2}.  Note that the fusion rules of the $\IrrMod{\lambda}$ are those of $\AKMA{sl}{2}_3$, in agreement with \thmref{thm:FusionSubring}.  These rules all lift to genuine fusion rules (Grothendieck sums are replaced by direct sums), except when we are fusing typicals with typicals and the weight labels sum to $\lambda + \mu \in \ZZ$.  For example,
\begin{equation}
\TypMod{\lambda; 1/8} \fuse \TypMod{\mu; -3/40} = \sfmod{}{\TypMod{\lambda + \mu - 1/2; -3/40}} \oplus \sfmod{-1}{\TypMod{\lambda + \mu + 1/2; -3/40}} \qquad \text{(\(\lambda + \mu \notin \ZZ\)).}
\end{equation}

{
\begin{table}
\begin{center}
\setlength{\extrarowheight}{6pt}
\begin{tabular}{C|CCCCC}
\fuse & (1) & (2) & (3) & [\tfrac{1}{8}]_{\mu} & [-\tfrac{3}{40}]_{\mu} \\[1mm]
\hline
(1) & (0)+(2) & (1)+(3) & (2) & [-\tfrac{3}{40}]_{\mu+1} & [\tfrac{1}{8}]_{\mu+1} + [-\tfrac{3}{40}]_{\mu+1} \\[1mm]
(2) & \star & (0)+(2) & (1) & [-\tfrac{3}{40}]_{\mu} & [\tfrac{1}{8}]_{\mu} + [-\tfrac{3}{40}]_{\mu} \\[1mm]
(3) & \star & \star & (0) & [\tfrac{1}{8}]_{\mu+1} & [-\tfrac{3}{40}]_{\mu+1} \\[1mm]
[\tfrac{1}{8}]_{\lambda} & \star & \star & \star & [\tfrac{1}{8}]_{\lambda+\mu-1/2}^{1} + [\tfrac{1}{8}]_{\lambda+\mu+1/2}^{-1} & [-\tfrac{3}{40}]_{\lambda+\mu-1/2}^{1} + [-\tfrac{3}{40}]_{\lambda+\mu+1/2}^{-1} \\[1mm]
[-\tfrac{3}{40}]_{\lambda} & \star & \star & \star & \star & \begin{matrix} [\tfrac{1}{8}]_{\lambda+\mu-1/2}^{1} + [\tfrac{1}{8}]_{\lambda+\mu+1/2}^{-1} \\ + [-\tfrac{3}{40}]_{\lambda+\mu-1/2}^{1} + [-\tfrac{3}{40}]_{\lambda+\mu+1/2}^{-1} \end{matrix}
\end{tabular}
\vspace{3mm}
\caption{The (Grothendieck) fusion rules of the admissible modules, up to spectral flow, when $k=+\tfrac{1}{2}$.  The notation $(\lambda)$ stands for $\IrrMod{\lambda}$ and $[\Delta]_{\lambda}^{\ell}$ for $\sfmod{\ell}{\TypMod{\lambda; \Delta}}$.  The stars are entries that we omit for clarity (fusion is commutative).} \label{tab:FRk=1/2}
\end{center}
\end{table}
}

The cautious reader will have noticed that the conformal dimensions of the states comprising the fusion product of $\TypMod{\lambda; -3/40}$ and $\TypMod{\mu; -3/40}$ do not forbid the possibility of reducible but indecomposable modules when $\lambda + \mu = \pm \tfrac{1}{5} \mod{2}$.  In this case, \tabref{tab:FRk=1/2} gives four Grothendieck summands and it seems that either the first and fourth or the second and third summands might combine into a single indecomposable.  However, this is ruled out by associativity:
\begin{align}
\TypMod{\lambda; -3/40} \fuse \TypMod{\mu; -3/40} &= \IrrMod{1} \fuse \TypMod{\lambda - 1; 1/8} \fuse \TypMod{\mu; -3/40} = \IrrMod{1} \fuse \Bigl( \sfmod{}{\TypMod{\lambda + \mu + 1/2; -3/40}} \oplus \sfmod{-1}{\TypMod{\lambda + \mu - 1/2; -3/40}} \Bigr) \notag \\
&= \sfmod{}{\TypMod{\lambda + \mu - 1/2; 1/8}} \oplus \sfmod{-1}{\TypMod{\lambda + \mu + 1/2; 1/8}} \notag \\
&\phantom{= \sfmod{}{\TypMod{\lambda + \mu - 1/2; 1/8}}} \mspace{4.4mu} \oplus \sfmod{}{\TypMod{\lambda + \mu - 1/2; -3/40}} \oplus \sfmod{-1}{\TypMod{\lambda + \mu + 1/2; -3/40}},
\end{align}
as $\lambda + \mu = \pm \tfrac{1}{5} \mod{2}$ clearly implies that $\lambda + \mu \notin \ZZ$.  By contrast, we believe that all of the fusion rules involving two standard modules yield staggered modules when $\lambda + \mu \in \ZZ$.

The simple current $\IrrMod{3}$ has $\Delta = \tfrac{3}{2}$.  The extended algebra is therefore generated by the three $\AKMA{sl}{2}$ currents and four \emph{bosonic} dimension $\tfrac{3}{2}$ fields.  We will not list the \opes{} of the latter because we have not managed to identify the extended algebra conclusively.  However, we believe that it coincides with the quantum hamiltonian reduction of $\affine{\alg{g}}_2$ at level $-\tfrac{3}{2}$, where the $\SLA{sl}{2}$ embedding lies along the highest root of $\alg{g}_2$.  This is supported by the following facts:  The central charge of this reduction is indeed $\tfrac{3}{5}$ (see \cite[Eq.~(4.4)]{KacQua03} for example); this reduction is strongly generated by bosonic fields, three of dimension $1$, four of dimension $\tfrac{3}{2}$ and one of dimension $2$ (the energy-momentum field); and the dimension $1$ fields of this reduction generate a copy of $\AKMA{sl}{2}$ at level $k=\tfrac{1}{2}$ (see \cite[Eq.~(4.6) and Prop.~4.1]{KacQua03}).

\subsection*{Example:  $k=-\tfrac{2}{3}$}

This time, $u=4$, $v=3$ and $c=-\tfrac{3}{2}$.  Up to spectral flow, there are six families of atypical irreducibles and three families of standard modules represented by
\[
\set{\IrrMod{0}, \IrrMod{1}, \IrrMod{2}; \DiscMod{-4/3}^+, \DiscMod{-1/3}^+, \DiscMod{2/3}^+; \TypMod{\lambda; -1/6}, \TypMod{\lambda; -5/48}, \TypMod{\lambda; 1/3}}.
\]
Once again, the admissible highest weights and conformal dimensions may be found in \figref{fig:Tables}.  We summarise the Grothendieck fusion rules in \tabref{tab:FRk=-2/3}.

As with $k=\tfrac{1}{2}$, the simple current $\IrrMod{2}$ has $\Delta = \tfrac{3}{2}$, but this time the fields are fermionic.  We therefore expect that the simple current extension will be the $N=3$ superconformal algebra of central charge $c=-\tfrac{3}{2}$.  More precisely, we expect to obtain the reduced $N=3$ superconformal algebra that results from decoupling the fermionic dimension $\tfrac{1}{2}$ field \cite{GodFac88} (which obviously does not appear in our extension).  This algebra is generated by three dimension $1$ fields $\func{J^a}{z}$ and three dimension $\tfrac{3}{2}$ fields $\func{G^a}{z}$ satisfying
\begin{equation} \label{OPE:N=3}
\begin{gathered}
\func{J^a}{z} \func{J^b}{w} \sim \frac{\kappa^{ab} k}{\brac{z-w}^2} + \frac{{\mathsf{f}^{ab}}_c \func{J^c}{w}}{z-w}, \qquad 
\func{J^a}{z} \func{G^b}{w} \sim \frac{{\mathsf{f}^{ab}}_c \func{G^c}{w}}{z-w}, \\
\begin{aligned}
\func{G^a}{z} \func{G^b}{w} &\sim \frac{2 \kappa^{ab} \brac{k-1}}{\brac{z-w}^3} + \frac{2 {\mathsf{f}^{ab}}_c \brac{k-1} \func{J^c}{w} / k}{\brac{z-w}^2} \\
&\phantom{\sim} + \frac{4 \kappa^{ab} \func{T}{w} + {\mathsf{f}^{ab}}_c \partial \func{J^c}{w} - 2 \normord{\func{J^a}{w} \func{J^b}{w}} / k}{z-w},
\end{aligned}
\end{gathered}
\end{equation}
where $\kappa^{ab}$ and ${\mathsf{f}^{ab}}_c$ represent the trace form and structure constants of $\SLA{sl}{2}$, respectively.  The ``level'' $k$ parametrises the central charge of the reduced $N=3$ superconformal algebra by $c = \tfrac{1}{2} \brac{3k-1}$.  Computing the extended algebra \opes{} precisely reproduces \eqref{OPE:N=3} if we identify the $J^a$ with the $\AKMA{sl}{2}$-currents, the $G^a$ with the simple current fields and set $k=-\tfrac{2}{3}$.  We remark that the modules labelled by $r=1$ or $3$ combine, under the extended algebra action, to give Neveu-Schwarz sector modules, whereas those labelled by $r=2$ yield Ramond sector modules.

\subsection*{Example:  $k=-\tfrac{5}{4}$}

For our last example, $u=3$, $v=4$ and $c=-5$.  The admissible highest weights and conformal dimensions are likewise given in \figref{fig:Tables}.  We see again that there are six families of atypical irreducibles and three families of standard modules, up to spectral flow:
\[
\set{\IrrMod{0}, \IrrMod{1}; \DiscMod{-3/4}^+, \DiscMod{1/4}^+, \DiscMod{-3/2}^+, \DiscMod{-1/2}^+; \TypMod{\lambda; -5/16}, \TypMod{\lambda; -1/4}, \TypMod{\lambda; 3/16}}.
\]
The Grothendieck fusion rules for this model are collected in \tabref{tab:FRk=-5/4}.  We remark that these rules show that fusion multiplicities for admissible level theories can be greater than $2$ (see that of $\TypMod{\lambda + \mu; -1/4}$ in $\TypMod{\lambda; -1/4} \fuse \TypMod{\mu; -1/4}$).

For this level, the simple current $\IrrMod{1}$ has $\Delta = 1$, so one expects that the extended algebra will be of affine type.  There are two extension fields and they are fermionic, hence the extended algebra is bound to be the affine Kac-Moody superalgebra $\AKMSA{osp}{1}{2}_{-5/4}$.  The $\AKMA{sl}{2}_{-5/4}$ fields generate the bosonic subalgebra and the extension fields should provide the remaining fermionic generators.  Noting that the central charges of $\AKMSA{osp}{1}{2}_{-5/4}$ and $\AKMA{sl}{2}_{-5/4}$ are indeed equal ($c=-5$), this identification amounts to a conformal embedding.

To verify this, we identify $\SLA{sl}{2}$ with the bosonic subalgebra of $\SLSA{osp}{1}{2}$ and let the fermionic basis elements be given, in the defining representation, by
\begin{equation}
\psi^+ = 
\begin{pmatrix}
0 & 0 & 1 \\
0 & 0 & -\ii \\
-1 & \ii & 0
\end{pmatrix}
, \qquad \psi^- = 
\begin{pmatrix}
0 & 0 & 1 \\
0 & 0 & \ii \\
1 & \ii & 0
\end{pmatrix}
.
\end{equation}
The non-vanishing (anti)commutation relations are then those of $\SLA{sl}{2}$, augmented by
\begin{equation} \label{eqn:osp12}
\begin{aligned}
\comm{h}{\psi^{\pm}} &= \pm \psi^{\pm}, \\
\acomm{\psi^+}{\psi^+} &= 4e,
\end{aligned}
\qquad
\begin{aligned}
\comm{e}{\psi^-} &= -\psi^+, \\
\acomm{\psi^+}{\psi^-} &= 2h,
\end{aligned}
\qquad
\begin{aligned}
\comm{f}{\psi^+} &= \psi^-, \\
\acomm{\psi^-}{\psi^-} &= 4f
\end{aligned}
\end{equation}
and the supertrace form is that of $\SLA{sl}{2}$, augmented by
\begin{equation} \label{eqn:osp12'}
\killing{\psi^+}{\psi^-} = -\killing{\psi^-}{\psi^+} = 4.
\end{equation}
Denoting the two dimension $1$ fields of $\IrrMod{1}$ by $\func{\psi^+}{z}$ and $\func{\psi^-}{z}$ and normalising them appropriately, we compute their operator product expansions using the methods of \cite{RidSU206}:
\begin{equation}
\begin{aligned}
\func{h}{z} \func{\psi^{\pm}}{w} &\sim \pm \frac{\func{\psi^{\pm}}{w}}{z-w}, \\
\func{\psi^+}{z} \func{\psi^+}{w} &\sim \frac{4 \func{e}{w}}{z-w},
\end{aligned}
\quad
\begin{aligned}
\func{e}{z} \func{\psi^-}{w} &\sim -\frac{\func{\psi^+}{w}}{z-w}, \\
\func{\psi^+}{z} \func{\psi^-}{w} &\sim -\frac{5}{\brac{z-w}^2} + \frac{2 \func{h}{w}}{z-w},
\end{aligned}
\quad
\begin{aligned}
\func{f}{z} \func{\psi^+}{w} &\sim \frac{\func{\psi^-}{w}}{z-w}, \\
\func{\psi^-}{z} \func{\psi^-}{w} &\sim \frac{4 \func{f}{w}}{z-w}.
\end{aligned}
\end{equation}
Comparing with \eqref{eqn:osp12} and \eqref{eqn:osp12'} demonstrates that this simple current extension of $\AKMA{sl}{2}_{-5/4}$ is indeed $\AKMSA{osp}{1}{2}_{-5/4}$.

We remark that it is very easy to check explicitly that $k=-\tfrac{5}{4}$ is an admissible level for $\AKMSA{osp}{1}{2}$ and that the relation
\begin{equation}
\brac{\psi^+_{-2} - 4 \psi^-_{-1} e_{-1} + 2 h_{-1} \psi^-_{-1}} \ket{0} = 0
\end{equation}
holds in the (irreducible) $\AKMSA{osp}{1}{2}$ vacuum module.  This constrains the spectrum so that \hwss{} must have weight ($h_0$-eigenvalue) $0$ or $-\tfrac{1}{2}$ and the remaining relaxed \hwss{} must have conformal dimension $-\tfrac{1}{4}$.  The corresponding relaxed highest weight $\AKMSA{osp}{1}{2}_{-5/4}$-modules are clearly formed by combining the $\AKMA{sl}{2}_{-5/4}$-modules $\IrrMod{0}$ with $\IrrMod{1}$, $\DiscMod{-1/2}^+$ with $\DiscMod{-3/2}^+$, and $\TypMod{\lambda; -1/4}$ with $\TypMod{\lambda + 1; -1/4}$.  Similarly combining modules whose label $s$ is odd leads to \emph{twisted} $\AKMSA{osp}{1}{2}_{-5/4}$-modules (the fermions act with half-integer moding).  It would be interesting to investigate the role played, if any, by such twisted superalgebra modules in physical applications.

{
\begin{sidewaystable}
\centering
\setlength{\extrarowheight}{7pt}
\vspace{0.6\textwidth}
\begin{tabular}{C|CCCCCCCC}
\fuse & (1) & (2) & (-\tfrac{4}{3}) & (-\tfrac{1}{3}) & (\tfrac{2}{3}) & [-\tfrac{1}{6}]_{\mu} & [-\tfrac{5}{48}]_{\mu} & [\tfrac{1}{3}]_{\mu} \\[2mm]
\hline
(1) & (0) + (2) & (1) & (-\tfrac{1}{3}) & (-\tfrac{4}{3}) + (\tfrac{2}{3}) & (-\tfrac{1}{3}) & [-\tfrac{5}{48}]_{\mu+1} & [-\tfrac{1}{6}]_{\mu+1} + [\tfrac{1}{3}]_{\mu+1} & [-\tfrac{5}{48}]_{\mu+1} \\[2mm]
(2) & \star & (0) & (\tfrac{2}{3}) & (-\tfrac{1}{3}) & (-\tfrac{4}{3}) & [\tfrac{1}{3}]_{\mu} & [-\tfrac{5}{48}]_{\mu} & [-\tfrac{1}{6}]_{\mu} \\[2mm]
(-\tfrac{4}{3}) & \star & \star & [-\tfrac{1}{6}]_{0}^{1} + (2)^1 & [-\tfrac{5}{48}]_{1}^{1} + (1)^1 & [\tfrac{1}{3}]_{0}^{1} + (0)^1 & [\tfrac{1}{3}]_{\mu+2/3} + [-\tfrac{1}{6}]_{\mu-2/3}^{1} & [-\tfrac{5}{48}]_{\mu+2/3} + [-\tfrac{5}{48}]_{\mu-2/3}^{1} & [-\tfrac{1}{6}]_{\mu+2/3} + [\tfrac{1}{3}]_{\mu-2/3}^{1} \\[2mm]
(-\tfrac{1}{3}) & \star & \star & \star & \begin{matrix} [-\tfrac{1}{6}]_{0}^{1} + (2)^1 \\ + [\tfrac{1}{3}]_{0}^{1} + (0)^1 \end{matrix} & [-\tfrac{5}{48}]_{1}^{1} + (1)^1 & [-\tfrac{5}{48}]_{\mu-1/3} + [-\tfrac{5}{48}]_{\mu+1/3}^{1} & \begin{matrix} [-\tfrac{1}{6}]_{\mu-1/3} + [\tfrac{1}{3}]_{\mu+1/3}^{1} \\ + [-\tfrac{1}{6}]_{\mu-1/3} + [\tfrac{1}{3}]_{\mu+1/3}^{1} \end{matrix} & [-\tfrac{5}{48}]_{\mu-1/3} + [-\tfrac{5}{48}]_{\mu+1/3}^{1} \\[2mm]
(\tfrac{2}{3}) & \star & \star & \star & \star & [-\tfrac{1}{6}]_{0}^{1} + (2)^1 & [\tfrac{1}{3}]_{\mu+2/3} + [-\tfrac{1}{6}]_{\mu-2/3}^{1} & [-\tfrac{5}{48}]_{\mu+2/3} + [-\tfrac{5}{48}]_{\mu-2/3}^{1} & [-\tfrac{1}{6}]_{\mu+2/3} + [\tfrac{1}{3}]_{\mu-2/3}^{1} \\[2mm]
[-\tfrac{1}{6}]_{\lambda} & \star & \star & \star & \star & \star & \begin{matrix} [-\tfrac{1}{6}]_{\lambda+\mu+2/3}^{1} + [-\tfrac{1}{6}]_{\lambda+\mu-2/3}^{-1} \\ + [\tfrac{1}{3}]_{\lambda+\mu} \end{matrix} & \begin{matrix} [-\tfrac{5}{48}]_{\lambda+\mu+2/3}^{1} + [-\tfrac{5}{48}]_{\lambda+\mu-2/3}^{-1} \\ + [-\tfrac{5}{48}]_{\lambda+\mu} \end{matrix} & \begin{matrix} [\tfrac{1}{3}]_{\lambda+\mu+2/3}^{1} + [\tfrac{1}{3}]_{\lambda+\mu-2/3}^{-1} \\ + [-\tfrac{1}{6}]_{\lambda+\mu} \end{matrix} \\[2mm]
[-\tfrac{5}{48}]_{\lambda} & \star & \star & \star & \star & \star & \star & \begin{matrix} [-\tfrac{1}{6}]_{\lambda+\mu+2/3}^{1} + [-\tfrac{1}{6}]_{\lambda+\mu-2/3}^{-1} \\ + [-\tfrac{1}{6}]_{\lambda+\mu} + [\tfrac{1}{3}]_{\lambda+\mu} \\ + [\tfrac{1}{3}]_{\lambda+\mu+2/3}^{1} + [\tfrac{1}{3}]_{\lambda+\mu-2/3}^{-1} \end{matrix} & \begin{matrix} [-\tfrac{5}{48}]_{\lambda+\mu+2/3}^{1} + [-\tfrac{5}{48}]_{\lambda+\mu-2/3}^{-1} \\ + [-\tfrac{5}{48}]_{\lambda+\mu} \end{matrix} \\[2mm]
[\tfrac{1}{3}]_{\lambda} & \star & \star & \star & \star & \star & \star & \star & \begin{matrix} [-\tfrac{1}{6}]_{\lambda+\mu+2/3}^{1} + [-\tfrac{1}{6}]_{\lambda+\mu-2/3}^{-1} \\ + [\tfrac{1}{3}]_{\lambda+\mu} \end{matrix}
\end{tabular}
\vspace{5mm}
\caption{The (Grothendieck) fusion rules of the admissible modules, up to spectral flow, when $k=-\tfrac{2}{3}$.  The notation $(\lambda)$ stands for $\IrrMod{\lambda}$ or $\DiscMod{\lambda}^+$ for $\lambda \in \NN$ or not, respectively, and $[\Delta]_{\lambda}$ for $\TypMod{\lambda; \Delta}$.  A superscript $\ell$ indicates that $\sfaut^{\ell}$ has been applied.} \label{tab:FRk=-2/3}
\end{sidewaystable}
}

{
\begin{sidewaystable}
\centering
\setlength{\extrarowheight}{7pt}
\vspace{0.6\textwidth}
\begin{tabular}{C|CCCCCCCC}
\fuse & (1) & (-\tfrac{3}{4}) & (\tfrac{1}{4}) & (-\tfrac{3}{2}) & (-\tfrac{1}{2}) & [-\tfrac{5}{16}]_{\mu} & [-\tfrac{1}{4}]_{\mu} & [\tfrac{3}{16}]_{\mu} \\[2mm]
\hline
(1) & (0) & (\tfrac{1}{4}) & (-\tfrac{3}{4}) & (-\tfrac{1}{2}) & (-\tfrac{3}{2}) & [\tfrac{3}{16}]_{\mu+1} & [-\tfrac{1}{4}]_{\mu+1} & [-\tfrac{5}{16}]_{\mu+1} \\[2mm]
(-\tfrac{3}{4}) & \star & \begin{matrix} [-\tfrac{5}{16}]_{-1/4}^{1} \\ + (-\tfrac{3}{2}) \end{matrix} & \begin{matrix} [\tfrac{3}{16}]_{3/4}^{1} \\ + (-\tfrac{1}{2}) \end{matrix} & \begin{matrix} [-\tfrac{1}{4}]_{1}^{1} \\ + (1)^{1} \end{matrix} & \begin{matrix} [-\tfrac{1}{4}]_{0}^{1} \\ + (0)^{1} \end{matrix} & [-\tfrac{1}{4}]_{\mu-3/4} + [-\tfrac{5}{16}]_{\mu+1/2}^{1} & \begin{matrix} [-\tfrac{5}{16}]_{\mu-3/4} + [\tfrac{3}{16}]_{\mu-3/4} \\ + [-\tfrac{1}{4}]_{\mu+1/2}^{1} \end{matrix} & [-\tfrac{1}{4}]_{\mu-3/4} + [\tfrac{3}{16}]_{\mu+1/2}^{1} \\[2mm]
(\tfrac{1}{4}) & \star & \star & \begin{matrix} [-\tfrac{5}{16}]_{-1/4}^{1} \\ + (-\tfrac{3}{2}) \end{matrix} & \begin{matrix} [-\tfrac{1}{4}]_{0}^{1} \\ + (0)^{1} \end{matrix} & \begin{matrix} [-\tfrac{1}{4}]_{1}^{1} \\ + (1)^{1} \end{matrix} & [-\tfrac{1}{4}]_{\mu+1/4} + [\tfrac{3}{16}]_{\mu-1/2}^{1} & \begin{matrix} [-\tfrac{5}{16}]_{\mu+1/4} + [\tfrac{3}{16}]_{\mu+1/4} \\ + [-\tfrac{1}{4}]_{\mu-1/2}^{1} \end{matrix} & [-\tfrac{1}{4}]_{\mu+1/4} + [-\tfrac{5}{16}]_{\mu-1/2}^{1} \\[2mm]
(-\tfrac{3}{2}) & \star & \star & \star & \begin{matrix} [-\tfrac{5}{16}]_{1/4}^{1} \\ + (\tfrac{1}{4})^{1} \end{matrix} & \begin{matrix} [\tfrac{3}{16}]_{-3/4}^{1} \\ + (-\tfrac{3}{4})^{1} \end{matrix} & [\tfrac{3}{16}]_{\mu+1/2} + [-\tfrac{1}{4}]_{\mu-1/4}^{1} & \begin{matrix} [-\tfrac{1}{4}]_{\mu+1/2} \\ + [-\tfrac{5}{16}]_{\mu-1/4}^{1} + [\tfrac{3}{16}]_{\mu-1/4}^{1} \end{matrix} & [-\tfrac{5}{16}]_{\mu+1/2} + [-\tfrac{1}{4}]_{\mu-1/4}^{1} \\[2mm]
(-\tfrac{1}{2}) & \star & \star & \star & \star & \begin{matrix} [-\tfrac{5}{16}]_{1/4}^{1} \\ + (\tfrac{1}{4})^{1} \end{matrix}  & [-\tfrac{5}{16}]_{\mu-1/2} + [-\tfrac{1}{4}]_{\mu+3/4}^{1} & \begin{matrix} [-\tfrac{1}{4}]_{\mu-1/2} \\ + [-\tfrac{5}{16}]_{\mu+3/4}^{1} + [\tfrac{3}{16}]_{\mu+3/4}^{1} \end{matrix} & [\tfrac{3}{16}]_{\mu-1/2} + [-\tfrac{1}{4}]_{\mu+3/4}^{1}\\[2mm]
[-\tfrac{5}{16}]_{\lambda} & \star & \star & \star & \star & \star & \begin{matrix} [-\tfrac{5}{16}]_{\lambda+\mu-3/4}^{1} + [-\tfrac{5}{16}]_{\lambda+\mu+3/4}^{-1} \\ + [-\tfrac{1}{4}]_{\lambda+\mu} \end{matrix} & \begin{matrix} [-\tfrac{1}{4}]_{\lambda+\mu-3/4}^{1} + [-\tfrac{1}{4}]_{\lambda+\mu+3/4}^{-1} \\ + [-\tfrac{5}{16}]_{\lambda+\mu} + [\tfrac{3}{16}]_{\lambda+\mu} \end{matrix} & \begin{matrix} [\tfrac{3}{16}]_{\lambda+\mu-3/4}^{1} + [\tfrac{3}{16}]_{\lambda+\mu+3/4}^{-1} \\ + [-\tfrac{1}{4}]_{\lambda+\mu} \end{matrix}\\[2mm]
[-\tfrac{1}{4}]_{\lambda} & \star & \star & \star & \star & \star & \star & \begin{matrix} [-\tfrac{5}{16}]_{\lambda+\mu-3/4}^{1} + [-\tfrac{5}{16}]_{\lambda+\mu+3/4}^{-1} \\ + [\tfrac{3}{16}]_{\lambda+\mu-3/4}^{1} + [\tfrac{3}{16}]_{\lambda+\mu+3/4}^{-1} \\ + 2 \: [-\tfrac{1}{4}]_{\lambda+\mu} \end{matrix} & \begin{matrix} [-\tfrac{1}{4}]_{\lambda+\mu-3/4}^{1} + [-\tfrac{1}{4}]_{\lambda+\mu+3/4}^{-1} \\ + [-\tfrac{5}{16}]_{\lambda+\mu} + [\tfrac{3}{16}]_{\lambda+\mu} \end{matrix} \\[2mm]
[\tfrac{3}{16}]_{\lambda} & \star & \star & \star & \star & \star & \star & \star & \begin{matrix} [-\tfrac{5}{16}]_{\lambda+\mu-3/4}^{1} + [-\tfrac{5}{16}]_{\lambda+\mu+3/4}^{-1} \\ + [-\tfrac{1}{4}]_{\lambda+\mu} \end{matrix}
\end{tabular}
\vspace{5mm}
\caption{The (Grothendieck) fusion rules of the admissible modules, up to spectral flow, when $k=-\tfrac{5}{4}$.  The notation follows that of \tabref{tab:FRk=-2/3}.} \label{tab:FRk=-5/4}
\end{sidewaystable}
}

\section*{Acknowledgements}

We would like to thank Tomoyuki Arakawa, Pierre Mathieu, Yvan Saint-Aubin, Akihiro Tsuchiya and Simon Wood for valuable discussions relating to the results reported here.  The research of DR is supported by an Australian Research Council Discovery Project DP1093910.

\newpage

\raggedright


\end{document}